\def\Tr{\operatorname{Tr}}
\def\({\left(}
\def\){\right)}
\def\[{\left[}
\def\]{\right]}
\let\emptyset\varnothing
\DeclarePairedDelimiter\floor{\lfloor}{\rfloor}
\newtheorem{theorem}{Theorem}
\newtheorem{note}{Note}
\newtheorem{assumption}{Assumption}
\newtheorem{definition}{Definition}
\newtheorem{example}{Example}
\newtheorem{observation}{Observation}
\newtheorem{proposition}{Proposition}
\newtheorem{remark}{Remark}
\def\>{\rangle}
\def\<{\langle}
\algnewcommand{\Inputs}[1]{
  \State \textbf{Inputs:}
  \Statex \hspace*{\algorithmicindent}\parbox[t]{.8\linewidth}{\raggedright #1}
}
\algnewcommand{\Initialize}[1]{
  \State \textbf{Initialize:}
  \Statex \hspace*{\algorithmicindent}\parbox[t]{.8\linewidth}{\raggedright #1}
}
\begin{document}

\widetext

\title{Practical limitations on robustness and scalability of quantum Internet}

\author{Abhishek Sadhu}
\email{abhisheks@rri.res.in}
\affiliation{Raman Research Institute, Bengaluru, Karnataka 560080, India}
\affiliation{Center for Security, Theory and Algorithmic Research, International Institute of Information Technology, Hyderabad, Gachibowli, Telangana 500032, India}

\author{Meghana Ayyala Somayajula}
\affiliation{Department of Computer Science Engineering, G Narayanamma Institute of Technology and Science, Hyderabad, Telangana 500104, India}

\author{Karol Horodecki}\email{karol.horodecki@ug.edu.pl}
\affiliation{Institute of Informatics, National Quantum Information Centre, Faculty of Mathematics, Physics and Informatics, University of Gda\'nsk, Wita Stwosza 57, 80-308 Gda\'nsk, Poland}
\affiliation{International Centre for Theory of Quantum Technologies, University of Gda\'nsk, Wita Stwosza 63, 80-308 Gda\'nsk, Poland}

\author{Siddhartha Das}\email{das.seed@iiit.ac.in}
\affiliation{Center for Security, Theory and Algorithmic Research, International Institute of Information Technology, Hyderabad, Gachibowli, Telangana 500032, India}

\date{\today}
\begin{abstract}
The realization of a desirably functional quantum Internet is hindered by fundamental and practical challenges such as high loss during transmission of quantum systems, decoherence due to interaction with the environment, fragility of quantum states under measurement, etc. We study the implications of these constraints by analyzing the limitations on the scaling and robustness of the quantum Internet. Considering quantum networks, we present practical bottlenecks for secure communication, delegated computing, and resource distribution among end nodes. In particular, we show that the device-independent quantum key distribution (DI-QKD) network of isotropic states with parameter $p$ has a finite diameter depending on $p$. Motivated by the power of abstraction in graph theory (in association with quantum information theory), that allows us to consider the topology of a network of users spatially separated over long distances around the globe or separated as nodes within the small space over a processor or circuit within a same framework, we consider graph-theoretic quantifiers to assess network robustness and provide critical values of communication lines for viable communication over quantum Internet. We consider some quantum networks of practical interest, ranging from satellite-based networks connecting far-off spatial locations to currently available quantum processor architectures within computers, and analyze their robustness to perform quantum information processing tasks. Some of these tasks form primitives for delegated quantum computing, e.g., entanglement distribution and quantum teleportation, that are desirable for all scale of quantum networks, be it quantum processors or quantum Internet.
\end{abstract}

\maketitle

\tableofcontents
\section{Introduction}
The quantum Internet~\cite{dowling2003quantum,kimble2008quantum,wehner2018quantum} is a network of users who are enabled to perform desired quantum information processing and computing tasks among them. Some of the important communication and computing tasks that we envision to be possible in full-fledged quantum Internet are cryptographic tasks against quantum adversaries of varying degree~\cite{bennett1984proceedings,PhysRevLett.67.661,acin2007device,renner2008security,BCK13,lucamarini2018overcoming,arnon2018practical,D18thesis,XMZ+20,das2021universal}, quantum computation~\cite{shor94,jozsa97,grover96,knill2001scheme,CCD+03,BvL05,MLG+06,AA11}, distributed quantum computing~\cite{cirac1999distributed,FGM01,RB01,beals2013efficient,LBK05,LHZ+23}, delegated quantum computing~\cite{childs01,GLM08,BFK09,DS21}, quantum sensing~\cite{BIW+96,giovannetti2004quantum,GLM+06,budker2007optical,taylor2008high,DRC17,MKF+23}, quantum clock synchronization~\cite{Ch00cl,JADW00,GLM01}, quantum communication~\cite{BW92,bennett1996purification,Schumacher96,MWK+96,ECZ97,Bose03,BvL05,LLS+06,ZDS+17,DBW20,kaur2019extendibility,BRM+20,CPC22}, superdense coding~\cite{BW92,MWK+96,LLT+02,DW19}, quantum teleportation~\cite{BBC+93,BPM+97,FSB+98,horodecki1996teleportation,LB00}, randomness generation and distribution~\cite{pironio2010random,colbeck2011private,acin2012randomness,colbeck2012free,DHW19}, and others. Some of the crucial primitives for many of the quantum information and computing tasks are the distribution of resources like entanglement~\cite{schrodinger1935discussion,BvL05,horodecki2009quantum,PCL+12,das2018robust,BDW18,FVM+19,xing2023fundamental,das2021universal,khatri22}, nonlocal quantum correlations~\cite{bell1964einstein,clauser1969proposed,BCP+14,TPL+22,aspect1982experimental,hensen2015loophole,giustina2015significant,HSD15,BvL05,Luo22,BBN+23}, or quantum coherence~\cite{SSD+15,YXG+15,SDC21,SAP17}, etc.

Recent technological advancements have made it possible to realise quantum networks with fewer users at small distances~\cite{valivarthi2020teleportation,wang2014field,stucki2011long}, however, the short lifetime of physical qubits (in some coherent states or quantum correlations)~\cite{OPH+16,DKSW18,WLX+22,kim2023evidence}, the high loss experienced during the distribution of quantum systems across different channels~\cite{azuma2015all,ecker2022advances,lu2022micius,liu2022satellite,kaur2022upper,das2021universal}, and the use of imperfect measurement devices~\cite{briegel1998quantum,BDW16,len2022quantum,sadhu2023testing,D18thesis} limits the realization of quantum networks for practical purposes. It is important to study the limitations of the quantum Internet, be it network connecting nodes in a quantum processor (or quantum computer) or network connecting users in different spatial locations to allow quantum information processing tasks among them. In this work, we use tools and techniques from graph (network) theory and quantum information theory to shed light on the limitations on the robustness and scalability of the quantum Internet.

Unlike classical signals, the losses associated with the transmission of quantum states between two parties via a lossy channel cannot be reduced by using amplifiers since the measurement will disturb the system~\cite{fuchs1996quantum} and unknown quantum states cannot be cloned~\cite{wootters1982single}. An alternative approach is to use quantum repeaters~\cite{briegel1998quantum,dur1999quantum,zhan2023performance} to enable entanglement distribution~\cite{das2018robust,wehner2018quantum,khatri2019practical,SvL22} and perform cryptographic tasks~\cite{das2021universal,xie2022breaking,horodecki2022fundamental,li2022breakingsecret,KSS+23,li2022breaking} over lossy channels. The quantum repeater-based protocols use entanglement swapping~\cite{zukowski1993event,PBW+98,BVK98,SSB+05} and optionally entanglement purification~\cite{bennett1996purification,deutsch1996quantum,BVK99,HHS+21} at intermediate points along the quantum channel. Advances in satellite-based quantum communication networks~\cite{ecker2022advances,wang2021exploiting,AJP03,BAL17}, improving system designs~\cite{sliwa2003conditional,azuma2021tools,bopp2022sawfish,you2022quantum}, and 
developing post-processing techniques~\cite{BFB+92, SR08, CS09, pearson04,YLL+20} are considerable efforts towards overcoming constraints on realizing the quantum Internet. With the assumption that some repeater-based schemes may improve the transmissivity of quantum channels connecting nodes in a network, we highlight instances to illustrate the limitations of current technology in implementing scalable quantum communication and DI-QKD networks.

If we focus on the topology of the quantum Internet, certain important questions arise: What is the effective structure of the underlying network to enable different parties to perform information processing tasks? Once a network structure is established, a significant concern is regarding its susceptibility to node and edge failures. A resilient network should possess the ability to carry out different information-processing tasks even if a substantial portion of its nodes becomes inactive. Consequently, it is desirable that the realization of the quantum Internet withstands the failure of a fraction of its nodes and links. This naturally leads to the question of how to evaluate the resilience of the quantum Internet and identify the nodes that are most crucial to its overall functioning. Having designed robust networks to enable different parties to perform desired information processing tasks, it is important to ask what strategies can be employed to optimize the flow of resources at the buffer nodes\footnote{A buffer node in a network is a node that temporarily stores resources from different input channels and distributes them to different output channels upon request.}.

\subsection{Motivation and our contribution}
The power of abstraction in graph (network) theory~\cite{west2000graph,tutte2001graph} that provides a general formalism to analyze networks without getting into specific details of implementation~\cite{barnes1983graph}, motivates the use of graph-theoretic tools in analyzing the scalability and robustness of a global scale quantum Internet. Focusing on the robustness of networks, we present figures of merit for comparing different network topologies. We also apply ideas from percolation theory~\cite{BH57,SA18,BS05} to discuss the robustness of networks formed when performing a class of information processing tasks over any lattice network (sufficiently large graph).

One might find the level of abstraction quite high. However it is justified by the fact that we consider a system be it network or processor as the one {\it after} any technological improvement involving repeater techniques or fault tolerant quantum computing. No matter how they are defined, resulting output state will never be ideal but can only at best be close to ideal. For example, in some cases of practical interests, these final states can be modelled (in the case of links) as an isotropic state: ideal maximally entangled state mixed with the maximally mixed state (a useless noisy state) with some non-zero probability.

The information-processing tasks that will be implemented using the quantum Internet will determine its structure. The building blocks for the structure of any network are the elementary links connecting two nodes of the network. It is therefore important to assess the limitations at the elementary link scale. To do this, we present the critical success probability of the elementary links for performing some desired information processing tasks. Extending to a more general repeater-based network with memory, we present a trade-off between the channel length and time spent at the nodes such that the state shared by the end nodes remains useful for various tasks. 
\begin{figure}
    \includegraphics[scale=0.25]{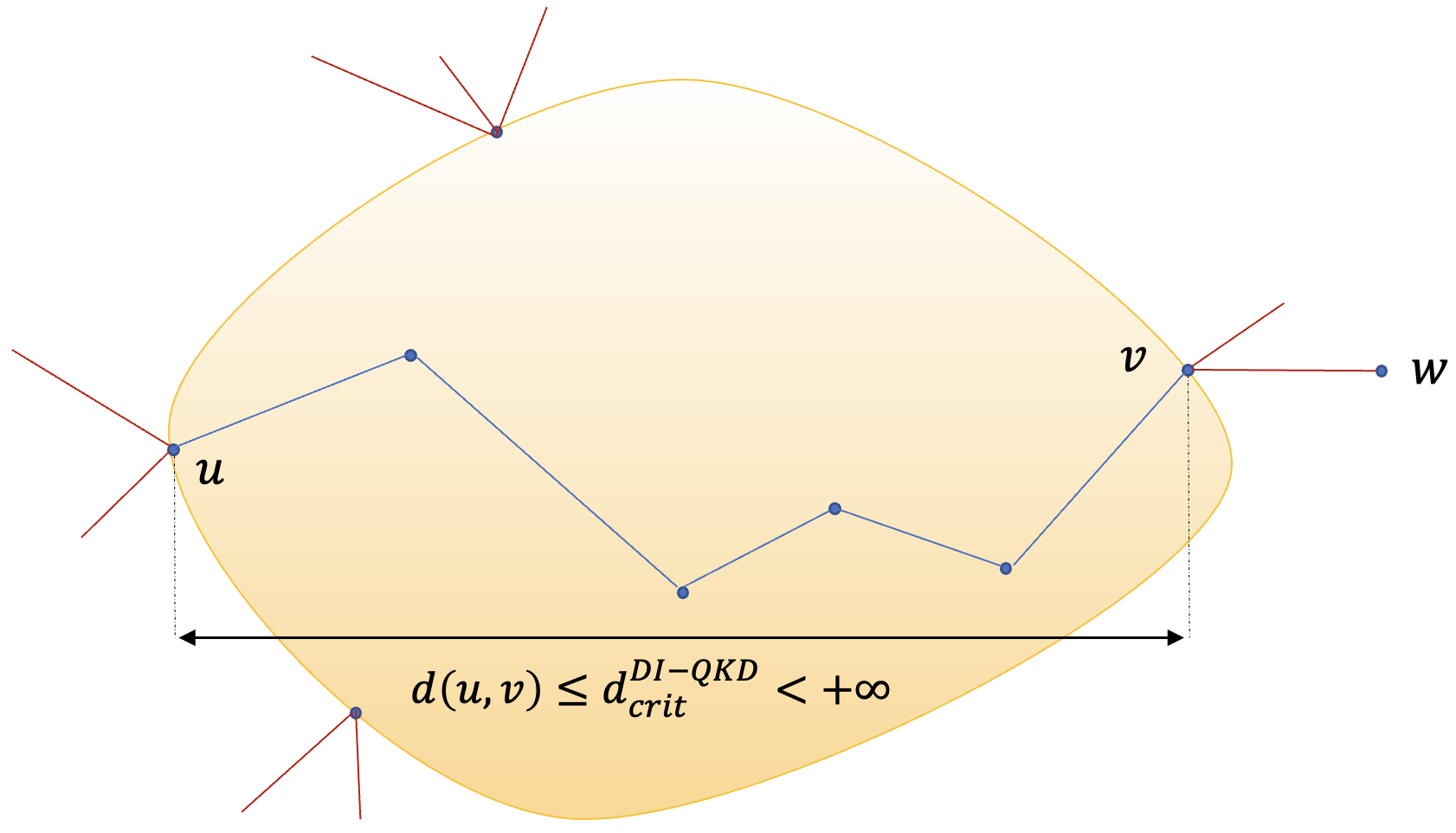}
    \caption{A network of point-to-point connections made of isotropic states is depicted. Any two nodes can be made directly connected via entanglement swapping to generate DI-QKD key along connecting path if and only if they are in a finite distance less than a critical one $\leq d_{crit}^{DI-QKD}$.
    E.g. nodes $u$ and $w$ cannot be connected as they are too far. Only in the yellow area any pair of vertices can be made directly connected in device-independent way.
    In graph-theoretical language, the diameter of any fully connected subgraph is finite, bounded by $d_{crit}^{DI-QKD}$. The same fact limits distance of connecting nodes with pure entanglement needed for faithful teleportation which enables e.g. delegated quantum computing. (Color online)}
    \label{fig:finite_diemeter}
\end{figure}

There is a strong motivation to enable remote places to securely access quantum processors via the quantum Internet and perform delegated quantum computing~\cite{childs01,GLM08,BFK09} over the quantum Internet as it is less resource extensive on the individual user. To illustrate this, we may assume an instance where a user with limited computing resources in Bangalore (in India) requests to securely access the IBM Quantum Hub at Pozna\'n (in Poland)~\cite{IBMPol} via the quantum Internet. Enabling secure access requires the sharing of entangled states among distant nodes~\cite{das2018robust,halder2022circulating,khatri22,inesta2023optimal,bugalho2023distributing,khatri2021policies,lee2022quantum} along with performing secure cryptographic tasks against adversaries. We present limitations involved in implementing these tasks. 

In particular, we present limitations on using isotropic state~\cite{PhysRevA.59.4206} for distilling secret keys via DI-QKD protocols~\cite{PhysRevLett.67.661,acin2007device} over the quantum Internet. We provide upper bounds on the number of elementary links between the end nodes for performing secure communication and information-processing tasks (see Fig. \ref{fig:finite_diemeter}). We show that a region of a (possibly infinite) network has a bounded size (graph-theoretically bounded diameter). Any two nodes can be connected by entanglement swapping to produce a device-independent key by so-called standard protocols only if the nodes are closer to each other than the critical distance $d_{crit}^{DI-QKD}$, which is always finite. It is a consequence of the recently discovered fact that 
even states exhibiting quantum non-locality may have zero device-independent key secure against quantum adversary \cite{farkas2021bell}. \textit{Our result can be phrased as no-percolation for DI-QKD networks in graph-theoretic language.}

Let us recall here that we consider only the standard Bell measurements in the nodes since any technique improving the quality of the links, such as entanglement distillation or quantum error correction, is assumed to be performed beforehand.  

Assuming there exists some scheme that can mitigate losses and improve transmittance over a quantum channel (see Assumption~\ref{assumption:transmittance}), we present limitations on the scalability of networks for performing quantum communication and implementing DI-QKD protocols. In particular, we present limitations on performing DI-QKD between two end nodes at a continental scale of distances and connected by repeater-based elementary links of metropolitan scale (see Example~\ref{example:DIQKD}). Considering performing quantum communication over a lattice with optical fiber-based elementary links, we present limitations on its scalability (see Observation~\ref{obs:graphDiameter}). These illustrate how far is the current technology from designing quantum networks for information processing tasks. 

For long-distance communication over the continental scale of (order of $1000$ km) distances, the losses in transmission of optical signals over free space are significantly lower as compared to an optical fiber (see Fig.~\ref{fig:transmissionFiberSpace}) making it better suited for distributing entangled states between far-off places. In this work, we present practical bottlenecks in realising such networks (see Fig.~\ref{fig:sampleGraph99}~and~\ref{fig:sampleGraph15}). As a potential application of the satellite-based network, we present the entanglement yield (see Fig.~\ref{fig:sampleGraph115}) and figures of merit of a global mesh network having $3462$ major airports across the globe as nodes and $25482$ airplane routes as edges connecting the major airports in the world. For such an airport network, we present the entanglement yield considering currently available and desirable technology available in the future. Looking at short-distance communication over the regional scale of (order of $100$ km) we present an instance of secure communication between a central agency and end parties. It may be desirable here for the central agency to prevent direct communication between the end parties. We consider a fiber-based star network for this task and present bottlenecks in its implementation (see Fig.~\ref{fig:upperBoundAlphaRepeaterRelay}).

\subsubsection{Summary of the main results} \label{sec:summary}
The main theme of this paper is to analyse how quantum information processing tasks over a network (Internet) get limited by intrinsic noise due to the interaction of quantum information carriers with the environment. We provide several figures of merits from network theory to compare different quantum network architectures and topologies. 

\begin{itemize}[leftmargin=*]
  \item The first main result is that a quantum network with a regular lattice structure will not form percolation in practice for quantum information processing tasks including DI-QKD, sharing entanglement, violating Bell-CHSH inequality, and performing teleportation protocol. We first consider a network with $n$ repeater stations and neighbouring nodes sharing qubit isotropic states (see Eq.~\ref{eq:iso}) given by
\begin{eqnarray}
    \rho^I_{AB}\bigg(p(\lambda),2\bigg) \coloneqq \lambda \Psi^+_{AB} + (1 - \lambda) \frac{\mathbbm{1}_{AB}}{4}.
\end{eqnarray}
In such a repeater chain of $n$ nodes with the standard CHSH-based protocol for DI-QKD, we show that there exists a range of isotropic state visibility $\lambda \in \bigg(\gamma_{\text{crit}}^{\theta},(\gamma_{\text{crit}}^{\theta}/q^n)^{1/(n+1)}\bigg)$ such that the device-independent secret key distillation rate is zero among the end nodes even though neighbouring nodes can distill device-independent secret keys at non-zero rate (see Observation~\ref{obv:isotropicQKD}). We further show that the key rates from tilted CHSH inequality~\cite{acin2012randomness} based and modified standard CHSH inequality~\cite{schwonnek2021device} based DI-QKD protocols are non-zero for $n < \floor{\frac{\log (\lambda/\gamma^{\theta}_{\text{crit}})}{\log (1/(q\lambda))}}$, where $\gamma^{\theta}_{\text{crit}}$ is the critical visibility for CHSH-based DI-QKD protocols without repeaters (see Eq.~\eqref{eq:boundKey}). We thus observe that the critical success probability of performing information processing tasks over a network limits its diameter (see Fig.~\ref{fig:finite_diemeter}). 

We further define a network $\mathscr{N}_{\text{crit}}(G(\mathbb{V},\mathbb{E}))$ as \textit{critically large} for  $\mathrm{Task}_\ast$ if for all $e_{ij} \in \mathbb{E}$ the success probability of transmitting resource $\chi$ denoted by $p_{ij}$ such that $p_{ij} \leq c $ where $c \in (0,1)$ and there are at least two nodes $x_0,y_0 \in \mathbb{V}$ which are at a distance  
\begin{equation}
    dist(x_0,y_0)\geq \left\lceil \frac{\log p_*}{\log c} \right\rceil +1,
\end{equation}
where $p_{\ast}$ is the critical probability for successful transmission of resource $\chi$. Assuming that any two nodes $v_i,v_j \in \mathbb{V}$ can share $\chi$ over network path ${\cal P}(v_i,v_j)$ having success probability $s({\cal P}(v_i,v_j)) \geq p_\ast >0$, we show in Proposition~\ref{theorem:upperLimitSharing} that there exists a threshold distance $n_0$ such that two vertices $v_i$ and $v_j$ within that distance satisfies bound:
    \begin{eqnarray} \label{results:finiteSize}
    \exists_{n_0 \in \mathbb{N}}~\forall_{\{v_i,v_j\} \in \mathbb{V}} \,&&\bigg[dist(v_i,v_j)\geq n_0 \nonumber \\
    && \Rightarrow \forall_{{\cal P}(v_i,v_j)}\,\,s({\cal P}(v_i,v_j))< p_\ast\bigg].
\end{eqnarray}
In other words, there will be at least two nodes in $\mathscr{N}_{\text{crit}}(G(\mathbb{V},\mathbb{E}))$ that cannot perform $\mathrm{Task}_\ast$ by sharing $\chi$. It then follows from Eq.~\eqref{results:finiteSize} that there will be at least two finite subgraphs in a lattice graph whose distance is greater than critical threshold for inter-subgraph nodes to remain connected to perform tasks such as quantum communications and DI-QKD (see Theorem~\ref{lem:percolation}). 
\item The second main result is that in a fiber-based repeater network with quantum memories, there is a trade-off between the critical fiber length and critical storage time in quantum memories such that the state shared by the end nodes remain useful for information processing tasks. In a network scenario, it may be of interest to share a maximally entangled state $\Psi^+$ between the end nodes as a primitive for some information processing tasks such as secure communication or DI-QKD. These tasks require sharing $\Psi^+$ with a success probability greater than a critical threshold $p_\ast$. For such tasks, we first consider an entanglement swapping-based repeater network with neighbouring nodes connected by optical fibers (having channel loss parameter $\alpha$) and each network node storing their share of qubits in a lossy quantum memory (having loss parameter $\beta$). For such a network, we present in Sec.~\ref{sec:criticalParametersForNetworks}, a trade-off between the critical fiber length $(l_{cr})$ and critical storage time $(t_{cr})$ in quantum memories as 
\begin{equation}\label{eq:critLengthTimeMultipleRepeaterRelay1}
    \alpha~l_{cr} + \beta~t_{cr} < \frac{1}{2r} \ln \bigg( \frac{q^r \eta_s^{r+1}}{p_\ast} \bigg).
\end{equation}
In Eq.~\eqref{eq:critLengthTimeMultipleRepeaterRelay1}, $r$ denotes the number of repeater stations between the end nodes, $q$ denotes the success probability of standard Bell measurement at intermediate stations, $\eta_s$ denotes the efficiency of the intermediate quantum sources and $p_\ast$ denotes the critical success probability with which the end nodes require to share $\Psi^+$ for performing desired information processing task. 

Assuming that there exists repeater-based quantum communication or quantum key distribution schemes that can mitigate transmission loss over qubit erasure channel and improve the rate of communication from $\eta$ to $\eta^{1/f}$ with $f \geq 1$ (see Assumption~\ref{assumption:transmittance}), we show in Sec.~\ref{sec:networkTopology} an $f-$fold advantage in the critical channel length $l_{cr}$ and critical storage time $t_{cr}$ given by
\begin{equation}
    \alpha~l_{cr} + \beta~t_{cr} < -f \ln p_\ast,
\end{equation}
when required to send ebits or private bits~\cite{HHHO05} over the qubit erasure channel at rate greater than $p_\ast$. The case of $f = 1$ has been shown in ~\cite{das2021universal} for measurement-device-independent quantum key distribution (see \cite{pirandola2017fundamental,WTB16} for quantum key distribution over point-to-point channel) and the case of $f = 2$ has been shown in \cite{lucamarini2018overcoming,wang2018twin,ma2018phase,zeng2020symmetry,XWL+23} for twin-field quantum key distribution and asynchronous measurement-device-independent quantum key distribution~\cite{xie2022breaking,ZLX+23}.
\item For the third main result, we present the entanglement yield of a two-layered satellite-based network distributing bipartite entangled pairs between far-off cities. Such a network is composed of global and local scale networks (see Sec.~\ref{sec:entanglementDistributionCities}).   
The global-scale satellite network distributes entangled pairs between pairs of ground stations across different cities, while the local-scale optical networks transmit the entangled pairs from the ground stations to different localities via optical fibers (loss parameter $\alpha$). The satellite-satellite links are modelled as qubit erasure channels (having channel parameter $\eta_e$), while the satellite-ground station links are modelled as qubit thermal channels (having channel parameters $\eta_g$ and $\kappa_g$). We assume that the ground stations have access to imperfect quantum memories, and in the memory, a stored quantum state evolves via a depolarising channel (having channel parameter $p$). The entanglement yield of the network is given by 
\begin{eqnarray} \label{eq:avgYieldSatelliteNetwork1}
\xi_{\text{avg}} &&=  \mathrm{e}^{-\alpha(l_B + l_M)} \left(\eta_e^2\right)^{n} \eta_s^{n - 1} \nonumber \\
&&\bigg[(1-p)^{2 s}-\frac{1}{4} (p-2) p \left((s-1) (1-p)^{2 (s-1)}+1\right)\bigg] \nonumber \\
&&\qquad \qquad \bigg[\kappa_g(\kappa_g-1)(\eta_g-1)^2+\frac{1}{2}(1+\eta_g^2)\bigg], 
\end{eqnarray}
where $n$ denotes the number of satellite-satellite links, $\eta_s$ denotes the efficiency of the quantum sources in the satellite network, $s$ denotes the number of time steps that the shared quantum state is stored in the quantum memory and the distances of the end stations from the nearest ground station is denoted by $l_B$ and $l_M$ (see Eq.~\eqref{eq:avgYieldSatelliteNetwork}). 
\item For the fourth main result, we consider graph-theoretic parameters of networks to estimate the robustness of networks. Using such parameters we compare the robustness of currently available quantum processor networks. For a given network $\mathscr{N}(G(\mathbb{V},\mathbb{E}))$, we define the effective adjacency matrix $\mathsf{A}_{\ast}$ as a $N_v \times N_v$ matrix such that for all $i,j \in [N_v]$ we have
\begin{align}
    [\mathsf{A}_{\ast}]_{ij} \coloneqq\begin{cases}
      -\log p_{ij} & \text{if $p_{ij}\geq p_{\ast}$},\\
      \infty & \text{otherwise},
    \end{cases}       
\end{align}
where $p_{ij}$ denotes the success probability of transmitting resource $\chi$ over edge $e_{ij} \in \mathbb{E}$ for a particular information processing task $\mathrm{Task}_\ast$, and $p_\ast$ is the critical probability below which $\mathrm{Task}_\ast$ fails. We denote ${\rm w}(e_{ij}) \coloneqq  [\mathsf{A}_{\ast}]_{ij}$ as the \textit{effective weight} of the edge $e_{ij}$ for $\mathrm{Task}_\ast$.

We define the \textit{effective success matrix} $\mathbb{f_{\ast}}$ as a $N_v \times N_v$ matrix such that for all $i,j \in [N_v]$ we have
\begin{align}
    [\mathbb{f}_{\ast}]_{ij}\coloneqq \begin{cases}
      p^{\rm max}_{ij} & \text{if $p^{\rm max}_{ij} \geq p_{\ast}$ for $i\neq j$},\\
      0 & \text{otherwise},
    \end{cases}       
\end{align}
where $p^{\rm max}_{ij}$ is the maximum success probability of transmitting the desirable resource $\chi$ between nodes $v_i, v_j \in \mathbb{V}$ over all possible paths between the two nodes. We denote a node $v_k \in \mathbb{V}$ present between the end nodes when traversing along a path from one end node to another as a \textit{virtual node} associated with the path.  The elements $[\mathsf{A}_{\ast}]_{ij}$ and $[\mathbb{f_{\ast}}]_{ij}$ provide a measure of the success probability of transmitting $\chi$ between nodes $v_i$ and $v_j$ of $\mathscr{N}(G(\mathbb{V},\mathbb{E}))$ following a non-cooperative (network nodes cooperate in execution of $\mathrm{Task}_\ast$) and cooperative strategy (network nodes do not cooperate in execution of $\mathrm{Task}_\ast$) respectively. 

Focusing on the robustness of $\mathscr{N}(G(\mathbb{V},\mathbb{E}))$, we present its \textit{link sparsity} as
\begin{equation} \label{eq:linkSparsity1}
    \Upsilon(\mathscr{N}) = 1 - \frac{n_{\ast}}{N_v^2},
\end{equation}
where $n_\ast$ is the number of non-zero entries in $\mathbb{f_{\ast}}$. For performing $\mathrm{Task}_\ast$, it is possible to share $\chi$ between network nodes following either a non-cooperative (network nodes cooperate in execution of $\mathrm{Task}_\ast$) or cooperative strategy (network nodes do not cooperate in execution of $\mathrm{Task}_\ast$). With this observation, we define the \textit{connection strength} $\zeta_{\mathscr{N}}(v_i)$ of a node $v_i \in \mathbb{V}$ as
\begin{align} \label{eq:nodeConnectionStrength1}
    \zeta_{\mathscr{N}}(v_i)\coloneqq \begin{cases}
      \bigg(\sum_{\substack{j \\ j \neq i}} 2^{-[\mathsf{A}_{\ast}]_{ij}}\bigg)/N_v & \text{non-cooperative},\\
      \bigg(\sum_{\substack{j \\ j \neq i}} [\mathbb{f}_{\ast}]_{ij} \bigg)/N_v   & \text{cooperative}.
    \end{cases}       
\end{align}
The \textit{total connection strength} of the network $\mathscr{N}$ in the (non-)cooperative strategy is obtained by adding the connection strengths of the individual nodes and is given by
\begin{equation} \label{eq:totalConnectionStrength1}
    \Gamma(\mathscr{N}) = \sum_{j} \zeta_{\mathscr{N}}(v_j).
\end{equation}
For the ease of performing $\mathrm{Task}_\ast$ by distributing $\chi$ over network $\mathscr{N}(G(\mathbb{V},\mathbb{E}))$, it is desirable for $\mathscr{N}$ to have low link sparsity values and high connection strength values. 

The \textit{average effective weight} $\widetilde{\mathrm{w}}_{\ast}(G)$ of a network $\mathscr{N}(G(\mathbb{V},\mathbb{E}))$ is defined as the mean of the effective weight between all the node pairs in the network and is expressed as
\begin{equation} \label{eq:globalEff1}
    \widetilde{\mathrm{w}}_{\ast} (G) = \frac{1}{n(n-1)} \sum_{\substack{v_i,v_j\in \mathbb{V} \\ i \neq j }} \mathrm{w}_{\ast}({e_{i\to{...}\to{j}}}),
\end{equation}
where $\mathrm{w}_{\ast}({e_{i\to{...}\to{j}}})$ is the effective weight associated with the path connecting the nodes $v_i$ and $v_j$. We define the \textit{critical parameter} $\nu_i$ associated with the node $v_i \in \mathbb{V}$ as 
\begin{equation} \label{eq:Pcrit1}
    \nu_i = \frac{\tau_i}{C_i \widetilde{w}_{\ast}(G_i)},
\end{equation}
where $\tau_i$ appears as a virtual node, $\widetilde{w}_{\ast}(G_i)$ denotes the average effective weight of the subgraph $G_i(\mathbb{V}_i,\mathbb{E}_i)$ formed by the neighbours of $v_i$ and $C_i$ is the clustering coefficient of $v_i$ given by
\begin{align}
    C_i= \frac{2~e_i}{n_i (n_i - 1)},
\end{align}
where $n_i = |\mathbb{V}_i|$ and $e_i = |\mathbb{E}_i|$ is the number of edges in $G_i$ with $p_{ij} \geq p_\ast$.

The critical nodes of $\mathscr{N}(G(\mathbb{V},\mathbb{E}))$ have high values of $\nu_i$. These network nodes are essential for the proper functioning of the network. If one of these nodes is removed, it will reduce the success probability of sharing $\chi$ between pairs of network nodes for performing $\mathrm{Task}_\ast$.

Considering the currently available quantum processors by Google~\cite{Google}, IBM~\cite{IBM,BGC23} and Rigetti~\cite{Rigetti} as real-world instances of graphical networks (see Sec.~\ref{sec:processorNetworks} ), we observe that the 54-qubit square topology of the Sycamore processor developed by Google~\cite{Google} has the highest node connection strength (see Fig.~\ref{fig:totalConnectionStrengthProcessor}) and the lowest link sparsity (see Eq.~\eqref{eq:sparsityTopology}). With the possibility of having a 1024-qubit quantum processor in the future, we extend the 54-qubit Sycamore processor layout to include 1024 qubits in a $32 \times 32$ layout.
\begin{figure}
    \centering
    \includegraphics[scale=0.9]{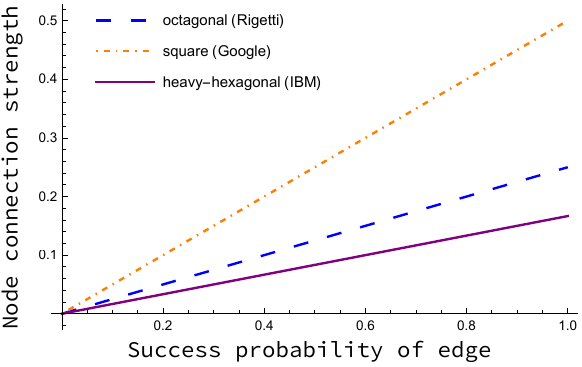}
    \caption{In this figure, we consider three processor designs by (a) Rigetti~\cite{Rigetti} (octagonal lattice), (b) Google~\cite{Google} (square lattice), and (c) IBM~\cite{IBM,BGC23} (heavy-hexagonal lattice) as quantum networks and plot the connection strength of the nodes (see Eq.~\eqref{eq:nodeConnectionStrength1}) as a function of success probability of edge for non-cooperative strategy. It is assumed that the network have uniform distribution of edge success probability. (Color online)}
    \label{fig:totalConnectionStrengthProcessor}
\end{figure}
The link sparsity of the 1024-node square network is $0.9962$. Assuming that the edges of the square network have uniform success probability $p$, the connection strength of the network is given by
\begin{align}
    \zeta_{\mathscr{N}_\text{sq}}(v_i) \coloneqq \begin{cases}
    p/256 & \text{inner node,}\\
    3p/1024 & \text{boundary node,} \\
    p/512 & \text{corner node} 
    \end{cases}
\end{align}
where a corner node, edge node and inner node share an edge with two, three, and four other nodes, respectively. For a $4\times 4$ slice of the 1024-node square network (see Fig.~\ref{SquareLattice}), we observe that the inner nodes have the highest value of the critical parameter. 
\end{itemize}

\subsubsection{Structure of the paper}
The structure of the paper is as follows. In Sec.~\ref{sec:limitationNetwork}, we present limitations on using isotropic states in networks for device-independent secret key distillation. We present an upper bound on the number of elementary links between the end nodes such that the state shared by them remains entangled and is useful for different information-processing tasks. In Sec.~\ref{sec:GraphTheoryAnalysis}, we analyze the robustness and scalability of the quantum Internet using graph (network) theoretic tools. Specifically, we present a graph theoretic framework for networks performing various tasks in Sec.~\ref{sec:GraphTheoryFramework} and present conditions for no percolation in a lattice network. In Sec.~\ref{sec:criticalParametersForNetworks}, we present the critical success probability of elementary links and critical length scales for various tasks over a repeater-based network. In Sec.~\ref{sec:networkTopology}, we present limitations on the scalability of networks for quantum communication assuming some hypothetical scheme can improve the transmission of channels connecting the nodes of the network. In Sec.~\ref{sec:robustnessMeasure} we present figures of merit to compare robustness of network topologies and in observe them for different networks in Sec.~\ref{sec:robustnessComparison}. In Sec.~\ref{sec:critNodesDef} we present measures to identify the critical nodes in a given network. In Sec.~\ref{sec:realWorldInstances}, we present limitations on the potential uses of the quantum Internet for real-world applications. In particular, we present practical bottlenecks in the distribution of entangled states between far-off cities using a satellite-based network in Sec.~\ref{sec:entanglementDistributionCities}. Considering different quantum processor architectures as networks, we observe their figures of merit in Sec.~\ref{sec:processorNetworks}. In Sec.~\ref{sec:networkPropositions}, we first consider the task of connecting major airports across the globe via the satellite-based network and present the yield of the network. Then we present practical bottlenecks in connecting a central agency to the end parties via a star-based network. We provide concluding remarks in Sec.~\ref{sec:discussion}.

\section{Limitations on the network architecture with repeaters} \label{sec:limitationNetwork}
A method for perfectly secure communication between a receiver and a sender requires sharing cryptographic keys between the parties~\cite{bennett1984proceedings}. The secret keys can be shared between the receiver and the sender using quantum key distribution (QKD) protocols. For these protocols, the transmission of quantum states from one party to another is an important step. However, the transmission of quantum states from a sender to a receiver via a lossy channel inevitably degrades the state being transmitted. The overlap of the shared state with the intended state typically decreases monotonically with the length of the channel. Unlike a classical signal, for quantum states this loss cannot be reduced using amplifiers since the measurement will disturb the system~\cite{fuchs1996quantum} and also quantum states cannot be cloned~\cite{wootters1982single}. The degradation of the quantum states when transmitted over a quantum channel places limitations on the distance over which there can be secure communication~\cite{das2021universal}. This limitation may be overcome by using entanglement-based QKD protocols~\cite{PhysRevLett.67.661,PhysRevLett.68.557} along with quantum repeaters~\cite{briegel1998quantum,BCH+15,sangouard2011quantum}. 

In classical computing, there is a strong motivation to use delegated computation in the form of cloud computing as it is less resource extensive on the individual user. Now, given that there is no full clarity regarding the path along which quantum computing will develop, delegated quantum computing~\cite{childs01,GLM08,BFK09} is a vision ahead\cite{fitzsimons2017private}. This vision has been supported by efforts to provide access to quantum processors over the Internet~\cite{steffen2016progress}. The recent developments in the field of QKD and the current high-speed global communication networks only increase the scope for early adoption of delegated quantum computing. 

\begin{figure}
        \centering
        \begin{subfigure}{0.95\linewidth}
            \centering
            \includegraphics[scale=0.5]{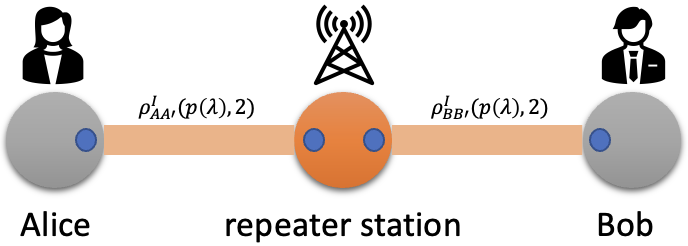}
            \caption{Alice and Bob each send halves of isotropic state of visibility $\lambda$ to a repeater station which performs standard Bell measurement on the received qubits with a success probability $q$. After the Bell measurement, Alice and Bob share an isotropic state of visibility $q\lambda^2$.}
            \label{fig:qkdSingleRepeater}
        \end{subfigure}
        \hfill
        \begin{subfigure}{0.95\linewidth}
            \centering
            \includegraphics[scale=0.4
            ]{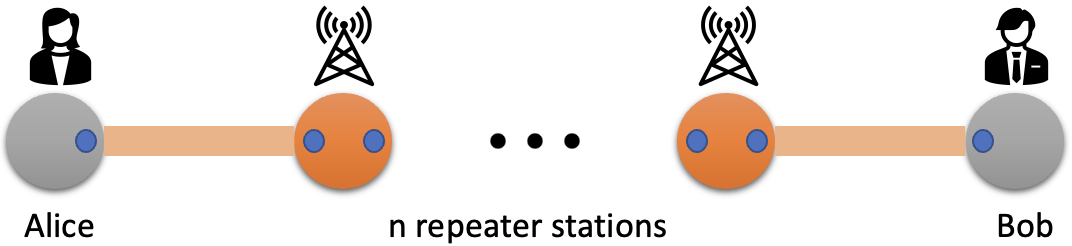}
            \caption{Alice and Bob with n repeater stations in between them. After the repeater stations have performed standard Bell measurement, Alice and Bob share isotropic state of visibility $q^n\lambda^{n+1}$.}
            \label{fig:qkdMultipleRepeater}
        \end{subfigure}
    \caption{In this figure, we present a repeater-based network to share isotropic states between Alice and Bob. The shared state is then used to perform DI-QKD protocols. The blue circles in the figure depict qubits. We assume all the repeater stations are equidistant and identical. (Color online)}
    \label{fig:qkdNetwork}
    \end{figure}

\textit{Standard linear DI key repeater chain}--- Let us consider the task of sharing secret key between two distant parties over a repeater-based network. The parties say Alice and Bob each have identical two-qubit isotropic states (see Eq.~\ref{eq:iso}) $\rho_{AA'}^I(p(\lambda),2)$ and $\rho_{BB'}^I(p(\lambda),2)$ given by
\begin{eqnarray}
    \rho_{AA'}^I(p(\lambda),2) =  \lambda \Psi^+_{AB} + (1 - \lambda) \frac{\mathbbm{1}_{AB}}{4}.
\end{eqnarray}
with $\lambda\in[0,1]$ and $\rho_{BB'}^I(p(\lambda),2) = \rho_{AA'}^I(p(\lambda),2)$. Alice and Bob send halves of their isotropic states to a repeater station. The repeater station performs a standard Bell measurement on the halves of the isotropic states with success probability $q$ (see Fig.~\ref{fig:qkdSingleRepeater}). The action of the noisy standard Bell measurement is described by Eq.~\eqref{eq:WernerBellMeasurement} (see Appendix~\ref{app:repeaterNode} for details). With the error correction possible post-Bell measurement, from a single use of the repeater Alice and Bob share the two-qubit isotropic state  
\begin{equation}
    \rho_{AB}^{I}(p(q\lambda^2),2) = q\lambda^2~\Psi^+_{AB} + \frac{1}{4} (1 - q\lambda^2) \mathbbm{1}_{AB}
\end{equation}
of visibility $\lambda^2$ and $q$ being the success probability of performing the successful standard Bell measurement by the repeater station. The state $\rho_{AB}^I(p(q\lambda^2),2)$ is separable if $\lambda \leq 1/\sqrt{3q}$. All two-qubit states are entanglement distillable if and only if they are entangled \cite{horodecki1998mixed}. All entanglement distillable states have non-zero rates for secret-key distillation \cite{horodecki2009general}. Alice and Bob can use the shared state $\rho_{AB}^I(p(q\lambda^2),2)$ to perform a DI-QKD protocol based on the tilted CHSH inequality \cite{acin2012randomness} or the modified standard CHSH inequality \cite{schwonnek2021device}. It was shown in \cite{farkas2021bell} that for such protocols, the device-independent secret key distillation rate is zero when the visibility $q\lambda^2$ of isotropic state $\rho_{AB}^I(p(q\lambda^2),2)$ is below the critical threshold 
\begin{equation} \label{eq:critThresholdWerner}
    \gamma_{\text{crit}}^{\theta} = \frac{\gamma_L^{\theta} + 1}{3 - \gamma_L^{\theta}},
\end{equation}
where $\gamma_L^{\theta} = 1/(\cos \theta + \sin \theta)$ and $\theta \in (0,\pi/2)$. The standard CHSH-based DI-QKD protocols use settings with $\theta = \pi/4$, which gives $\gamma_{\text{crit}}^{\pi/4} \approx 0.7445$. The DI-QKD rate is known to be non-zero for $q\lambda^2 \geq 0.858$ \cite{acin2007device}. 
For $n$ repeater stations in between them (see Fig.~\ref{fig:qkdMultipleRepeater}), Alice and Bob share the two-qubit isotropic state
\begin{equation} \label{sec:pracPrototypeStatedState}
    \rho_{AB}^I(p(q^n\lambda^{n+1}),2) = q^n\lambda^{n+1}~\Psi^+_{AB} + (1 - q^n\lambda^{n+1})\frac{\mathbbm{1}_{AB}}{4} 
\end{equation}
of visibility $q^n\lambda^{n+1}$, where $q$ is the success probability of performing a standard Bell measurement by the individual repeater stations. We call such linear links of repeaters with standard Bell measurement (possibly noisy) performed at relay stations as standard linear DI key repeater chains. In the following proposition, we present limitations on the use of isotropic states for distilling secret keys via DI-QKD protocols.

\begin{proposition} \label{prop:isotropicQKD}
Consider a standard linear DI key repeater chain with $n$ relay (intermediate) stations between two end nodes. The successive nodes $v_i$ and $v_j$ of the repeater chain share a two-qubit isotropic state $\rho_{ij}^I(p(\lambda^2),2)$ and the relay stations perform standard Bell measurement with success probability $q$. The end nodes of such a network cannot perform CHSH-based DI-QKD protocols for
\begin{eqnarray}
    \lambda \in \bigg(0,(\gamma_{\text{crit}}^{\theta}/q^n)^{1/(n+1)}\bigg).
\end{eqnarray}
\end{proposition}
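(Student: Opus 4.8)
The plan is to reduce the statement to two facts that are already available: the explicit form of the state held by the two end nodes after the $n$ entanglement-swapping operations along the chain, and the no-go threshold of Ref.~\cite{farkas2021bell} for device-independent key from qubit isotropic states recalled around Eq.~\eqref{eq:critThresholdWerner}. Once the end-to-end state is identified as a qubit isotropic state of visibility $v=q^n\lambda^{n+1}$, the proposition follows from a one-line monotonicity argument.

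Concretely, I would first establish (or, if it has not already been spelled out in full, verify from the noisy Bell-measurement map of Eq.~\eqref{eq:WernerBellMeasurement}, see Appendix~\ref{app:repeaterNode}) the multiplicativity rule: swapping two qubit isotropic states of visibilities $v_1$ and $v_2$ through a standard Bell measurement that succeeds with probability $q$, followed by the classically-controlled Pauli correction, produces again a qubit isotropic state, now of visibility $q\,v_1 v_2$; i.e.\ visibilities multiply and each swap contributes one factor of $q$. Iterating this along the linear chain with $n$ relay stations (hence $n$ swaps) by a straightforward induction on $n$ gives that the end nodes share $\rho^I_{AB}(p(q^n\lambda^{n+1}),2)$, which is exactly Eq.~\eqref{sec:pracPrototypeStatedState}. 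Now invoke the result recalled near Eq.~\eqref{eq:critThresholdWerner}: the device-independent secret-key rate of the standard CHSH protocol (and of the tilted / modified CHSH family with setting $\theta$) extracted from $\rho^I_{AB}(p(v),2)$ vanishes whenever $v<\gamma^{\theta}_{\text{crit}}=(\gamma^{\theta}_L+1)/(3-\gamma^{\theta}_L)$. It remains to note the elementary inequality: if $\lambda\in\big(0,(\gamma^{\theta}_{\text{crit}}/q^n)^{1/(n+1)}\big)$, then raising to the power $n+1$ (strictly increasing on $(0,\infty)$) and multiplying by $q^n>0$ yields $q^n\lambda^{n+1}<\gamma^{\theta}_{\text{crit}}$, so the end-node visibility is sub-critical and the device-independent key rate is zero.

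I do not expect a genuine obstacle: the content of the proposition is essentially the packaging of Ref.~\cite{farkas2021bell} with the multiplicativity of isotropic visibility under entanglement swapping. The only points that deserve care are (i) stating the swapping formula precisely, including the role of the (possibly noisy) Bell measurement and of broadcasting its outcome so that the post-correction state is genuinely \emph{isotropic} rather than merely Bell-diagonal, and (ii) being explicit that ``cannot perform CHSH-based DI-QKD'' means that the asymptotic device-independent secret-key rate of these protocols is exactly zero, which is precisely what the cited bound gives. I would also add the remark that when $q^n<\gamma^{\theta}_{\text{crit}}$ the stated interval contains every physically admissible visibility $\lambda\le 1$, so in that regime the conclusion strengthens to: no standard linear DI key repeater chain with $n$ relays can support CHSH-based DI-QKD for any $\lambda$.
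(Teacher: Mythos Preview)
Your proposal is correct and follows essentially the same approach as the paper: identify the end-to-end state as the isotropic state of visibility $q^n\lambda^{n+1}$ (Eq.~\eqref{sec:pracPrototypeStatedState}), apply the threshold of \cite{farkas2021bell} from Eq.~\eqref{eq:critThresholdWerner}, and invert the resulting inequality in $\lambda$. Your version is simply more explicit about the multiplicativity/induction step and adds the useful side remark on the regime $q^n<\gamma^{\theta}_{\text{crit}}$, neither of which appears in the paper's brief proof.
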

\begin{proof}
The end nodes of the standard linear DI key repeater chain can use the shared state $\rho_{AB}^{I}(p(q^n\lambda^{n+1}),2)$ to perform a DI-QKD protocol based on the tilted CHSH inequality \cite{acin2012randomness} or the modified standard CHSH inequality \cite{schwonnek2021device}. The device-independent secret key rate for such protocols becomes zero for $q^n\lambda^{n+1} \in (0,\gamma_{\text{crit}}^{\theta})$. This limits $\lambda$ to the range
\begin{equation} \label{eq:keyRepeater}
    \lambda \in \bigg(0,(\gamma_{\text{crit}}^{\theta}/q^n)^{1/(n+1)}\bigg)    
\end{equation}
when no secret key can be distilled. 
\end{proof}
The following observation is direct consequence of the above proposition and the fact that family of two-qubit isotropic states $\rho_{ij}^I(p(\lambda),2)$ is known to have zero DI-QKD rate for $\lambda \in (0,\gamma_{\text{crit}}^{\theta})$, where $q=1$ is safely assumed without any ramification for the observation below.
\begin{observation}\label{obv:isotropicQKD}
    There exist quantum states with non-zero DI-QKD rates that are not useful as standard DI key repeaters. For example, for a family isotropic states $\rho_{ij}^I(p(\lambda),2)$ with $\lambda \in \bigg(\gamma_{\text{crit}}^{\theta},(\gamma_{\text{crit}}^{\theta}/q^n)^{1/(n+1)}\bigg)$ the DI key rate is nonzero but the standard DI key repeater rate is zero.
\end{observation}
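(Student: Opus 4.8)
The plan is to read the statement off as an immediate corollary of Proposition~\ref{prop:isotropicQKD} together with the (partial) converse to the threshold of~\cite{farkas2021bell}. I would proceed in two halves. First, pin down the range of visibilities for which a \emph{single} two-qubit isotropic link $\rho^I_{ij}(p(\lambda),2)$ already yields a positive CHSH-based DI-QKD rate; second, use the entanglement-swapping composition law to show that the \emph{same} visibility, once propagated through $n$ standard relay stations, drops below the critical threshold $\gamma^{\theta}_{\text{crit}}$ of Eq.~\eqref{eq:critThresholdWerner} and hence gives a vanishing end-to-end device-independent key rate; the claimed window is then just the intersection of the two conditions.

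For the first half I would recall, from the discussion preceding Proposition~\ref{prop:isotropicQKD}, that for the tilted CHSH protocol~\cite{acin2012randomness} (equivalently the modified standard CHSH protocol~\cite{schwonnek2021device}) with a fixed angle $\theta\in(0,\pi/2)$, the isotropic state $\rho^I_{ij}(p(\lambda),2)$ has zero DI key rate precisely when $\lambda\le\gamma^{\theta}_{\text{crit}}$~\cite{farkas2021bell}, while for $\lambda>\gamma^{\theta}_{\text{crit}}$ that protocol certifies a strictly positive rate (for $\theta=\pi/4$ this is compatible with the known bound $\lambda\ge 0.858$ of~\cite{acin2007device}, with the tilted family pushing positivity down toward $\gamma^{\theta}_{\text{crit}}$). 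If one wants positivity on the entire half-line $\lambda>\gamma^{\theta}_{\text{crit}}$ rather than merely just above the threshold, I would note that admixing white noise into $\Psi^+_{AB}$ is a local operation, so the DI key rate of $\rho^I_{ij}(p(\lambda),2)$ is monotone nondecreasing in $\lambda$; positivity at any single point above $\gamma^{\theta}_{\text{crit}}$ therefore propagates upward.

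For the second half I would invoke the repeater composition already recorded in Eq.~\eqref{sec:pracPrototypeStatedState}: $n$ relay stations performing the (noisy) standard Bell measurement with success probability $q$ convert $n+1$ elementary isotropic links of visibility $\lambda$ into the end-to-end state $\rho^I_{AB}(p(q^n\lambda^{n+1}),2)$. By Proposition~\ref{prop:isotropicQKD} (equivalently, applying~\cite{farkas2021bell} to this end-to-end state), the standard DI key repeater rate is zero whenever $q^n\lambda^{n+1}<\gamma^{\theta}_{\text{crit}}$, i.e.\ whenever $\lambda<(\gamma^{\theta}_{\text{crit}}/q^n)^{1/(n+1)}$, exactly the right endpoint in Eq.~\eqref{eq:keyRepeater}. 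Intersecting with the first half gives the window $\lambda\in\big(\gamma^{\theta}_{\text{crit}},(\gamma^{\theta}_{\text{crit}}/q^n)^{1/(n+1)}\big)$, and I would close by checking that it is nonempty: since $\theta\in(0,\pi/2)$ forces $\gamma^{\theta}_{\text{crit}}\in(0,1)$ and $q\le 1$, we have $(q\gamma^{\theta}_{\text{crit}})^{n}<1$, which rearranges to $\gamma^{\theta}_{\text{crit}}<(\gamma^{\theta}_{\text{crit}}/q^n)^{1/(n+1)}$. Any $\lambda$ in this window then has nonzero DI-QKD rate as an elementary link but zero rate as a standard DI key repeater chain of length $n$.

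The only genuinely delicate point is the converse in the first half: asserting that the elementary-link rate is \emph{strictly positive}, not merely ``not known to vanish,'' throughout the open interval. The cleanest route is the monotonicity-under-depolarization argument above, anchored to one explicit lower bound on the DI key rate of the tilted CHSH protocol at a visibility just above $\gamma^{\theta}_{\text{crit}}$; everything else is bookkeeping with the swapping formula and the definition of $\gamma^{\theta}_{\text{crit}}$. One should also be careful to use the \emph{same} $\theta$ in the ``positive elementary rate'' claim and in the threshold $\gamma^{\theta}_{\text{crit}}$ governing the ``zero repeater rate'' claim, so that both halves refer to one common protocol family.
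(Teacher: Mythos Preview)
Your approach matches the paper's exactly: the paper presents the observation as a ``direct consequence'' of Proposition~\ref{prop:isotropicQKD} together with the threshold result of~\cite{farkas2021bell}, and you unpack precisely those two ingredients --- the swapping law $\lambda\mapsto q^n\lambda^{n+1}$ for the ``repeater rate zero'' half and the $\gamma^{\theta}_{\text{crit}}$ threshold for the ``elementary rate positive'' half --- and add the nonemptiness check, which the paper omits.

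Your flagging of the ``genuinely delicate point'' is well taken and, in fact, applies to the paper itself: the cited result of~\cite{farkas2021bell} gives only the implication $\lambda\le\gamma^{\theta}_{\text{crit}}\Rightarrow$ zero rate, not its converse, and the paper elsewhere quotes the positivity bound $0.858$ from~\cite{acin2007device}, which sits strictly above $\gamma^{\pi/4}_{\text{crit}}\approx 0.7445$. Your monotonicity-under-local-depolarization argument is sound but, as you note, still needs an anchor arbitrarily close to $\gamma^{\theta}_{\text{crit}}$; absent such an anchor the observation is only established on the subinterval where positivity is actually known (e.g.\ $\lambda\ge 0.858$ for $\theta=\pi/4$), which is nonempty once $n$ is large enough that $(\gamma^{\theta}_{\text{crit}}/q^n)^{1/(n+1)}>0.858$. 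The paper does not address this either, so your proof is at least as rigorous as the original.
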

For non-zero key rates from tilted CHSH inequality-based and the modified standard CHSH inequality-based DI-QKD protocols we require
\begin{equation} \label{eq:boundKey}
    n < \floor[\Bigg]{\frac{\log (\lambda/\gamma^{\theta}_{\text{crit}})}{\log (1/(q\lambda))}}, 
    \end{equation}
where $\floor{.}$ denotes the floor function and $q$ is the probability of success for the perfect, standard Bell measurement at each relay (repeater) station. For values of $n$ below the above threshold, we will have a positive secure key rate. We plot in Fig.~\ref{fig:sampleGraph2}, the dependence of $n$ on $\lambda$ for performing CHSH-based DI-QKD protocol with non-zero key rate for different success probability of standard Bell measurement. In the plot, we have set $\gamma^{\theta}_{\text{crit}}$ to be 0.7445.
\begin{figure}
    \centering
    \includegraphics[scale=0.85]{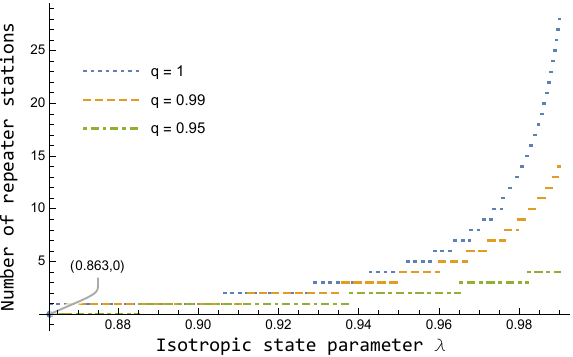}
    \caption{In this figure, we plot the allowed number of relay stations for performing a DI-QKD protocol with non-zero key rates by Alice and Bob as a function of the isotropic state parameter $\lambda$ for different success probability of standard Bell measurement when the critical threshold from Eq.~\eqref{eq:critThresholdWerner} is $\gamma^{\theta}_{\text{crit}} = 0.7445.$ (Color online)}
    \label{fig:sampleGraph2}
\end{figure}
In Fig.~\ref{fig:sampleGraph2}, we observe that the neighbouring nodes of a repeater chain network sharing a two-qubit isotropic state $\rho_{AB}^{I}(p(\lambda^2),2)$ with high values of $\lambda$ allow a large number of repeater stations between the end nodes for performing a DI-QKD protocol with non-zero key rates. In the bipartite scenario with two binary inputs and two binary outputs, there is a region where the device is nonlocal but has zero key~\cite{WDH22}. Results similar to Proposition~\ref{prop:isotropicQKD} and Eq.~\eqref{eq:boundKey} would apply to such a scenario if considered appropriately.

We present in Appendix~\ref{sec:repeaterRelay} bounds on the number of repeater stations such that the state shared by the end nodes of the network is useful for various information processing tasks. The repeater network discussed in this section can be generalised to a network structure with multiple pairs of end nodes. 

Graph theory provides a framework to assess the topology of networks having spatially separated users across the globe, as well as networks with nodes within a small space over a processor or circuit without getting into specific details of implementation. Motivated by the power of abstraction, we analyze the robustness and scalability of the quantum Internet using graph theoretic tools in the following section.  

\section{Graph theoretic analysis of networks} \label{sec:GraphTheoryAnalysis}
In this section, we first present a graph theoretic framework for networks. Using such a framework, we analyze the limitations on the scalability of repeater-based networks for different information processing tasks. Looking into the robustness of network topologies, we present tools for measuring network robustness and identifying the critical components. Using such measures, we compare the robustness of different network topologies. 

\subsection{Graph theoretic framework of networks} \label{sec:GraphTheoryFramework}
Let us consider networks represented as graph $G(\mathbb{V},\mathbb{E})$ classified as weighted and undirected. A graph is a mathematical structure that is used to define pairwise relations between objects called nodes. The set of nodes, also called vertices is denoted by $\mathbb{V}$ and $\mathbb{E}$ denotes the set of edges which are pairs of nodes of the graph that connect the vertices. We denote $|\mathbb{V}|$ as $N_v$ and $|\mathbb{E}|$ as $N_e$, where $|\mathbb{X}|$ denotes the size of the set $\mathbb{X}$. The vertices of the graph $G$ are denoted as $v_i \in \mathbb{V}$ and the edges connecting the nodes $\{v_i,v_j\} \in \mathbb{V}$ as $e_{ij} \in \mathbb{E}.$ We denote the path connecting two distant nodes $v_i$ and $v_k$ as ${\cal P}(v_i,v_k)$ having length $len(\cal P)$. The shortest shortest network path between $v_i$ and $v_k$ is denoted as $dist(v_i,v_k)$. The shortest path length between the most distant nodes of a graph is called the diameter of the graph. A value $w_{ij}\in\mathbb{R}$ assigned to an edge $e_{ij}$ of the graph is called the edge weight. The graph $G$ together with $w_{ij}$ is called a weighted graph. A graph where the edges do not have a direction is called an undirected graph. The edges of an undirected graph indicate a bidirectional relationship where each edge can be traversed from both directions. 

In general, the nodes, edges, and edge weights of a network can change with time (see Appendix~\ref{app:timeEvolving} for details). In this work, we will deal with undirected, weighted graphs depicting networks for communication tasks among multiple users. The edges in the graph are representative of links in the corresponding network, where links between nodes are formed due to quantum channels (or gates) over which resources are being transmitted between connected nodes. We denote labelled graphs as $G(\mathbb{V},\mathbb{E},\mathbb{L})$ where $\mathbb{L}$ denotes the set of labels associated with the vertices and edges of the graph. 

We denote the nodes of the graph that are present between the end nodes when traversing along a path from one end node to another as the virtual nodes associated with the path. While analysing the network for different tasks, it may be possible that all the virtual nodes of the network are secure and cooperate in the execution of the task. We call this the cooperating strategy. It may also be possible that some virtual nodes of the network may be compromised and are not available for the task. We call this the non-cooperating strategy. We next define the weights for the edges of a network performing different tasks.
\begin{definition} \label{def:weight}
Let a network depicted by an undirected, weighted graph $G(\mathbb{V},\mathbb{E})$, where $v_i\in \mathbb{V}$ for $i\in [N_v]$, be given by $\mathscr{N}(G(\mathbb{V},\mathbb{E}))$. The success probability of transmitting a desirable resource $\chi$ between any two different nodes $v_i, v_j$ (i.e., when $i\neq j$) connected with edge $e_{ij}$ is given by $p_{ij}$; we assume $p_{ii}=1$ for an edge $e_{ii}$ connecting a node $v_i$ with itself. The weight $\mathrm{w}(e_{ij})$ for each edge $e_{ij}$ is given by $-\log p_{ij}$. We define an effective weight $\mathrm{w}_{\ast}(e_{ij})$ of an edge $e_{ij}$ over which the resource $\chi$ is transmitted between $v_i$ to $v_j$ for a particular information processing task ($\mathrm{Task}_\ast$) as, for all $i,j\in[N_v]$ 
\begin{align}
    \mathrm{w}_{\ast}({e_{ij}})\coloneqq \begin{cases}
      -\log p_{ij} & \text{if $p_{ij}\geq p_{\ast}$},\\
      \infty & \text{otherwise}, \label{eq:weight}
    \end{cases}       
\end{align}
where $p_{\ast}$ is the critical probability below which the desired information processing task ${\rm Task}_{\ast}$ fails. Since $G(\mathbb{V},\mathbb{E})$ is undirected, we have $p_{ij}=p_{ji}$, $\mathrm{w}(e_{ij})=\mathrm{w}(e_{ji})$, $\mathrm{w}_{\ast}(e_{ij})=\mathrm{w}_{\ast}(e_{ji})$.
\end{definition}
\begin{observation} \label{obs:task}
The weight $\mathrm{w}(e_{ij})$ for an edge is path-dependent and additive across connecting edges. If a resource $\chi$ is being transmitted between nodes $v_i$ and $v_l$ by traversing the virtual nodes $v_j$ and $v_k$ in an order $v_i\to{v_j}\to{v_k}\to{v_l}$, i.e., through a connecting path $e_{ij}\to{e_{jk}}\to {e_{kl}}$, then the weight $\mathrm{w}({e_{i\to{j}\to{k}\to{l}}})= \mathrm{w}(e_{ij})+\mathrm{w}({e_{jk}})+\mathrm{w}({e_{kl}})= -\log p_{ij}p_{jk}p_{kl}$. The effective weight $\mathrm{w}_{\ast}(e_{ij})$ is also path-dependent. The effective weight $\mathrm{w}_{\ast}({e_{i\to{j}\to{k}\to{l}}})= \mathrm{w}_{\ast}(e_{ij})+\mathrm{w}_{\ast}({e_{jk}})+\mathrm{w}_{\ast}({e_{kl}})= -\log p_{ij}p_{jk}p_{kl}$ if $p_{ij}p_{jk}p_{kl}\geq p_{\ast}$, else $\mathrm{w}_{\ast}({e_{i\to{j}\to{k}\to{l}}})= \infty$. If any of the $p_{ij}, p_{jk}, p_{kl}$ is strictly less than $p_{\ast}$ then $p_{ij}p_{jk}p_{kl}<p_{\ast}$. To maximize the success probability of transmitting the resource $\chi$ between any two nodes of the network, it is desirable to select the path between these two nodes that has the minimum weight.
\end{observation}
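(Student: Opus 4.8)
The plan is to reduce the whole statement to two elementary facts about the network model of Definition~\ref{def:weight}: (i) along a connecting path the resource $\chi$ reaches its destination only if it is successfully relayed across \emph{every} elementary link, and by the independence of the per-link transmission events this happens with probability equal to the product of the individual success probabilities (the same multiplicativity underlying the visibility $q^n\lambda^{n+1}$ of the standard linear repeater chain of Fig.~\ref{fig:qkdMultipleRepeater}); and (ii) $x\mapsto-\log x$ is strictly decreasing on $(0,1]$ (with the convention $-\log 0=+\infty$) and sends products to sums. I would first treat the ordinary weight: for the path $v_i\to v_j\to v_k\to v_l$ the end-to-end success probability is $p_{ij}p_{jk}p_{kl}$ by (i), so by Definition~\ref{def:weight} and the identity $-\log(p_{ij}p_{jk}p_{kl})=-\log p_{ij}-\log p_{jk}-\log p_{kl}$ one gets $\mathrm{w}(e_{i\to j\to k\to l})=\mathrm{w}(e_{ij})+\mathrm{w}(e_{jk})+\mathrm{w}(e_{kl})$; since this sum ranges only over the links actually traversed, two different connecting paths between the same endpoints generally yield different totals, which is the asserted path-dependence.

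Next I would handle the effective weight by splitting on the threshold condition applied to the composite link. If $p_{ij}p_{jk}p_{kl}\ge p_{\ast}$, then since each factor lies in $[0,1]$ we have $p_{ij}\ge p_{ij}p_{jk}p_{kl}\ge p_{\ast}$ and likewise for $p_{jk},p_{kl}$, so every individual effective weight is finite and coincides with the ordinary weight; the additivity from the previous paragraph then gives $\mathrm{w}_{\ast}(e_{i\to j\to k\to l})=\mathrm{w}_{\ast}(e_{ij})+\mathrm{w}_{\ast}(e_{jk})+\mathrm{w}_{\ast}(e_{kl})=-\log(p_{ij}p_{jk}p_{kl})$. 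If instead $p_{ij}p_{jk}p_{kl}<p_{\ast}$, then $\mathrm{Task}_{\ast}$ fails on the composite link and $\mathrm{w}_{\ast}(e_{i\to j\to k\to l})=\infty$ directly from Definition~\ref{def:weight}. For the last clause, if (say) $p_{ij}<p_{\ast}$ then, writing $p_{ij}p_{jk}p_{kl}=p_{ij}\cdot(p_{jk}p_{kl})$ with $p_{jk}p_{kl}\le 1$, we get $p_{ij}p_{jk}p_{kl}\le p_{ij}<p_{\ast}$, as claimed.

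Finally, for the optimality remark I would note that for an arbitrary connecting path ${\cal P}(v_i,v_l)$ the same argument gives $s({\cal P}(v_i,v_l))=\prod_{e\in{\cal P}(v_i,v_l)}p_e=2^{-\mathrm{w}({\cal P}(v_i,v_l))}$ with $\log=\log_2$ (the base fixed so that the identity $2^{-[\mathsf{A}_{\ast}]_{ij}}=p_{ij}$ used later is consistent), so $s$ is a strictly decreasing function of $\mathrm{w}$; hence the path of minimum weight is precisely the one maximizing the success probability. I do not expect a genuine obstacle here: the content is a chain of one-line verifications, and the only points warranting care are stating the independence of the link-level transmission events explicitly — everything rests on the factorization $s({\cal P})=\prod_{e\in{\cal P}}p_e$ — and fixing the base of the logarithm consistently with the rest of the paper.
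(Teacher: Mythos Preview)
Your proposal is correct, and in fact more thorough than the paper: the paper states this as an Observation with no separate proof, treating the additivity of $-\log$, the threshold case-split, and the monotonicity of $x\mapsto-\log x$ as immediate consequences of Definition~\ref{def:weight}. Your unpacking of these points---in particular the remark that $p_{ij}p_{jk}p_{kl}\ge p_\ast$ forces each factor to be $\ge p_\ast$ (so the individual effective weights are finite and the sum is well-defined), and the explicit identification of the independence assumption behind the product rule---makes the content of the Observation fully rigorous without adding anything the authors would not have intended.
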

The critical success probability for performing $\mathrm{Task}_\ast$ over a network limits its diameter. This motivates the following definition of critically large networks for $\mathrm{Task}_\ast$.
\begin{definition}[Critically large network] \label{def:irregularGraph}
We define a network $\mathscr{N}_{\text{crit}}(G(\mathbb{V},\mathbb{E}))$ as {\it critically large} network for $\mathrm{Task}_{\ast}$ if
\begin{eqnarray}
\forall_{e_{ij}\in \mathbb{E}}\,~p_{ij}\leq c,~\text{where}~c\in(0,1),
\end{eqnarray}
and it contains at least two vertices $x_0,y_0\in\mathbb{V}$ which are at distance 
\begin{equation}
    dist(x_0,y_0)\geq \left\lceil \frac{\log p_*}{\log c} \right\rceil +1,
    \label{eq:far_pair}
\end{equation}
where $p_{\ast}$ is the critical probability for successful transmission of resource $\chi$ (Definition~\ref{def:weight}). 
\end{definition}
In the following proposition we show that there are at least two vertices in critically large network that cannot perform $\mathrm{Task}_\ast$ over all paths of length larger or equal to the distance between the vertices.
\begin{proposition} \label{theorem:upperLimitSharing}
    Assume that it is possible to perform ${\rm Task}_{\ast}$ between any two distinct nodes $v_i,v_j$ of the network $\mathscr{N}_{\text{crit}}(G(\mathbb{V},\mathbb{E}))$ if and only if nodes $v_i,v_j$ can share resource $\chi$ over the network path ${\cal P}(v_i,v_j)$ having success probability $s({\cal P}(v_i,v_j)) \geq p_\ast >0$. Then
    \begin{eqnarray} 
    \exists_{n_0\in \mathbb{N}}~\forall_{\{v_i,v_j\} \in \mathbb{V}} \,&&\bigg[dist(v_i,v_j)\geq n_0 \nonumber \\
    && \Rightarrow \forall_{{\cal P}(v_i,v_j)}\,\,s({\cal P}(v_i,v_j))< p_\ast\bigg].
\end{eqnarray}
In other words, there will be at least two vertices in this network, that cannot share resource $\chi$ by $\mathrm{Task}_\ast$.
\end{proposition}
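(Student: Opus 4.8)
The plan is to exploit the defining feature of a critically large network --- every edge carries success probability at most $c<1$ --- so that the success probability along any path decays geometrically in the path's combinatorial length. A pair of vertices that is forced to be far apart is then automatically connected only by long paths, none of which can clear the critical threshold $p_\ast$.

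First I would recall from Observation~\ref{obs:task} that the success probability along a path ${\cal P}(v_i,v_j)$ is multiplicative over its edges, so that a path of length $\ell$ has $s({\cal P}(v_i,v_j))=\prod_{e_{kl}\in{\cal P}}p_{kl}\leq c^{\ell}$, using $p_{kl}\leq c$ for every edge of $\mathscr{N}_{\text{crit}}(G(\mathbb{V},\mathbb{E}))$. Next I would set $n_0\coloneqq\left\lceil\frac{\log p_\ast}{\log c}\right\rceil+1$; this is a well-defined natural number because $c,p_\ast\in(0,1)$ make $\frac{\log p_\ast}{\log c}$ a finite positive real. The key computation is to verify $c^{n_0}<p_\ast$: since $\log c<0$, the inequality $n_0>\frac{\log p_\ast}{\log c}$ yields $n_0\log c<\log p_\ast$ after multiplying by the negative quantity $\log c$, and exponentiating gives $c^{n_0}<p_\ast$. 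Then, for any $\{v_i,v_j\}\subseteq\mathbb{V}$ with $dist(v_i,v_j)\geq n_0$, every path ${\cal P}(v_i,v_j)$ satisfies $len({\cal P})\geq dist(v_i,v_j)\geq n_0$ (the graph distance being the minimum path length), hence $s({\cal P}(v_i,v_j))\leq c^{len({\cal P})}\leq c^{n_0}<p_\ast$, which is exactly the displayed implication.

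For the closing sentence I would combine this with Definition~\ref{def:irregularGraph}: a critically large network contains, by hypothesis, two vertices $x_0,y_0$ with $dist(x_0,y_0)\geq\left\lceil\frac{\log p_\ast}{\log c}\right\rceil+1=n_0$, so no connecting path achieves success probability $\geq p_\ast$ and therefore, by the stated equivalence between performing $\mathrm{Task}_\ast$ and sharing $\chi$ above threshold, $x_0$ and $y_0$ cannot perform $\mathrm{Task}_\ast$. I do not expect any genuine obstacle here: the argument is a one-line geometric-decay estimate plus a logarithm manipulation, and the only points demanding care are the sign reversal when multiplying by $\log c<0$ and the convention $len({\cal P})\geq dist(v_i,v_j)$ for every path; non-simple paths need no separate treatment, since repeated edges only decrease $s({\cal P})$.
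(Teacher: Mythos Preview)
Your proof is correct and follows essentially the same route as the paper: bound the path success probability by $c^{len({\cal P})}$ using the edge-wise cap $p_{ij}\leq c$, pick $n_0$ so that $c^{n_0}<p_\ast$, and conclude via $len({\cal P})\geq dist(v_i,v_j)\geq n_0$. You are in fact more explicit than the paper, which simply says ``choose $n_0$ such that $c^{n_0}<p_\ast$'' without writing down the formula $n_0=\lceil\log p_\ast/\log c\rceil+1$ or checking the sign reversal; your version also cleanly matches the $\geq n_0$ in the statement (the paper's proof slips to a strict inequality there).
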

\begin{proof}
    We begin with an observation that the success probability of sharing a resource between vertices $\{v_i,v_j\}$ along the path ${\cal P}(v_i,v_j)$ is given by $s({\cal P}(v_i,v_j)) \coloneqq p_{ik}...p_{mj}$. The vertices $v_i$ and $v_j$ cannot share a resource along the path ${\cal P}(v_i,v_j)$ if $s({\cal P}(v_i,v_j)) < p_\ast$.
    
    Let us choose $n_0$ such that $c^{n_0} < p_\ast$. Consider now any two vertices $v_i$ and $v_j$ be such that $dist(v_i,v_j) > n_0$ (the set of such vertices is non-empty since $dist(x_0,y_0)>n_0$ by assumption in Eq. (\ref{eq:far_pair})). Then any path ${\cal P}(v_i,v_j)$ has length $l \geq dist(v_i,v_j)$. The success probability of sharing resource along the path ${\cal P}(v_i,v_j)$ is given by $s({\cal P}(v_i,v_j)) = p_{ik}p_{kl}...p_{mj} \leq (\max\{p_{ik},p_{kl},...,p_{mj}\})^l \leq c^l \leq c^{n_0} < p_\ast$. Thus for any $\{v_i,v_j\}$ at distance $\geq n_0$, there does not exist a path ${\cal P}(v_i,v_j)$ with $s({\cal P}(v_i,v_j)) \geq p_\ast$.
\end{proof}

Consider a $d$-dimensional lattice $G_{lat}(\mathbb{V},\mathbb{E})$ having $|\mathbb{V}| \rightarrow \infty$. The vertex set $\mathbb{V}$ is defined as the set of elements of $\mathbb{R}^d$ with integer coordinates. Let us denote $G_{sg}(\mathbb{V}_{sg},\mathbb{E}_{sg})$ as a finite subgraph of $G(\mathbb{V},\mathbb{E})$ from which the entire graph can be constructed by repetition and
\begin{eqnarray}
    \exists_{N_{sg} \ll |V|} : \forall_{v_j \in \mathbb{V}_{sg}}~\mathrm{degree}(v_j) \leq N_{sg}.
\end{eqnarray}
A percolation configuration $\omega_p = (\omega_{ij} : e_{ij} \in \mathbb{E})$ on the graph $G(\mathbb{V},\mathbb{E})$ is an element of $\{0,1\}^{|\mathbb{E}|}$. If $\omega_{ij} = 1$, the edge is called open, else closed. If a node $v_i \in \mathbb{V}$ fails, all the edges $e_{ij} \in \mathbb{E}$ connected to $v_i$ will be disconnected and for such edges $\omega_{ij} = 0$. A configuration $\omega_p$ is a subgraph of $G(\mathbb{V},\mathbb{E})$ with vertex set $\mathbb{V}$ and edge-set $\mathbb{E}_p \coloneqq \{e_{ij} \in \mathbb{E} : \omega_{ij} = 1\}$ (cf.~\cite{duminil18}).  
For performing $\mathrm{Task}_\ast$ over lattice network, we next present a theorem that describes conditions for which there is no percolation in a lattice network.
\begin{theorem}~\label{lem:percolation}
    Let us consider performing $\mathrm{Task}_\ast$ (Definition~\ref{def:weight}) over the lattice $G_{lat}(\mathbb{V},\mathbb{E})$  where each edge is open with probability $p_{ij}$ and $0<p_\ast \leq p_{ij} <1$. Then
    the network arising among nodes from this task does not form a percolation configuration, i.e., a connected component of length $N_c$ such that $N_c/|\mathbb{V}| > 0$.
\end{theorem}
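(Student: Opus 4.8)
The plan is to obtain Theorem~\ref{lem:percolation} as a corollary of Proposition~\ref{theorem:upperLimitSharing} once the per-edge hypothesis ``$p_{ij}<1$'' is promoted to a \emph{uniform} one. For that I would invoke the structure set up just before the theorem: $G_{lat}(\mathbb{V},\mathbb{E})$ is generated by repeating the finite subgraph $G_{sg}$, so only finitely many distinct values appear among the $\{p_{ij}\}$, and since each is $<1$ the maximum $c:=\max_{e_{ij}\in\mathbb{E}}p_{ij}$ exists with $c<1$; the same repeating unit gives a uniform degree bound, so a lattice-ball of radius $R$ contains at most a constant $N(R)$ (e.g.\ $N(R)\le (R+1)N_{sg}^{R}$) vertices, independent of $|\mathbb{V}|$. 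Now $p_{ij}\le c<1$ for all edges and, $G_{lat}$ being infinite, it has two vertices at distance $\ge\lceil \log p_\ast/\log c\rceil+1$, so $\mathscr{N}(G_{lat}(\mathbb{V},\mathbb{E}))$ is a \emph{critically large} network for $\mathrm{Task}_\ast$ (Definition~\ref{def:irregularGraph}) and Proposition~\ref{theorem:upperLimitSharing} applies: there is a finite threshold $n_0$ --- any integer with $c^{n_0}<p_\ast$, e.g.\ $n_0=\lceil \log p_\ast/\log c\rceil+1$ --- such that whenever $dist(v_i,v_j)\ge n_0$ one has $s({\cal P}(v_i,v_j))\le c^{\,dist(v_i,v_j)}<p_\ast$ along \emph{every} path, so $v_i$ and $v_j$ cannot execute $\mathrm{Task}_\ast$ together (and since closing edges at random only deletes paths, this can only worsen).

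It remains to turn this distance cutoff into the absence of a positive-density cluster. Let $\mathscr{N}_\ast$ be the network produced by the task, i.e.\ the graph on $\mathbb{V}$ with $v_i\sim v_j$ exactly when $\chi$ can be shared between them with success probability $\ge p_\ast$ along some connecting path (equivalently, when the optimal path has finite effective weight $\mathrm{w}_\ast(e_{i\to\cdots\to j})<\infty$). Running the inequality backwards, $v_i\sim v_j$ forces $c^{\,dist(v_i,v_j)}\ge p_\ast$, i.e.\ $dist(v_i,v_j)\le \log p_\ast/\log c<n_0$; hence any set $S\subseteq\mathbb{V}$ whose vertices can \emph{pairwise} run $\mathrm{Task}_\ast$ --- a fully connected subgraph of $\mathscr{N}_\ast$, the object whose diameter is controlled in Fig.~\ref{fig:finite_diemeter} --- lies inside a single lattice-ball of radius $n_0$ and therefore has $|S|\le N(n_0)=O(1)$. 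Thus the largest such cluster satisfies $N_c\le N(n_0)$, a quantity fixed as $|\mathbb{V}|\to\infty$, giving $N_c/|\mathbb{V}|\to 0$: no percolation. Dually, one may pick two arbitrarily large but finite subgraphs $S_1,S_2$ of $G_{lat}$ with $dist(S_1,S_2)\ge n_0$, and then no node of $S_1$ can share $\chi$ with any node of $S_2$, which is precisely Eq.~\eqref{results:finiteSize} and the announced ``no-percolation for DI-QKD networks''.

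I expect the only real subtlety to be the interpretation of ``connected component'' rather than any computation: because task-connections chain through intermediate repeater nodes, a graph-theoretic connected component of $\mathscr{N}_\ast$ need \emph{not} be bounded, so the conclusion has to be phrased for fully connected clusters (sets of mutually reachable nodes), exactly as in the statement ``the diameter of any fully connected subgraph is finite'' in the caption of Fig.~\ref{fig:finite_diemeter}. Once that is fixed, extracting $c<1$ from the finite repeating unit and passing from a distance bound to a constant size bound via the bounded degree of a lattice are both routine.
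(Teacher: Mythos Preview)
Your approach is essentially the paper's: the paper's justification of Theorem~\ref{lem:percolation} is a one-paragraph sketch that invokes the periodic repetition of $G_{sg}$ and then applies Proposition~\ref{theorem:upperLimitSharing} to find two far-apart subgraphs that cannot be joined for $\mathrm{Task}_\ast$, which is exactly your argument with the details (the uniform bound $c=\max p_{ij}<1$ from finiteness of the repeating unit, the volume bound from bounded degree) filled in. Your closing discussion---that the conclusion must be read as ``any fully connected subgraph has bounded size'' rather than ``any graph-theoretic connected component is bounded,'' since task-adjacency is not transitive---is more careful than the paper, which leaves this implicit and only makes it explicit in the caption of Fig.~\ref{fig:finite_diemeter}.
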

The above theorem follows from the facts that there are periodic repetitions of the finite subgraph $G_{sg}(\mathbb{V}_{sg},\mathbb{E}_{sg})$ in $G_{lat}(\mathbb{V},\mathbb{E})$ and there exists at least two subgraphs whose distance is greater than critical threshold for inter-subgraph nodes to remain connected for $\mathrm{Task}_{\ast}$ with some desirable probability (see Proposition~\ref{theorem:upperLimitSharing}). See also~\cite[Page 20]{SA18} for discussion on 1-dimensional lattice and condition for percolation~\cite{BS05} to exist. Theorem~\ref{lem:percolation} implies limitations on the scalability of quantum communication (see Observation~\ref{obs:graphDiameter}) and DI-QKD (see Example~\ref{example:DIQKD}) over networks.

Let us next define the adjacency matrix and the effective success matrix of a network in terms of the success probability of transmitting a desirable resource $\chi$ between two nodes of the network.
\begin{definition}
Consider a network $\mathscr{N}(G(\mathbb{V},\mathbb{E}))$ with $|\mathbb{V}|=N_{v}$ and ${\mathrm{w}(e_{ij})}_{i,j}$. The adjacency matrix $\mathsf{A}$ of the network is a $N_v \times N_v$ matrix such that for all $i,j\in [N_v]$ we have
\begin{align}
    [\mathsf{A}]_{ij}= {\rm w}(e_{ij}),
\end{align}
and the effective adjacency matrix $\mathsf{A}_{\ast}$ of the network is given by
\begin{align}
    [\mathsf{A}_{\ast}]_{ij}= {\rm w}_{\ast}(e_{ij}).
\end{align}
The effective success matrix of any network $\mathscr{N}(G(\mathbb{V},\mathbb{E}))$ for the transmission of a desirable resource $\chi$ (associated with ${\rm Task}_{\ast}$) between its nodes is an $N_v \times N_v$ matrix $\mathbb{f}_{\ast}$ such that for all $i,j\in [N_v]$ we have
\begin{align}
    [\mathbb{f}_{\ast}]_{ij}\coloneqq \begin{cases}
      p^{\rm max}_{ij} & \text{if $p^{\rm max}_{ij} \geq p_{\ast}$ for $i\neq j$},\\
      0 & \text{otherwise},
    \end{cases}       
\end{align}
where $p^{\rm max}_{ij}$ is the maximum success probability of transmitting the desirable resource $\chi$ between nodes $v_i$ and $v_j$ over all possible paths between the two nodes.
\end{definition}

The elements $[\mathbb{f}_\ast]_{ij}$ of the success matrix provide the highest probability with which $\chi$ can be shared between the nodes $v_i$ and $v_j$ thus corresponds to the path having minimum cumulative effective weight. Let us consider a graph of diameter $2$ as shown in Fig.~\ref{fig:adjSuccessMatrixCompare}. The adjacency matrix and the success matrix for this graph are different as $[\mathsf{A}]_{14} = [\mathbb{f}_{\ast}]_{14} = 0.79$ while $[\mathsf{A}]_{12} = 0.198$ and $[\mathbb{f}_{\ast}]_{12} = 0.54173.$ We observe that the success probability of transferring $\chi$ between nodes $v_1$ and $v_2$ via a non-cooperating strategy leads to a lower success probability $p_{1\to{2}}$ as compared to that via a cooperative strategy $p_{1\to{3}\to{2}}.$ 
\begin{figure}
    \centering
    \includegraphics[scale=0.5]{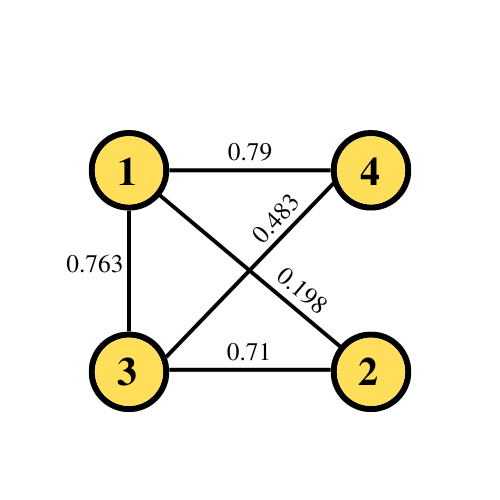}
    \caption{A partially connected mesh network with $4$ nodes. In this graph the adjacency matrix $\mathsf{A}$ and the success matrix $\mathbb{f}$ are different. While performing an information processing task using this network say transferring a resource $\chi$ from $v_1 \to{v_2}$, it is preferable to use a cooperative strategy $v_1\to{v_3}\to{v_2}$ over a non-cooperative strategy $v_1\to{v_2}$ as it has a higher success probability. (Color online)}
    \label{fig:adjSuccessMatrixCompare}
\end{figure}

\begin{note}
Henceforth, we will be dealing with communication over networks where the success probability of transmitting resources from a node $a$ to node $c$ via node $b$ is less than or equal to the multiplication of the success probability of resource transmission from $a$ to $b$ and $b$ to $c$.
\end{note}

\subsection{Limitations on repeater networks} \label{sec:criticalParametersForNetworks}
In this subsection, we present the critical success probability of the elementary links for implementing different information processing tasks. Extending to a linear repeater-based network, we present the critical time and length scales for implementing different information processing tasks.

\subsubsection{Critical success probability for repeater networks}
Consider a network $\mathscr{N}(G(\mathbb{V},\mathbb{E}))$ for performing a particular information processing task denoted by the symbol $\mathrm{Task}_\ast$. As examples, we let the $\mathrm{Task}_\ast$ be sharing of entanglement, or implementing teleportation protocol between the nodes $\{v_i,v_j\} \in \mathbb{V}$. Let $v_i,v_j$ share an isotropic state given by 
\begin{equation}
    \rho_{ij}^{I}(p,d) = p_{ij} \Psi^+_{ij} + (1-p_{ij}) \frac{\mathbbm{1}_{ij} - \Psi^+_{ij}}{d^2 - 1},
\end{equation}
via a qudit depolarising channel (cf.~\cite{das2021universal,kaur2022upper}). Let performing $\mathrm{Task}_{\ast}$ require nodes $v_i$ and $v_j$ to share $\Psi^+_{ij}$ with critical success probability $p_\ast$.
We note that the state $\rho_{ij}^{I}(p,d)$ become separable for $p_{ij} < 1/d$, this implies a critical success probability $p_\ast^{\text{ent}} \geq  1/d$. The singlet fraction of $\rho_{ij}^{I}(p,d)$ is given by $f_{ij} = p_{ij}$. The maximum achievable teleportation fidelity of a bipartite $d \times d$ system in the standard teleportation scheme is given by $F = \frac{f_{ij} d + 1}{d + 1}$~\cite{HHH99}. The maximum fidelity achievable classically is given by $F_{cl} = \frac{2}{d + 1}$~\cite{HHH99}. Thus the shared state between $v_i$ and $v_j$ is useful for quantum teleportation if $f_{ij} > 1/d$. The critical success probability $p_\ast^{\text{tel}}$ for performing teleportation protocol over $\mathscr{N}(G(\mathbb{V},\mathbb{E}))$ is $p_\ast^{\text{tel}} > 1/d$.

\subsubsection{Critical time and length scales for repeater networks} \label{sec:criticalScaleRepeater}
Consider an entanglement swapping-based repeater network. Let there be two sources $S_1$ and $S_2$ producing dual-rail encoded entangled pairs (for details see Appendix~\ref{app:dualRail}) in the state $\Psi^+$ with probability $\eta_s$ and with probability $1 - \eta_s$ produces a vacuum state. The source $S_1$ sends one qubit from its entangled pair to Alice and the other to the repeater station via optical fibers of length $l$. Similarly, $S_2$ sends one qubit from its entangled pair to Bob and the other to the repeater station via optical fibers of length $l$ (see Fig.~\ref{fig:repeatermemory}). Let the qubits be stored in quantum memories at the repeater station and the stations of Alice and Bob for time $t$. 
\begin{figure}
    \centering
    \includegraphics[scale=0.35]{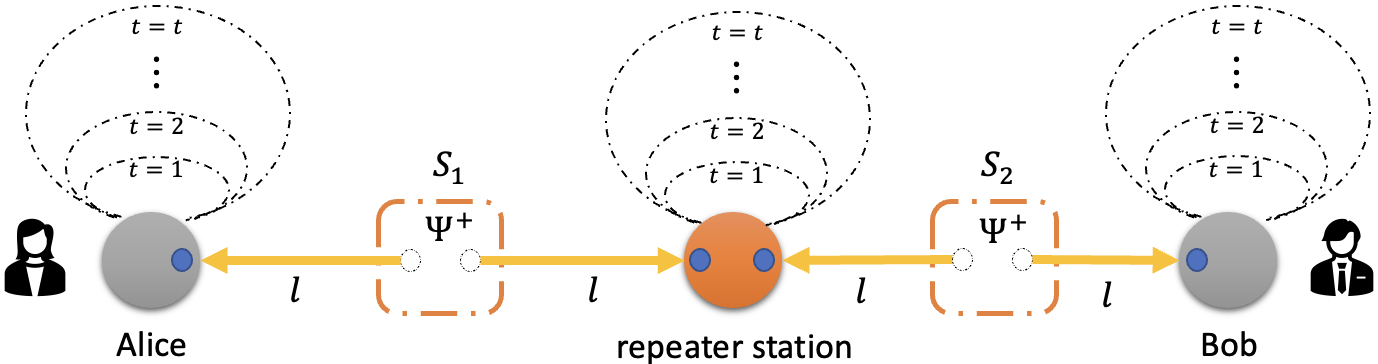}
    \caption{In this figure, we present an entanglement swapping-based repeater network. There are sources $S_1 (S_2)$ producing state $\Psi^+$ and sending it to a repeater station and Alice (Bob) via optical fibers of length $l$ (shown in yellow). The qubits are stored in quantum memories at the repeater station and the stations of Alice and Bob for $t$ time steps (shown as self-loops). The repeater station performs standard Bell measurement on its share of qubits. (Color online)}
    \label{fig:repeatermemory}
\end{figure}
We model the evolution of the qubits through the fiber and at the quantum memory as a qubit erasure channel (see Sec.~\ref{app:ErasureChannel}) with channel parameter $\eta_e = \mathrm{e}^{-(\alpha l + \beta t)}$, where $\alpha$ and $\beta$ are respectively the properties of the fiber and the quantum memory. The repeater station performs a standard Bell measurement on its share of qubits with success probability $q$. After the repeater station has performed the standard Bell measurement, Alice and Bob share the state $\Psi^+$ with probability $q~\eta_s^2~\mathrm{e}^{-2(\alpha l + \beta t)}$ (cf.~\cite[Eq.~(64)]{das2021universal}). Let Alice and Bob require to perform $\mathrm{Task}_{\ast}$ (Definition~\ref{def:weight}) using their shared state. Furthermore, let performing $\mathrm{Task}_{\ast}$ require Alice and Bob to share $\Psi^+$ with critical success probability $p_\ast$. We then require 
\begin{equation} \label{eq:critLengthTimeRepeater}
    q~\eta_s^2~\mathrm{e}^{-2(\alpha l + \beta t)} > p_\ast.
\end{equation}
We observe that Eq.~\eqref{eq:critLengthTimeRepeater} bounds the length of the optical fibers and the time till which the qubits can be stored in the quantum memories. This motivates the definition of the critical length of the fibers $l_c$ and the critical storage time at the nodes $t_c$ above which the shared state becomes useless for information processing tasks. These two critical parameters are related via the expression
\begin{equation} \label{eq:critLengthTimeRepeaterBound}
    \alpha~l_c + \beta~t_c < \frac{1}{2} \ln \bigg(\frac{q~\eta_s^2}{p_\ast}\bigg).
\end{equation}
We plot in Fig.~\ref{fig:critLengthTimeRepeater} the critical fibre length $l_c$ and the critical storage time $t_c$ for different qubit architectures and set $p_\ast = 0.5$ for some desired $\mathrm{Task}_{\ast}$.
\begin{figure}
    \centering
    \includegraphics[scale=0.6]{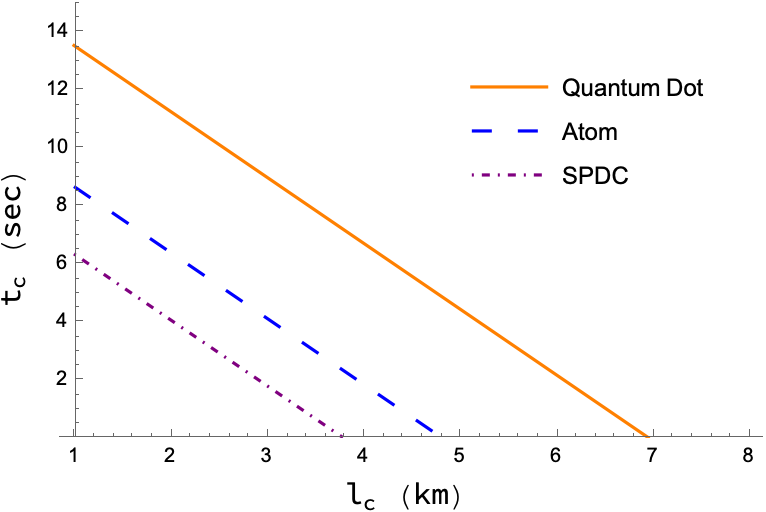}
    \caption{In this figure, we consider a repeater-based network and plot the critical storage time $t_{cr}$ as a function of critical fiber length $l_{cr}$ for different qubit architectures. The single photon architectures have efficiencies (a) $\eta_s$ = 0.97 (Quantum Dots~\cite{press2007photon}) (b) $\eta_s = 0.88$ (Atoms~\cite{barros2009deterministic}) and (c) $\eta_s = 0.84$ (SPDC~\cite{altepeter2005phase}). We set $p_\ast = 0.5$, $q = 1$, $\alpha = 1/22 \text{ km}^{-1}$ and $\beta = 1/50 \text{ sec}^{-1}$.  (Color online)}
    \label{fig:critLengthTimeRepeater}
\end{figure}

Let us introduce a finite number of repeater stations between the two end nodes, each performing standard Bell measurements on its share of qubits. The network is useful for $\mathrm{Task}_{\ast}$ when
\begin{equation} \label{eq:critLengthTimeRepeaterRelay}
    q^r~\eta_s^{r+1}~\mathrm{e}^{-2r(\alpha l + \beta t)} > p_\ast,
\end{equation}
where $r$ denotes the number of repeater stations between the end nodes. Let $l_{cr}$ and $t_{cr}$ denote the critical length of the fiber and the critical storage time of the quantum memory. These two critical parameters are then related via the expression
\begin{equation}\label{eq:critLengthTimeMultipleRepeaterRelay}
    \alpha~l_{cr} + \beta~t_{cr} < \frac{1}{2r} \ln \bigg( \frac{q^r \eta_s^{r+1}}{p_\ast} \bigg).
\end{equation}
We plot in Fig.~\ref{fig:critLengthTimeMultipleRepeater} the critical fiber length $l_{cr}$ and the critical storage time $t_{cr}$ such that Eq.~\eqref{eq:critLengthTimeMultipleRepeaterRelay} holds for different values of $r$. The bounds on $l_{cr}$ and $t_{cr}$ for other values of $r$ not shown in Fig.~\ref{fig:critLengthTimeMultipleRepeater} can be obtained from Eq.~\eqref{eq:critLengthTimeMultipleRepeaterRelay}. We observe that for a given information processing task, increasing the number of repeater stations allows shorter fiber lengths and quantum memory storage times. 
\begin{figure}
    \centering
    \includegraphics[scale=0.6]{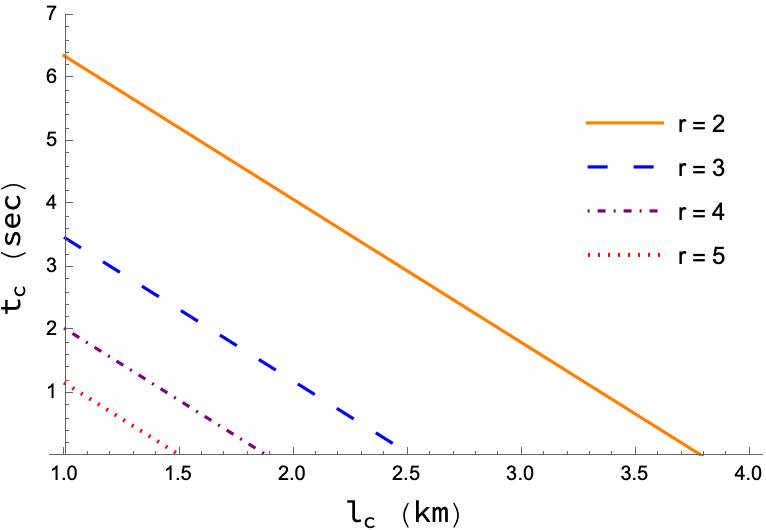}
    \caption{In this figure, we consider a repeater-based network and plot the critical storage time $t_{cr}$ as a function of critical fiber length $l_{cr}$ for the different numbers of repeater stations. We set $p_\ast = 0.5$, $\alpha = 1/22 \text{ km}^{-1}$, $\beta = 1/50 \text{ sec}^{-1}$, $q = 1$ and $\eta = 0.999$. (Color online)}
    \label{fig:critLengthTimeMultipleRepeater}
\end{figure}

We note that the optimal rate of two-way assisted quantum communication or entanglement transmission (i.e., in an informal way, it is the maximum number of ebits per use of the channel in the asymptotic limit of the number of uses of the channel) over an erasure channel, also called LOCC (local operations and classical communication)-assisted quantum capacity of an erasure channel, is given by $\eta_e\log_2d$~\cite{BDS97}, where $1-\eta_e$ is the erasing probability and $d$ is the dimension of the input Hilbert space. Two-way assisted quantum and private capacities for erasure channel coincide~\cite{GEW16,pirandola2017fundamental} (see \cite{WTB16} for strong-converse capacity). Two-way assisted private and quantum capacities for a qubit erasure channel is $\eta_e$~\cite{GEW16,pirandola2017fundamental}.

In the following section, we present limitations on the scalability of networks for quantum communication tasks assuming some hypothetical scheme can improve the transmittance of quantum channels connecting the nodes of the network.
\subsection{Limitations on quantum network topologies} \label{sec:networkTopology}
Let us consider an equilateral triangle-mesh network having the set of nodes as $\mathbb{V}$ (see Fig.~\ref{fig:transmittanceRepeater1}(a) for $n = 3$). The nodes of the triangle network are connected by qubit erasure channels having transmittance $\eta = \mathrm{e}^{-(\alpha l + \beta t)}$, where $l$ denotes the distance between the nodes and $t$ is the time it takes for some resource $\chi$ to pass through the channel. In this section, we assume perfect measurements for the sake of simplicity. Consideration of imperfect measurements will only increase the critical probability $p_\ast$ for the transmission of quantum  resources over an edge between the nodes. This would imply that quantum network topologies and architecture with imperfect measurement devices will be more limited (constrained) than the network with perfect measurement devices.

Let us introduce a repeater scheme in the form of a star network (see Fig.~\ref{fig:transmittanceRepeater1}(b) for $n = 3$) to effectively mitigate the losses due to transmission. The star network has virtual channels connecting the repeater node $v_R$ to the node $v_i \in \mathbb{V}$. The transmittance of the virtual channels $\eta_R = \eta^{1/\sqrt{3}}$ is greater than $\eta$.

We now consider the transmittance of channels connecting pairs of nodes with a hypothetical scheme in a network that would lead to the following assumption.  
\begin{assumption} \label{assumption:transmittance}
Let us assume there exist repeater-based quantum communication or quantum key distribution schemes that can mitigate the loss due to transmittance of a quantum channel such that the rate of communication (ebits or private bits per channel use~\cite{das2021universal}) is effectively of the order
\begin{equation} \label{eq:conjecture1}
    \eta_R = \eta^{1/f}
\end{equation}
in some regime (distance)\footnote{This need not be true for any length/distance in general and may hold only in certain distance regimes or sections.}, where $\eta$ is the transmittance of the channel and for some $f \geq 1$ (cf.~\cite{lucamarini2018overcoming,wang2018twin,ma2018phase,zeng2020symmetry}).
\end{assumption}

To illustrate Assumption~\ref{assumption:transmittance}, let us consider a regular polygon network with $n$ nodes having vertex set $\mathbb{V}$ and edge set $\mathbb{E}$ as shown in Fig.~\ref{fig:transmittanceRepeater1}(a). For $e_{ij} \in \mathbb{E}$ the nodes $\{v_i,v_j\}$ are connected by erasure channels having transmittance $\eta = \mathrm{e}^{-\alpha l}$, where $l$ is the distance between the nodes. Let us introduce a repeater scheme in the form of a star network as shown in Fig.~\ref{fig:transmittanceRepeater1}(b) having the vertex set $\mathbb{V}_s \coloneqq \mathbb{V} \cup \{v_R\}$ where $v_R$ denotes the repeater node and the edge set is denoted by $\mathbb{E}_s$. For $e_{kR} \in \mathbb{E}_s$ the channel connecting nodes $\{v_k,v_R\} \in \mathbb{V}_s$ have transmittance $\eta_R = \mathrm{e}^{-\alpha l/2 \sin (\pi/n)}$, where $n = |\mathbb{V}|$. Comparing with Eq.~\eqref{eq:conjecture1} we observe that the star network provides an advantage of $f = 2 \sin (\pi/n)$ over a repeater-less scheme for $n < 6$. For the triangle network discussed previously, we have $f = \sqrt{3}$. There may be other repeater-based schemes which uses entanglement distillation and error-correction techniques to further mitigate the transmission loss and enhance the rate of communication between end nodes. 
\begin{figure}
    \centering
    \includegraphics[scale=0.5]{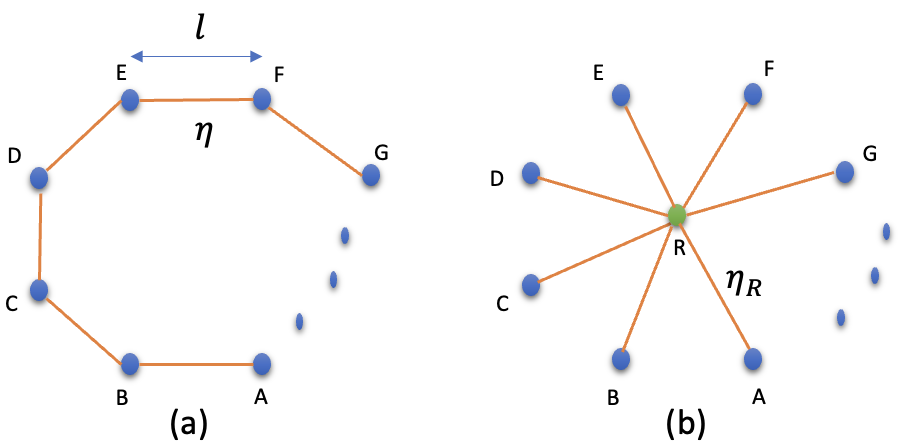}
    \caption{In this figure, we show (a) repeater-less  regular polygon network with $n$ nodes and (b) star-repeater network with $n$ nodes. (Color online)}
    \label{fig:transmittanceRepeater1}
\end{figure}

In Eq.~\eqref{eq:conjecture1}, the case of $f = 1$ has been shown in ~\cite{das2021universal} for measurement-device-independent quantum key distribution (see \cite{pirandola2017fundamental,WTB16} for quantum key distribution over point-to-point channel) and the case of $f = 2$ has been shown in \cite{lucamarini2018overcoming,wang2018twin,ma2018phase,zeng2020symmetry,XWL+23} for twin-field quantum key distribution and asynchronous measurement-device-independent quantum key distribution~\cite{xie2022breaking,ZLX+23}. In Fig.~\ref{fig:transmittanceRepeater}, we set $\beta = 0$ and plot the variation of $\eta_R$ as a function of the channel length $l$ for different values of $f$. 

Consider the task of sending ebits or private bits over the qubit erasure channel at a rate greater than $p_\ast$. To successfully perform the task, the critical length $l_c$ and the critical time $t_c$ are related via the expression
\begin{equation} \label{eq:critParameterNetworkAdvantage}
    \alpha l_c + \beta t_c \leq - f \ln p_\ast.
\end{equation} 
\begin{figure}
    \centering
    \includegraphics[scale=0.6]{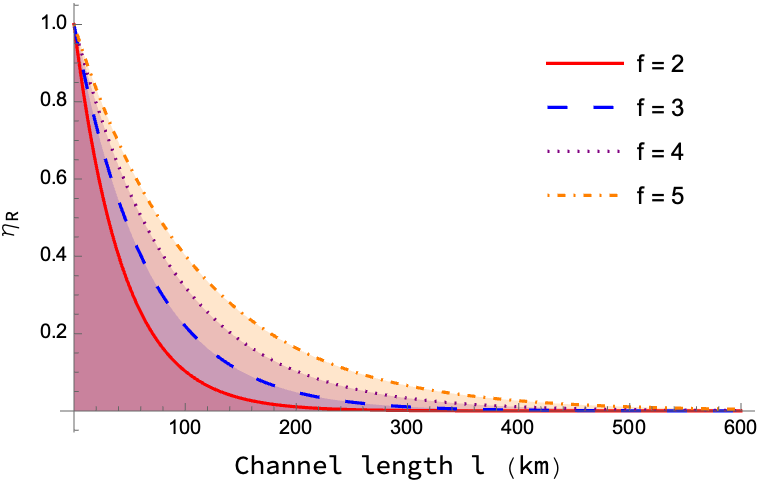}
    \caption{In this figure, we plot the variation of $\eta_R$ as a function of the channel length $l$ (km) (setting $\beta = 0$) for different values of $f$. (Color online)}
    \label{fig:transmittanceRepeater}
\end{figure}
We see from Eq.~\eqref{eq:critParameterNetworkAdvantage} that using repeaters provides $f-$fold advantage over repeater-less networks. We plot in Fig.~\ref{fig:critParRepeaterAdvantage} the critical length $l_c$ and critical time $t_c$ for sending ebits or private bits over the channel at a rate $p_\ast = 0.5$ for different values of $f$. In the plot, we have set $\alpha = 1/22$ km$^{-1}$, and $\beta = 1/10$ s$^{-1}$.
\begin{figure}
    \centering
    \includegraphics[scale=0.5]{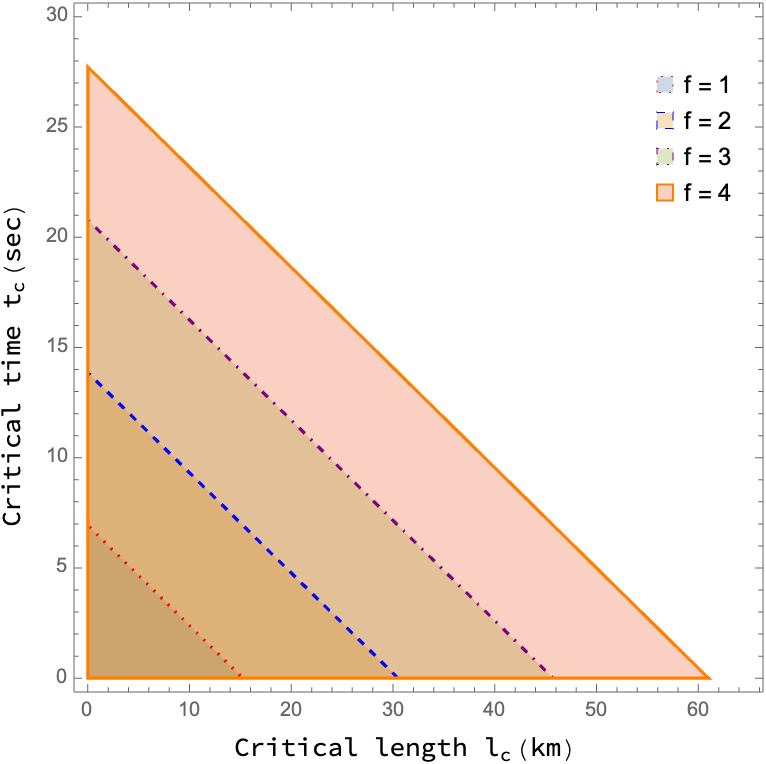}
    \caption{In this figure, we plot the critical length $l_c$ (km) and critical time $t_c$ (sec) for sending ebits or private bits over the channel at a rate of $p_\ast = 0.5$ for different values of $f$. We have considered $\alpha = 1/22 $ km$^{-1}$ and $\beta = 1/10$ sec$^{-1}$. (Color online)}
    \label{fig:critParRepeaterAdvantage}
\end{figure}

We next consider performing quantum communication over a lattice network with fiber-based elementary links and present limitations on its scalability.
\begin{observation} \label{obs:graphDiameter}
    Let us require to perform $\mathrm{Task}_\ast$ over the lattice $G_{lat}(\mathbb{V},\mathbb{E})$ having open edges with probability $p_{ij}$ and $0<p_\ast \leq p_{ij} <1$. It follows from Theorem~\ref{lem:percolation} that for any vertex $v \in \mathbb{V}$, the set of vertices connected to it via a network path is finite. This implies $G_{lat}$ has a finite diameter. We may assume the elementary link formed by edge $e_{ij} \in \mathbb{E}$ connecting nodes $\{v_i,v_j\} \in \mathbb{V}$ to be optical fibers of length $L$ having attenuation factor of $\alpha_{dB/km} = 0.22$ dB/km. The fibers having transmittance $\eta = \mathrm{e}^{-\alpha L}$ where $\alpha = 0.051$/km.\footnote{Note that $\alpha = \alpha_{dB/km} \frac{\ln(10)}{10}.$} Let there be some repeater-based scheme that increases the transmittance from $\eta$ to $\eta^{1/f}$ (see Assumption~\ref{assumption:transmittance}). Assuming that performing $\mathrm{Task}_\ast$ by the nodes $\{v_i,v_j\}$ requires transmittance of at least $\epsilon ( = 0.5)$ bounds the length of the fiber to $L \leq (f/\alpha) \ln (1/\epsilon) \approx 27$ km for $f = 2$. If there are $r = 10$ elementary links each of length $L = 27$ km between two nodes separated by a distance $l = rL = 270$ km, then performing $\mathrm{Task}_\ast$ requires $f \geq r L \alpha / \ln(1/\epsilon) \approx 20$.
\end{observation}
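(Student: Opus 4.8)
The plan is to separate the Observation into a qualitative claim (the effective diameter of $G_{lat}$ for $\mathrm{Task}_\ast$ is finite) and a quantitative claim (the $\approx 27$~km bound on $L$ and the resulting lower bound on $f$). The qualitative part is inherited verbatim from Proposition~\ref{theorem:upperLimitSharing} and Theorem~\ref{lem:percolation}; the quantitative part is a short computation with the erasure-channel loss law $\eta = \mathrm{e}^{-\alpha L}$ together with Assumption~\ref{assumption:transmittance}.

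For the qualitative part, I would first note that since $G_{lat}$ is obtained by periodic repetition of the finite subgraph $G_{sg}(\mathbb{V}_{sg},\mathbb{E}_{sg})$, the edge probabilities take only finitely many values, so $c\coloneqq\max_{e_{ij}\in\mathbb{E}}p_{ij}<1$; hence the pair consisting of $G_{lat}$ and $\mathrm{Task}_\ast$ meets the hypotheses of Definition~\ref{def:irregularGraph}. Proposition~\ref{theorem:upperLimitSharing} then provides an $n_0\in\mathbb{N}$ (any $n_0$ with $c^{n_0}<p_\ast$, e.g.\ $n_0=\lceil\log p_\ast/\log c\rceil+1$) such that for every pair of vertices at graph distance $\geq n_0$ no path carries $\chi$ with success probability $\geq p_\ast$. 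Consequently, for a fixed $v\in\mathbb{V}$, every vertex connected to $v$ by a path usable for $\mathrm{Task}_\ast$ lies in the ball $B(v,n_0)$, so the usable neighbourhood of $v$ is finite; equivalently, after deleting links that are useless for $\mathrm{Task}_\ast$, every connected component of $G_{lat}$ has diameter at most $n_0$. This is exactly the ``no positive-density component'' content of Theorem~\ref{lem:percolation}, now read as a bound on the diameter.

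For the quantitative part, I would convert the attenuation, $\alpha=\alpha_{dB/km}\ln(10)/10=0.22\ln(10)/10\approx0.051~\mathrm{km}^{-1}$, so a fiber of length $L$ modeled as a qubit erasure channel has transmittance $\eta=\mathrm{e}^{-\alpha L}$. By Assumption~\ref{assumption:transmittance} a repeater scheme upgrades the effective rate to $\eta^{1/f}=\mathrm{e}^{-\alpha L/f}$; imposing the task threshold $\eta^{1/f}\geq\epsilon$ and taking logarithms gives $\alpha L/f\leq\ln(1/\epsilon)$, i.e.\ $L\leq(f/\alpha)\ln(1/\epsilon)$, which for $f=2$ and $\epsilon=1/2$ is $L\leq(2/0.051)\ln2\approx27~\mathrm{km}$. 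For $r$ identical elementary links in series the end-to-end channel over distance $l=rL$ is again a qubit erasure channel, with transmittance $\mathrm{e}^{-\alpha l}=(\mathrm{e}^{-\alpha L})^{r}$ boosted to $\mathrm{e}^{-\alpha l/f}$; the requirement $\mathrm{e}^{-\alpha l/f}\geq\epsilon$ rearranges to $f\geq\alpha l/\ln(1/\epsilon)=\alpha rL/\ln(1/\epsilon)$, which for $r=10$, $L=27~\mathrm{km}$ evaluates to $f\geq0.051\times270/\ln2\approx20$.

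The computations in the third paragraph are routine algebra with the exponential loss law; the only genuine care is needed in the second. One must check that the per-edge probabilities are bounded away from $1$ — this holds here by periodicity of the lattice, but it is a hypothesis that cannot be dropped, since if $p_{ij}\to1$ along a sequence of edges then Proposition~\ref{theorem:upperLimitSharing} no longer yields a finite $n_0$ — and one should read ``$G_{lat}$ has finite diameter'' as referring to the pruned connectivity graph for $\mathrm{Task}_\ast$, not to the underlying infinite lattice, whose graph diameter is of course infinite. With these caveats the Observation follows immediately.
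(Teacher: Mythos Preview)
Your proposal is correct and follows essentially the same reasoning as the paper, which embeds the argument inside the Observation itself rather than giving a separate proof: invoke Theorem~\ref{lem:percolation} for the finiteness of connected components, then compute with $\eta=\mathrm{e}^{-\alpha L}$ and Assumption~\ref{assumption:transmittance}. Your version is in fact more careful than the paper's---you make explicit the role of periodicity in guaranteeing $c=\max p_{ij}<1$, and you correctly flag that ``finite diameter'' must be read as a statement about the $\mathrm{Task}_\ast$-pruned graph rather than the underlying lattice---but the route is the same.
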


In the following example, we present limitations on the scalability of DI-QKD networks assuming there exists some scheme that can mitigate loss due to transmittance over a quantum channel.
\begin{example} \label{example:DIQKD}
Let us consider the task of connecting end nodes separated by the continental scale of (order of $1000$ km) distance. To enable such a task, let there be a repeater-based network with $t = 10$ elementary links connecting two virtual nodes separated by metropolitan distances (order of $100$ km). Each elementary link has two sources say $S_i$ and $S_j$ producing dual-rail encoded entangled pairs (for details see Appendix~\ref{app:dualRail}) in the state $\Psi^+$. The sources $S_i$ and $S_j$ send one qubit from its entangled pair to the nearest virtual node and the other to the repeater station via optical fibers of length $l = 25$ km having attenuation factor $\alpha = 0.04$ km$^{-1}$. Let the qubits evolve through the fiber as a qubit erasure channel (see Sec.~\ref{app:ErasureChannel}) with channel parameter $\eta_e = \mathrm{e}^{-(\alpha l/f)}$, where we assume that some technique allow us to increase the transmissivity of optical fiber by factor $1/f$ for $f > 1$. After passing through the optical fiber, the qubits are stored in identical quantum memories at the repeater station, the virtual nodes, and the end nodes for $n = 2$ time steps. We model the evolution of the qubits in the quantum memory as depolarising channel (see Sec.~\ref{app:DepolChannel}) having channel parameter $p = 0.01$. Assuming the repeater station performs perfect standard Bell measurement on its share of qubits, the singlet fidelity of the state shared by Alice and Bob is given by $\eta_{0.01}^2~\mathrm{e}^{-2\alpha l t/f}$. The state shared by the end nodes is useful for CHSH-based DI-QKD protocols if $\eta_{0.01}^2~\mathrm{e}^{-2\alpha l t/f} \geq 0.7445$ which requires $f \geq 37$.
\end{example}

Observation~\ref{obs:graphDiameter} and Example~\ref{example:DIQKD} illustrate how far is current technology from designing quantum networks for performing quantum communication and implementing DI-QKD protocols.

In the analysis of networks represented as graphs, it is important to analyze the robustness of the network for different information-processing tasks. In recent works, the robustness has been studied in the context of removal of network nodes~\cite{albert2000error} and has been modelled as a percolation process on networks~\cite{callaway2000network,das2018robust,mohseni2021percolation} represented as graphs. In these studies, the vertices are considered present if the nodes connecting them are functioning normally. In the following subsection, taking motivations from degree centrality~\cite{freeman1979centrality}, betweenness centrality~\cite{freeman1977set} and Gini index~\cite{gini1912variabilita} of network graphs we present figures of merit to compare the robustness of network topologies. We then compare different network topologies based on these measures. 

\subsection{Robustness measure for networks} \label{sec:robustnessMeasure}
Networks that have a large number of edges are more tolerant to non-functioning nodes and edges as compared to those with fewer edges. Taking motivation from the degree centrality of a graph \cite{freeman1979centrality}, we define the link sparsity of a network to assess the performance of a network.
\begin{definition}
Consider a network $\mathscr{N}(G(\mathbb{V},\mathbb{E}))$ having the effective success matrix as $\mathbb{f_{\ast}}.$ Let the number of non-zero entries in $\mathbb{f_{\ast}}$ be $n_{\ast}$. The link sparsity of such a network is given by
\begin{equation} \label{eq:linkSparsity}
    \Upsilon(\mathscr{N}) = 1 - \frac{n_{\ast}}{N_v^2}.
\end{equation}
\end{definition}
Typically it is desirable for the network to have low values of link sparsity. The networks represented as graphs can exist in different topologies and have the same number of nodes, edges and also the same weighted edge connectivity. These networks are said to be isomorphic to one another.  

It follows from Eq.~\eqref{eq:linkSparsity} that isomorphic networks (see Definition~\ref{def:graphIsomorphism}) have the same value of link sparsity. Networks having the same value of link sparsity can differ in the distribution of edge weights. Taking motivation from the betweenness centrality of a graph \cite{freeman1977set}, we define the connection strength of the nodes in the network and the total connection strength of the network.

\begin{definition}
Consider a network $\mathscr{N}(G(\mathbb{V},\mathbb{E}))$ with $|\mathbb{V}| = N_v$ having the adjacency matrix and the effective success matrix as $\mathbf{A_{\ast}}$ and $\mathbb{f_{\ast}}.$ The connection strength of a node $v_i$ in the network $\mathscr{N}$ is given by 
\begin{align} \label{eq:nodeConnectionStrength}
    \zeta_\mathscr{N}(v_i)\coloneqq \begin{cases}
      \bigg(\sum_{\substack{j \\ j \neq i}} 2^{-[\mathsf{A}_{\ast}]_{ij}}\bigg)/N_v & \text{non-cooperative},\\
      \bigg(\sum_{\substack{j \\ j \neq i}} [\mathbb{f}_{\ast}]_{ij} \bigg)/N_v   & \text{cooperative}.
    \end{cases}       
\end{align}
The total connection strength of the network $\mathscr{N}$ in the (non-)cooperative strategy is obtained by adding the connection strengths of the individual nodes and is given by
\begin{equation} \label{eq:totalConnectionStrength}
    \Gamma(\mathscr{N}) = \sum_{j} \zeta_\mathscr{N}(v_j).
\end{equation}
\end{definition}

In the following subsection, we compare the robustness measures for different network topologies.   

\subsection{Comparison of the robustness of network topologies} \label{sec:robustnessComparison}
The topology of a network is defined as the arrangement of nodes and edges in the network. The information processing task that the network is performing decides the topology of the network. Two of the most commonly used network topologies are the star and mesh topologies.

A star network topology \cite{stallings1984local} of $n$ nodes is a level 1 tree with 1 root node and $n - 1$ leaf nodes. A star network with $8$ nodes is shown in Fig.~\ref{fig:starNet}. The root node labelled $1$ is the hub node and acts as a junction connecting the different leaf nodes labelled $2$ to $8$.
    \begin{figure}
        \centering
        \includegraphics[scale=0.4]{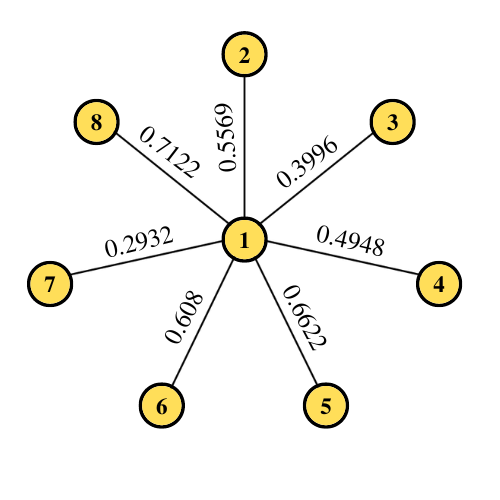}
        \caption{In this figure, we present a star network with $8$ nodes. The node $v_1$ is the hub and nodes $v_2$ to $v_8$ are the outer nodes. The functioning of the hub node is critical to the functioning of the network. (Color online)}
        \label{fig:starNet}
    \end{figure}
Among all network topologies, this network typically requires the minimum number of hops for connecting two nodes that do not share an edge between them. The working of the hub is most critical to the functioning of the star network. The failure of a leaf node or an edge connecting a leaf node to the root does not affect the rest of the network. As an example of its use, this type of network finds application as a router or a switch connecting a ground station to different locations in an entanglement distribution protocol. In such a protocol, an adversary can attack the root node to prevent the proper functioning of the network. If the root node fails to operate, all leaf nodes connected to it become disconnected.

In a mesh network topology \cite{green1980introduction}, each node in the network shares an edge with one or more nodes, as can be seen from Fig.~\ref{fig:meshNet}~and~\ref{fig:meshNetPartial}. There are two types of mesh topologies depending on the number of edges connected to each node. A mesh is called fully connected if each node shares an edge with every other node of the network, as shown in Fig.~\ref{fig:meshNet}. A mesh is called partially connected if it is not fully connected, as shown in Fig.~\ref{fig:meshNetPartial}.  A mesh where each node shares an edge with only one other node of the network is called a linear network.
    \begin{figure}
        \centering
        \begin{subfigure}{0.95\linewidth}
            \centering
            \includegraphics[scale=0.3]{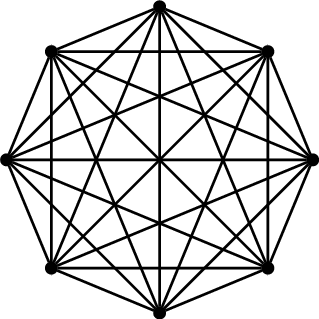}
            \caption{Fully connected mesh with 8 nodes}
            \label{fig:meshNet}
        \end{subfigure}
        \hfill
        \begin{subfigure}{0.95\linewidth}
            \centering
            \includegraphics[scale=0.3
            ]{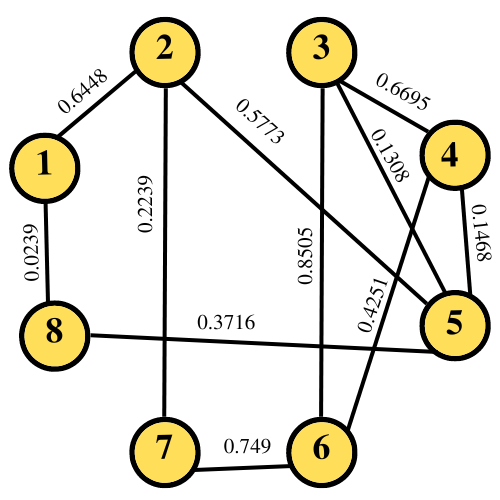}
            \caption{A partially connected mesh with 8 nodes.}
            \label{fig:meshNetPartial}
        \end{subfigure}
    \caption{In this figure, we present (a) fully connected and (b) partially connected mesh networks with $8$ nodes. In a fully connected mesh network, there are edges between every pair of nodes. The presence of multiple paths between two nodes of the mesh network makes it more robust as compared to a star network with the same number of nodes. (Color online)}
    \label{fig:meshNetworkTypes}
    \end{figure}
In a partially or fully connected mesh, the presence of multiple paths between two nodes of the network makes it robust. As an example of its use, a mesh network can be used for a satellite-based entanglement distribution, which we discuss in detail in a later section.

Let us consider a mesh network $\mathscr{N}_m(G_m(\mathbb{V},\mathbb{E}))$ of diameter $d$ with $|\mathbb{V}| = N_v$ and for $(v_i,v_j) \in \mathbb{E},$ $p_{ij} = p.$ Let there be a non-cooperating strategy for sharing resources between the nodes of the network. For such a strategy, the rows of the adjacency matrix of the network have $N_z$ number of zero entries where $0 \leq N_z \leq N_v -1.$ Next, let there also exist a cooperating strategy for sharing resources between the nodes of this network. For such a strategy, the node $v_i$ of the network does not share an edge with $N'_z$ number of nodes where $0 \leq N'_z \leq N_z \leq N_v -1.$ From the remaining nodes, there exists edges between $v_i$ and $(N_v - N'_z - 1)/d$ number of nodes with $p_{ij} = p^{j}$ where $1 \leq j \leq d.$
For such a network, we have the link sparsity as
\begin{align}
    \Upsilon(\mathscr{N}_m) \coloneqq \begin{cases}
    (N_z+1)/N_v & \text{non-cooperating,}\\
    (N'_z+1)/N_v & \text{cooperating.}
    \end{cases}
\end{align}
The connection strength of the node $v_i$ is given by
\begin{align}
    \zeta_{\mathscr{N}_m} (v_i) \coloneqq \begin{cases}
    \big[1 + p(N_v - N_z)\big] / N_v & \text{non-cooperating,}\\
    \big[1 + \frac{p(p^d - 1)(N_v - N'_z - 1)}{d (p-1)}\big] / N_v & \text{cooperating.}
    \end{cases}
\end{align}
The total connection strength of the network is given by $\Gamma(\mathscr{N}_m) = N_v \times \zeta_{\mathscr{N}_m}(v_i)$. It can be seen that for this network the sparsity index $\Xi(\mathscr{N}_m)$ (see Eq.~\eqref{eq:sparsityIndex}) is the same as the connection strength of the node. This follows from the equal distribution of the weights in the network. 

Next consider a star network $\mathscr{N}_s(G_s(\mathbb{V},\mathbb{E}))$ with $|\mathbb{V}| = N_v$ and for $(v_i,v_j) \in \mathbb{E}$ we have
\begin{align} \label{eq:starNet}
    p_{ij} \coloneqq \begin{cases}
      p & \text{if $i = 1$ and $p \geq p_{\ast}$},\\
      0 & \text{otherwise}.
    \end{cases}       
\end{align}
The link sparsity of such a network is $\Upsilon(\mathscr{N}_s) = 1 - (1/N_v)$ for the cooperative strategy and $\Upsilon(\mathscr{N}_s) = 1 - [(3 N_v - 2)/N_v^2]$ for the non-cooperative strategy. The connection strength of the root node is $\zeta_{\mathscr{N}_s}(v_i) = (1 + p(N_v-1))/N_v,$ while for the leaf nodes is  
\begin{align}
    \zeta_{\mathscr{N}_s}(v_i) \coloneqq \begin{cases}
      (1+p)/N_v & \text{non-cooperative},\\
      [1 + p + p^2(N_v - 2)]/N_v & \text{cooperative.}
    \end{cases}           
\end{align}
The sparsity index (see Eq.~\eqref{eq:sparsityIndex}) of the network is
\begin{equation}
    \Xi(\mathscr{N}_s) \coloneqq \begin{cases}
      \big[N_v^2 + p(N_v^2 \\ \qquad \qquad \quad + N_V - 2)\big] / N_v^3  & \text{non-cooperative},\\
      \big[N_v^2 +  p(N_v-1)(N_v+2)  \\
      \qquad+ p^2(N_v-2)(N_v-1)\\ \qquad \qquad(N_v+1)\big] / N_v^3  & \text{cooperative.}
    \end{cases}            
\end{equation}
As another example, consider a network $\mathscr{N}(G(\mathbb{V},\mathbb{E}))$  having $|V| = N_v$ number of nodes, and each node shares an edge with $d$ other nodes. The effective adjacency matrix $\mathsf{A}_{\ast}$ of such a network is a circulant matrix. Let the success probability of transferring a resource between the node $1$ and the node $n$ be given by,
\begin{align}
    p_{1n}\coloneqq \begin{cases}
      p^n & \text{if $n \leq d/2$ and d = even},\\
      p^{d - n + 1} & \text{if $n > d/2$ and d = even},\\
      p^n & \text{if $n \leq (d+1)/2$ and d = odd},\\
      p^{d - n + 1} & \text{if $n > (d+1)/2$ and d = odd}.
    \end{cases}       
\end{align}
The first row of the adjacency matrix is formed by calculating the weights $w_{\ast}(e_{1n}) = - \log p_{1n}.$ The $k^{\text{th}}$ row is formed by taking the cyclic permutation of the first row with an offset equal to $k.$ In the non-cooperative strategy, the link sparsity of the network is then given by 
\begin{equation}
    \Upsilon(\mathscr{N}) = 1 - \bigg(\frac{d+1}{N_v}\bigg),
\end{equation}
and the connection strength of the node $v_i$ is given by
\begin{align}
    \zeta_\mathscr{N}(v_i)\coloneqq \begin{cases}
     \frac{(1+p)(p^{(d+1)/2}-1)}{N_v (p-1)}  & \text{if $d$ is odd},\\
      \frac{1}{N_v} \bigg[1 + \frac{2p(p^{d/2}-1)}{(p-1)}\bigg] & \text{if $d$ is even}.
    \end{cases}       
\end{align}
In the following subsection, we introduce measures for identifying the critical nodes of a given network. 
\subsection{Critical nodes in a network} \label{sec:critNodesDef}
The critical nodes of a network are the nodes that are vital for the proper functioning of the network. Removing any of these nodes can lead to some of the other nodes in the network being disconnected. Given a network $\mathscr{N}(G)$, we proceed to define a measure for the criticality of the nodes in $G$. For this, at first, taking motivation from \cite{watts1998collective}, we define the clustering coefficient for the nodes of a given network. 
\begin{definition}
For a network $\mathscr{N}(G(\mathbb{V},\mathbb{E}))$, let $G_i(\mathbb{V}_i,\mathbb{E}_i)$ be a sub-graph of $G$ formed by the neighbours of node $v_i \in \mathbb{V}.$ Let $n_i = |\mathbb{V}_i|$ be the number of nodes present in $G_i$ and $e_i = |\mathbb{E}_i|$ be the number of edges present in $G_i$ with $p_{ij}\geq p_{\ast}$. The clustering coefficient of the node $v_i$ is defined as
\begin{align}\label{eq:clusterCoeff}
    C_i= \frac{2~e_i}{n_i (n_i - 1)}.
\end{align}
\end{definition}
The average clustering coefficient of a network is calculated by taking the average of $C_i$ for all the nodes of the network. We next proceed to define the average effective weight of a network using Eq.~\eqref{eq:weight}. 
\begin{definition}
For a network $\mathscr{N}(G(\mathbb{V},\mathbb{E}))$, the average effective weight of a network denoted by $\widetilde{\mathrm{w}}_{\ast}$ is defined as the mean of the effective weight between all the node pairs in the network and is expressed as
\begin{equation} \label{eq:globalEff}
    \widetilde{\mathrm{w}}_{\ast} (G) = \frac{1}{n(n-1)} \sum_{\substack{v_i,v_j\in \mathbb{V} \\ i \neq j }} \mathrm{w}_{\ast}({e_{i\to{...}\to{j}}}),
\end{equation}
where $\mathrm{w}_{\ast}({e_{i\to{...}\to{j}}})$ is the effective weight associated with the path connecting the nodes $v_i$ and $v_j$. 
\end{definition}
When two nodes $(v_i,v_j)$ are disconnected, the effective weight $\mathrm{w}_{\ast}({e_{i\to{...}\to{j}}})$ becomes infinite. Small values of $\widetilde{\mathrm{w}}_{\ast}(G)$ indicate that the network performs the task with high efficiency. 

Taking motivation from \cite{freeman1977set}, we define centrality for the nodes of a given network. Then, using definitions of centrality, average effective weight and clustering coefficient, we define the critical parameter for the nodes of a given network. 
\begin{definition}
For a network $\mathscr{N}(G(\mathbb{V},\mathbb{E}))$, let us denote the shortest path connecting the node pairs $(v_i,v_j) \in G$ as $d_{ij} \in \mathbb{D}$. We define the centrality $\tau_{i}$ of the node $v_i$ as the number of paths belonging to the set $\mathbb{D}$ in which the node $v_i$ appears as a virtual node. The critical parameter associated with the node $v_i$ is defined as 
\begin{equation} \label{eq:Pcrit}
    \nu_i = \frac{\tau_i}{C_i \widetilde{w}_{\ast}(G_i)}.
\end{equation}
\end{definition}
The critical nodes of a graph have high values of $\nu_i$. These nodes of the network are essential for the proper functioning of the network. If one of these nodes is removed, it will reduce the success probability of sharing $\chi$ between pairs of network nodes for performing $\mathrm{Task}_\ast$. We present a heuristic algorithm in Appendix~\ref{app:critNetworkNodes} to obtain the critical parameter $\nu_i$ for the node $v_i \in \mathbb{V}$ of the network $\mathscr{N}(G(\mathbb{V},\mathbb{E}))$.

\begin{example}
Let us obtain the critical nodes of the graph with $8$ nodes shown in Fig. \ref{fig:sampleGraph1}. 
\begin{figure}
    \centering
    \includegraphics[scale=0.5]{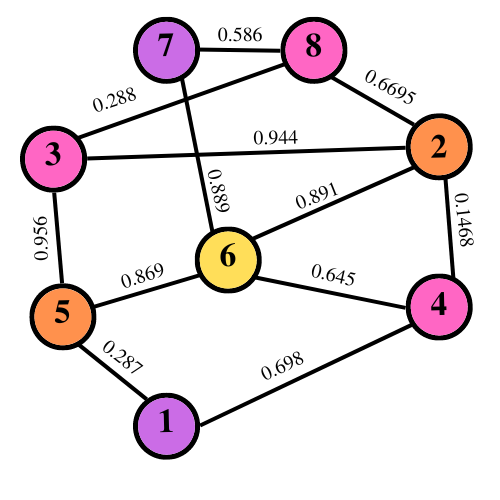}
    \caption{A network represented by a weighted graph with $8$ nodes and $12$ edges. Node $v_6$ (shown in yellow) is the most critical node followed by nodes $v_2$ and $v_5$ (shown in orange). These nodes are the most critical for the proper functioning of the network. (Color online)}
    \label{fig:sampleGraph1}
\end{figure}
We present the critical parameter for the different nodes of the network in the table below.\\
\begin{table}[H] \label{table:critParNodes}
\centering
 \begin{tabular}{ c c || c c}  
 \hline
 node number $(i)$  & $\nu_i$ & node number $(i)$  & $\nu_i$ \\ [0.5ex] 
 \hline\hline
 $v_1$ & 0.1714 \hspace{5pt} & $v_5$ & 1.5238 \hspace{5pt}\\
 $v_2$ & 1.6667 \hspace{5pt} & $v_6$ & 2.5714 \hspace{5pt} \\
 $v_3$ & 0.7619 \hspace{5pt} & $v_7$ & 0.1714 \hspace{5pt}\\
 $v_4$ & 0.9523 \hspace{5pt} & $v_8$ & 0.5714 \hspace{5pt}\\ 
 \hline
 \end{tabular}
\end{table}
We observe that node $v_6$ is the most critical followed by node $v_2$ and node $v_5$. We label these nodes as the critical nodes of the network.
\end{example}

Quantum network architectures have potential applications in diverse fields ranging from communication to computing. In the following section, we present the bottlenecks of some potential use of quantum networks for real-world applications.

\section{Real world instances} \label{sec:realWorldInstances}
In this section, we first present practical bottlenecks in the distribution of entangled pairs between two far-off cities. We then consider different currently available as well as near-future quantum processor architectures as networks and observe their robustness parameters. Then we propose networks for (a) the major international airports to communicate via a global quantum network and (b) the Department of Energy (DoE) at Washington D.C. to communicate with the major labs involved in the National Quantum Initiative (NQI) by sharing entanglement. We present the practical bottlenecks for such models. 

\subsection{Entanglement distribution between cities} \label{sec:entanglementDistributionCities}
Consider a satellite-based network for sharing entangled pairs between two far-off cities. Let such a network be a two-layered model consisting of a global scale and a local scale. On a global scale, there are multiple ground stations located across different cities. Such ground stations are interconnected via a satellite network. Two ground stations share an entangled state using the network via the shortest network path between them (see Appendix~\ref{sec:shortestPath} for an illustration of shortest network-path finding algorithm). 
\begin{figure}
    \centering
    \includegraphics[scale=0.45]{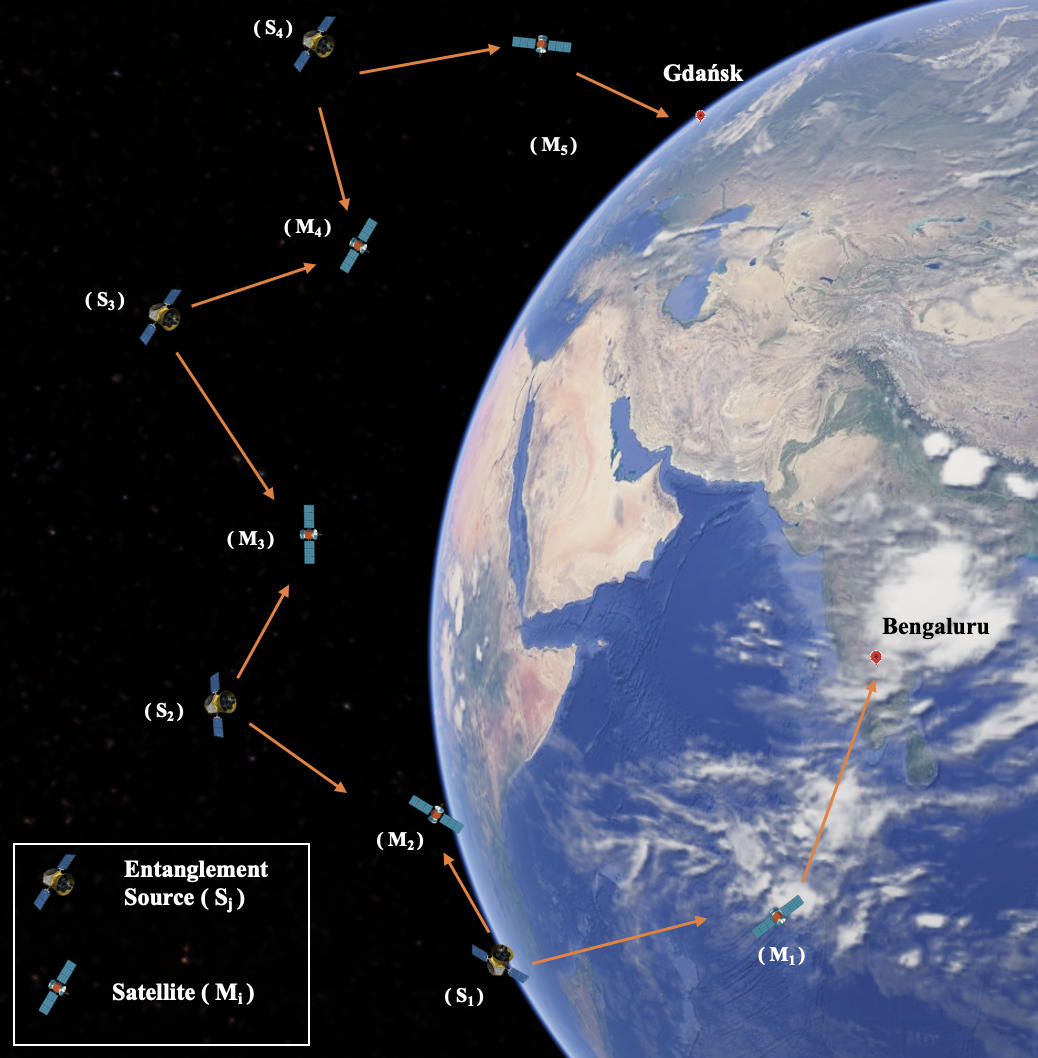}
        \caption{In this figure, we present the shortest network path between the ground stations at Bengaluru and Gda\'nsk via the global satellite network. The entangled sources are marked as $S_i$ and the satellite stations are marked as $M_i$. The shortest path has $6$ entangled sources and $5$ satellite stations. The image was created using the Google Earth software \cite{Google_2022}. (Color online)} 
        \label{fig:networkSystemRoute}
\label{fig:globalArchitecture}
\end{figure}

\begin{figure}[H]
\centering
\begin{subfigure}{0.49\linewidth}
    \includegraphics[width=0.99\textwidth]{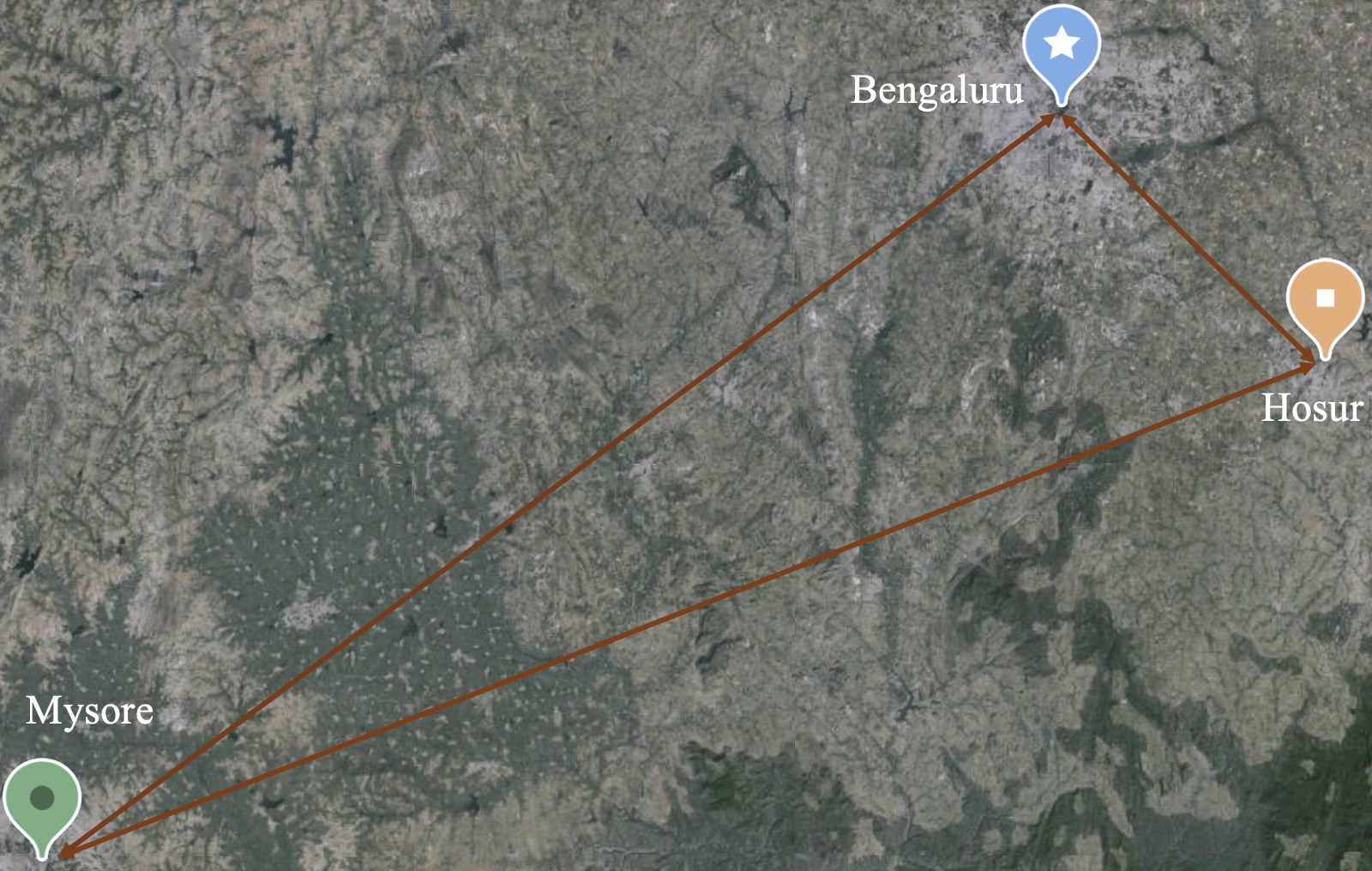}
    \caption{Bengaluru ground station}
    \label{fig:localArchitectureBlr}
\end{subfigure}
\hfill
\begin{subfigure}{0.49\linewidth}
    \includegraphics[width=0.95\textwidth]{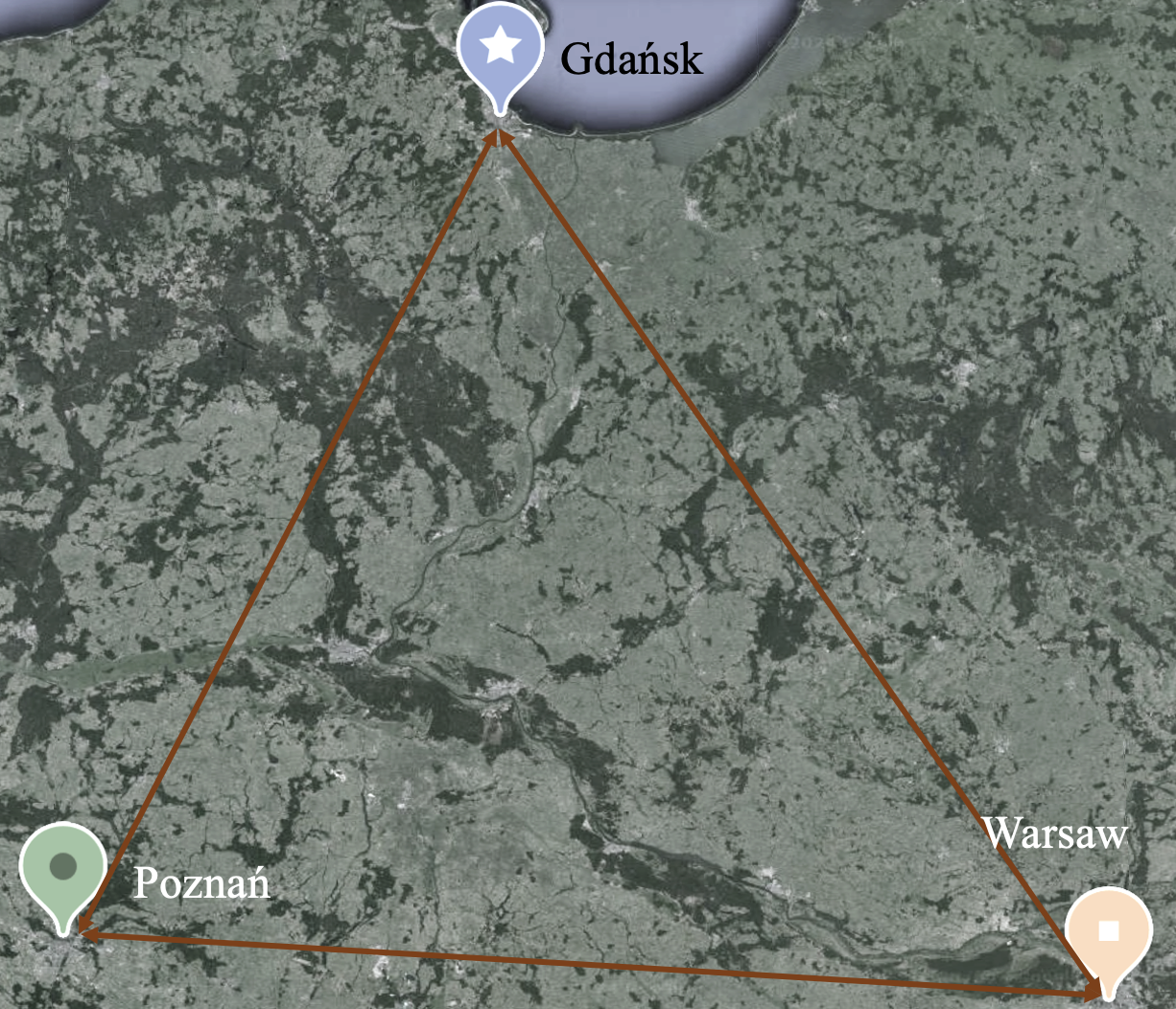}
    \caption{Gda\'nsk ground station}
    \label{fig:localArchitectureGdansk}
\end{subfigure}
\caption{In this figure, we present the local scale network architectures at (a) Bengaluru and (b) Gda\'nsk for sharing entangled pairs across nearby localities. The ground stations are connected to multiple localities via optical fibers (shown in black and orange lines). The images were created using Google Earth software \cite{Google_2022}. (Color online)}
\label{fig:networkArchitectureBetweenCities}
\end{figure}
As an example, we show in Fig.~\ref{fig:networkSystemRoute} the shortest path connecting the ISTRAC ground station located at Bengaluru to the ground station at Gda\'nsk via the global satellite network. On the local scale, different localities (end nodes) are connected to their nearest ground station via optical fibers. We show in Fig.~\ref{fig:networkArchitectureBetweenCities} the local scale network for the ground station located at Bengaluru and Gda\'nsk. The ground station at Bengaluru is connected to the localities of Hosur and Mysore. The ground station at Gda\'nsk is connected to the localities of Pozna\'n and Warsaw. 

In the satellite network, there are sources $(S_j)$ producing entangled photonic qubit pairs in the state $\Psi^+$. The source then sends the photons belonging to a pair to different neighbouring satellites via a quantum channel as shown in Fig.~\ref{fig:networkSystemRoute}. We model the quantum channel between the ground station and the satellite at the limits of the atmosphere as a qubit thermal channel (see Sec.~\ref{app:ThermalChannel} and Sec.~\ref{app:atmosphericChannel}) and that between two satellites as an erasure channel (see Sec.~\ref{app:ErasureChannel}) having erasure parameter $\eta_e$. The erasure channel parameter is assumed to be identical throughout the network. The satellite stations $M_j$ outside the limits of the atmosphere perform standard Bell measurement with success probability $q$ on their share of the qubits that they received from their neighbouring source stations. The satellites at the boundary of the atmosphere transmit their share of qubit via the atmospheric channel to the ground stations (which we call local servers). The ground stations on receiving the state store it in a quantum memory. In the quantum memory, the state evolves via a depolarising channel (see Sec.~\ref{app:DepolChannel}). The local servers distribute the quantum states to different localities (which we call clients) on request using optical fibers as can be seen in Fig.~\ref{fig:networkArchitectureBetweenCities}.   

In future, it may be that India and Poland establish communication links that share each halves of the entangled pairs between a designated hub in each country so that they can perform desired quantum tasks in collaboration. Let us assume that the Indian Space Research Organization (ISRO) headquarter\footnote{As we were finalizing the paper, we learnt that ISRO was successful in soft landing of its spacecraft \textbf{Chandrayaan-3} (Vikram lander and Pragyan rover) on the Moon's south polar region on 23-08-2023 at 18:03 IST.} at Bengaluru would like to perform delegated quantum computing~\cite{childs01,GLM08,BFK09,fitzsimons2017private,BKB+12} by securely accessing the IBM Quantum Hub at Pozna\'n~\cite{IBMPol}. For this, the ISRO headquarter can share entangled system with the IBM Quantum Hub via the shortest route in the satellite-based network. For an illustration, let the shortest network path between the ground stations at Bengaluru and Gda\'nsk have $5$ satellites and $4$ entangled photon sources as shown in Fig.~\ref{fig:networkSystemRoute}. The entanglement yield of the network is given by
\begin{equation}
    \xi_{\text{avg}} = q^3~\eta_t^{G} (\eta_e^2)^3,
\end{equation}
where $\eta_t^{G}$ is obtained from Eq.~\eqref{etat} and takes into account the local weather conditions at Bengaluru and Gda\'nsk. In the general network with $n$ satellite-to-satellite links between the two ground stations, the average entanglement yield is given by 
\begin{equation}
    \xi_{\text{avg}} = \eta_t^G (\eta_e^2)^{n-1}q^{n-1}.
\end{equation}
The ground stations store the incoming qubits in different quantum memory slots and serve the receiving traffic\footnote{Traffic is the flow of photons between the nodes of the network for enabling the network to perform a specific information processing task.} requests from different local clients following queuing discipline (see Algorithm~\ref{algo:resourceAllocation} for details of the incoming and outgoing traffic threads\footnote{Thread is a sequential execution of tasks in a process.}). The total number of memory slots available in the quantum memory is fixed. The evolution of the stored qubits in the quantum memory is modelled via a depolarising channel with channel parameter $p$. We model the quantum memory as a max-heap data structure (see Definition~\ref{def:maxHeap}) with the key as the fidelity of the stored quantum state. If the fidelity of any quantum state stored in the memory drops below a pre-defined critical value, $\eta_{\text{crit}}$, that state is deleted from the memory. The value of $\eta_{\text{crit}}$ is determined by the task or the protocol that the end parties may be interested in performing using their shared entangled state. On receiving a connection request from a single local client, the ground station transmits the latest qubit that it has received as outward traffic. Now when the ground station receives traffic requests from multiple local clients, there is the problem of optimizing the traffic flow\footnote{Traffic flow is a sequence of quantum states that is sent from the ground station to the local station.}. For such a flow problem, we introduce the following modified fair queuing algorithm.

Let us define $t_p^{(i)}$ as the time to process the $i^{\text{th}}$ quantum state in the memory, $t_i^{(i)}$ as the starting time for the transmission from the memory and $t_f^{(i)}$ as the time when the state has been transmitted from the memory. We then have,
\begin{equation}
    t_f^{(i)} = t_i^{(i)} + t_p^{(i)}.
\end{equation}
Now, there is a possibility that the state has arrived at the memory before or after the processing of $i-1$ states in this heap. In the latter case, the state arrives at an empty heap memory and is transmitted immediately if there is a traffic request. In the other case, it swims through the heap depending on its fidelity and is stored in the memory. Let us denote $t_r^{(i)}$ as the time required for the node to swim up to the root node from its current position in the heap. Then we have, 
\begin{equation}
    t_f^{(i)} = \max\bigg(t_f^{(i-1)},t_r^{(i)}\bigg) + t_p^{(i)},
\end{equation}
where $t_f^{(i-1)}$ is the time required for processing $(i-1)^\text{th}$ quantum state. If there are multiple flows, the clock advances by one tick when all the active flows receive one state following the qubit-by-qubit round-robin basis. If the quantum state has spent $s$ time steps in the memory then the average entanglement yield is given by
\begin{equation}
    \xi_{\text{avg}} \coloneqq \begin{cases}
        \eta_d^s \hspace{3pt} \eta_t^G (\eta_e^2)^{n-1} q^{n-1} & \text{if }\eta_d^s > \eta_{\text{crit}},\\
        0 & \text{otherwise,}    
    \end{cases}
\end{equation}
where $\eta_d^s$ is obtained from Eq.~\eqref{app:depolYield} and takes into account the loss in yield per time step in the quantum memory. Let us assume the ground station at Bengaluru and Gda\'nsk transmits the state via identical fibers to the ISRO headquarter and Pozna\'n, respectively. Considering the fiber losses at the two ground stations given by $\mathrm{e}^{-\alpha l_B}$ and $\mathrm{e}^{-\alpha l_M}$, the sources producing the state $\Psi^+$ with probability $\eta_s$, and assuming that the quantum state has spent $s$ time steps in the memory, the average entanglement yield is given by
\begin{equation}
    \xi_{\text{avg}} = \eta_d^s \hspace{3pt} \eta_t^G (\eta_e^2)^{n-1} (\eta_s)^{n - 1} \mathrm{e}^{-\alpha (l_B + l_M)} q^{n-1},
\end{equation}
where $l_B$ and $l_M$ are the lengths of the fibers from ISRO headquarter and Pozna\'n to the Bengaluru and Gda\'nsk ground stations respectively. Inserting $\eta_d^s$ from Eq.~\eqref{app:depolYield} and $\eta_t^G$ from Eq.~\eqref{etat}, we have the yield given by
\begin{eqnarray} \label{eq:avgYieldSatelliteNetwork}
\xi_{\text{avg}} &&=  \mathrm{e}^{-\alpha(l_B + l_M)} \left(\eta_e^2\right)^{n-1} \eta_s^{n - 1} q^{n-1} \nonumber \\
&&\bigg[(1-p)^{2 s}-\frac{1}{4} (p-2) p \left((s-1) (1-p)^{2 (s-1)}+1\right)\bigg] \nonumber \\
&&\qquad \qquad \bigg[\kappa_g(\kappa_g-1)(\eta_g-1)^2+\frac{1}{2}(1+\eta_g^2)\bigg] 
\end{eqnarray}
where $\eta_g$, $\kappa_g$ take into account the local weather condition at Bengaluru and Gda\'nsk and $s$ is the number of applications of depolarising channel in the quantum memory.
\begin{figure}
    \centering
    \includegraphics[scale=0.85]{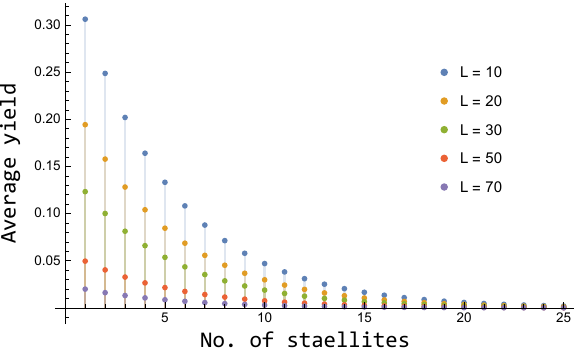}
    \caption{In this figure, we plot the average yield $\xi_{\text{avg}}$ (see Eq.~\eqref{eq:avgYieldSatelliteNetwork}) as a function of the number of satellites in the network for different values of the total optical fiber length $L = l_B + l_M$ (shown figure inset). We set $\eta_s = 0.9, s = 1, p = 0.1, \eta_e = 0.95, \eta_g = 0.5, \kappa_g = 0.5, \alpha = 1/22$ km$^{-1}$, $q = 1$. (Color online)}
    \label{fig:sampleGraph99}
\end{figure}
We plot in Fig.~\ref{fig:sampleGraph99}, the average entanglement yield $\xi_{\text{avg}}$ of the two end nodes connected by the network as a function of the number of satellite-to-satellite links $n$ between their nearest ground stations for different values of the total optical fiber length $L = (l_B + l_M)$. We observe that for a fixed value of $L$, $\xi_{\text{avg}}$ decreases with an increase in $n$. Also, for a fixed value of $n$, $\xi_{\text{avg}}$ decreases with increase in $L$. 
\begin{figure}
\centering
\includegraphics[scale=0.8]{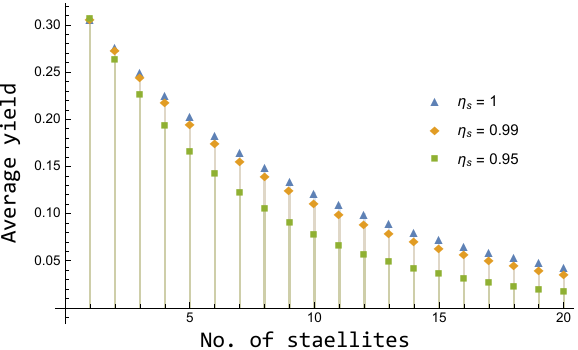}
    \caption{In this figure, we plot the average yield $\xi_{\text{avg}}$ (see Eq.~\eqref{eq:avgYieldSatelliteNetwork}) as a function of the number of satellites in the network for single photon source architectures. The single photon source architectures have the source efficiencies (a) $\eta_s = 0.95$ (b) $\eta_s = 0.99$ and (b) $\eta_s = 1.$ For this, we set $L = l_B + l_M = 10$ km$, s = 1, p = 0.95, \eta_e = 0.95, \eta_g = 0.5, \kappa_g = 0.5, \alpha = 1/22$ km$^{-1}$, $q = 1$. (Color online)}
\label{fig:sampleGraph15}
\end{figure}
Furthermore, we plot in Fig.~\ref{fig:sampleGraph15}~and~\ref{fig:sampleGraph19} the variation in $\xi_{\text{avg}}$ as a function of $n$ for different values of $\eta_s$. We observe that for a given $\eta_s$, $\xi_{\text{avg}}$ decreases with increase in $n$. Also, we observe that Quantum dot-based, atom-based, and SPDC-based entangled photon sources are best suited for the entanglement distribution network. Finally, we plot in Fig.~\ref{fig:sampleGraph999}, the variation in $\xi_{\text{avg}}$ as a function of $n$ for different values of $q$. We observe that for a given $q$, $\xi_{\text{avg}}$ decreases with increase in $n$. Also, $\xi_{\text{avg}}$ decreases with a decrease in $q$.

\begin{figure}
    \centering
    \includegraphics[scale=0.85]{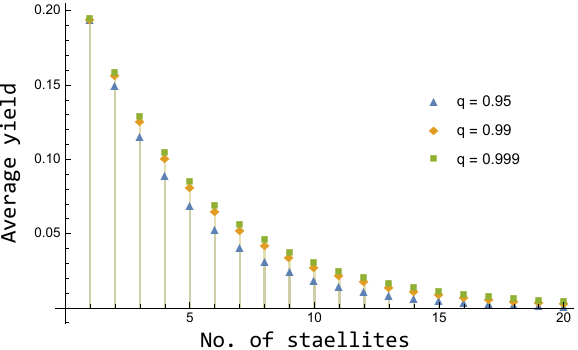}
    \caption{In this figure, we plot the average yield $\xi_{\text{avg}}$ (see Eq.~\eqref{eq:avgYieldSatelliteNetwork}) as a function of the number of satellites in the network for different values of the success probability of the standard Bell measurement denoted by $q$ (shown figure inset). We set $\eta_s = 0.9, s = 1, p = 0.1, \eta_e = 0.95, \eta_g = 0.5, \kappa_g = 0.5, \alpha = 1/22$ km$^{-1}$, $L = l_B + l_M = 20$ km. (Color online)}
    \label{fig:sampleGraph999}
\end{figure}

Our methods in general apply to sharing multipartite entangled states among different ground stations distributed at different geographical locations across the globe. To observe this, note that if certain users of the network share bipartite entangled states, then such states can be used to distill multipartite entangled states with the use of ancilla and entanglement swapping protocols~\cite{ZHW+97,de2020protocols,das2021universal}.

\subsection{Quantum processors} \label{sec:processorNetworks}
The currently available quantum processors (QPUs) use different technologies to implement the physical processor. The processors of IonQ and Honeywell utilize a trapped ion-based architecture, while IBM, Rigetti, and Google have a superconducting architecture. The superconducting architecture requires physical links between qubits that are to be entangled, while the trapped ion-based architecture does not have any topological constraint. In this section, we model the different quantum processor architectures as graphical networks (see Table~\ref{Table:Processor}) and present the robustness measures (defined in Sec.~\ref{sec:robustnessMeasure}) for such networks.   
\begin{table}
\centering
 \begin{tabular}{ p{1.7cm} | p{1.6cm} | p{2.3cm} | p{1.2cm} | p{1.3cm} }
 \hline
Name & Layout & Fidelity & Qubits & $T_1$ \\ [0.5ex]
 \hline\hline
\makecell{Sycamore \\ (Google) \\ \cite{Google}} & \makecell{square \\ lattice} & \makecell{$96.9\%$ (RO) \\ $99.85\%$ (1Q) \\ $99.64\%$ (2Q)} & 54 & 15  $\mu s$\\ [0.5ex]
 \hline
\makecell{Eagle \\ (IBM) \\ \cite{IBM,BGC23}} & \makecell{heavy \\ hexagonal} & \makecell{$99.96\%$ (RO) \\ $99.99\%$ (1Q) \\ $99.94\%$ (2Q)} & 127 & 95.57  $\mu s$\\ [0.5ex]
\hline
\makecell{Aspen-M-2 \\ (Rigetti) \\ \cite{Rigetti}} & \makecell{octagonal} & \makecell{$97.7\%$ (RO) \\ $99.8\%$ (1Q) \\ $90\%$ (2Q)} & 80 & 30.9  $\mu s$\\ [0.5ex]
\hline
\end{tabular}
\caption{The performance details of different quantum processors. In the above table, the second column provides the arrangement of the qubits in the processor. The third column provides the 1 qubit (1Q), 2 qubits (2Q), and the readout (RO) fidelity of the processors. The fourth column provides the total number of qubits in the processor, and the fifth column provides the thermal relaxation time $(T_1)$ of the qubits of the processor.}
\label{Table:Processor}
\end{table}

We consider different qubit quantum processor network architectures in square, heavy-hexagonal and octagonal layouts. The link sparsities of each unit cell\footnote{Unit cell is the smallest group of processor qubits which has the overall symmetry of the processor, and from which the entire processor can be constructed by repetition.} for these different network layouts are given by
\begin{align} \label{eq:sparsityTopology}
    \Upsilon\mathscr({N}) \coloneqq \begin{cases}
    1 - \big(\frac{16}{64}\big) \approx 0.75 & \text{octagonal,}\\
    1 - \big(\frac{8}{16}\big) \approx 0.5 & \text{square,} \\
    1 - \big(\frac{24}{144}\big) \approx 0.833 & \text{heavy hexagonal.}
    \end{cases}
\end{align}
We observe that the square structure has the lowest link sparsity, followed by octagonal and heavy hexagonal structures. Next, let the edges present in these network layouts have success probability $p$. 

The connection strength of the $i^{th}$ node in the unit cell of octagonal, heavy hexagonal and square network for a non-cooperative strategy is given by
\begin{align}
    \Gamma(\mathscr{N}) \coloneqq \begin{cases}
    p/4 & \text{octagonal,}\\
    p/2 & \text{square,} \\
    p/6 & \text{heavy hexagonal.}
    \end{cases}
\end{align}
We plot in Fig.~\ref{fig:totalConnectionStrengthProcessor} the connection strength of the $i^{th}$ node for different values of the success probability of edge. We observe that the connection strength for a given success probability of edge is highest for square networks, followed by octagonal and heavy-hexagonal networks. 
\begin{figure}
    \centering
    \includegraphics[scale=0.5]{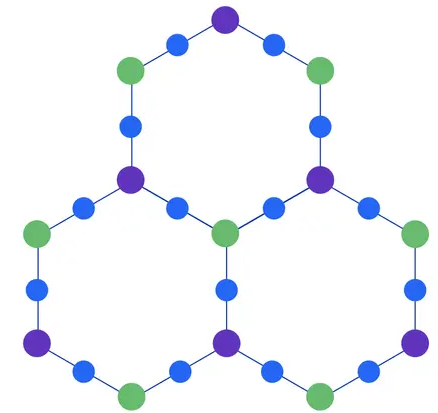}
    \caption{A slice of a quantum processor model based on heavy-hexagonal structure discussed in~\cite{NPC21}. The link sparsity of the unit cells in the network is $0.833$, and the critical nodes of the network slice are shown in green and violet. (Color online)}
    \label{fig:IBM1008}
\end{figure}

We propose a 1024-node square lattice-based quantum processor network architecture represented as a $32 \times 32$ lattice. 
\begin{figure}
    \centering
    \includegraphics[scale=0.4]{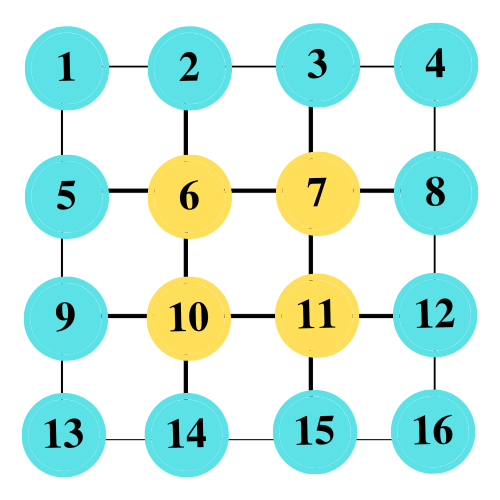}
    \caption{A $4 \times 4$ slice of a $1024$ node quantum processor architecture based on square structure. The total network layout is represented as a $32 \times 32$ lattice. The link sparsity of the network is $0.9962$, and the critical nodes of the network slice are shown in yellow. (Color online)}
    \label{SquareLattice}
\end{figure}
We show in Fig.~\ref{SquareLattice} a $4 \times 4$ slice of the lattice as a representation of the entire quantum processor. The link sparsity of the $1024$ node square network is $0.9962$. The nodes shown in yellow in Fig.~\ref{SquareLattice} are identified as critical nodes. We observe in Fig.~\ref{SquareLattice} that there are three types of nodes in the network based on the number of edges that are connected to the node. We call a node a corner node, edge node, and inner node if it shares an edge with two, three, and four other nodes, respectively. The connection strength of these three types of nodes is given by
\begin{align}
    \zeta_\mathscr{N}(v_i) \coloneqq \begin{cases}
    p/256 & \text{inner node,}\\
    3p/1024 & \text{boundary node,} \\
    p/512 & \text{corner node.} 
    \end{cases}
\end{align}

\subsection{Network propositions} \label{sec:networkPropositions}
Let us first consider a network connecting all the major airports in the world. We say that two airports are connected if there exists at least one commercial airline currently operating between them. We consider a network consisting of 3463 airports all over the globe forming the nodes of the network and 25482 edges or airline routes between these airports~\cite {AirportNetwork}. For such a network, we observe that the longest route is between Singapore, Changi International Airport, and New York John F. Kennedy International Airport, in the United States, with a distance of approximately 15331 km. The average distance between the airports in the network is approximately 1952 km. We propose a quantum network with airports as nodes and the connections between the airports as edges. We define the edge weight of the edge connecting the nodes $\{v_i,v_j\}$ of the network as   
\begin{align}
    \mathrm{w}(e_{ij}) \coloneqq \begin{cases}
    \mathrm{e}^{-L_{ij}/22} & \text{if $L_{ij} < 50$ km}\\
    0.8 & \text{if $L_{ij} \geq 50$ km}
    \end{cases}
\end{align}
where $L_{ij}$ is the distance between two airports denoted by nodes $v_i$ and $v_j$. The link sparsity of such a network is 0.99575, and the total connection strength is given by 0.99787. We observe that the most critical airports present in this network are Istanbul International Airport, Dubai International Airport, Anchorage Ted Stevens in Alaska, Beijing Capital International Airport, Chicago O'Hare International Airport, and Los Angeles International Airport. 

Let the airports of the network require to securely communicate with each other. The sharing of entangled states among the airports is a primitive for secure communication among them. Let us assume all these airports are located at the same altitude. Let the ground stations located at the airports share an entangled state using a global satellite-based mesh quantum network as described in Sec.~\ref{sec:entanglementDistributionCities}. The ground stations connect to the satellite network via the atmospheric channel modelled as a qubit thermal channel. The satellites of the network are interconnected via a qubit erasure channel. For two airports $a_1$ and $a_2$ requiring to connect, the average entanglement yield is given by
\begin{eqnarray} 
    \xi_{\text{avg}} &&= \eta_t^{a} (\eta_e^2)^{n-1} q^{n-1} \nonumber \\
    && = q^{n-1} \left(\eta_e^2\right)^{\left\lfloor \frac{L}{L_0}\right\rfloor -1} \bigg[\kappa_g(\kappa_g-1)(\eta_g-1)^2+\frac{1}{2}(1+\eta_g^2)\bigg], \nonumber \\
    \label{eq:yieldAirportNetwork} 
\end{eqnarray}
where $L$ denotes the distance between $a_1$ and $a_2$ and $q$ is the success probability of the standard Bell measurement at the satellite stations. $L_0$ denotes the distance between the nodes of the satellite network. We assume identical atmospheric conditions at $a_1$ and $a_2$ and set $\eta_e = 0.95, \eta_g = 0.5$, $\kappa_g = 0.5$ and $q = 1$. With these choices of parameters, we present in Fig.~\ref{fig:sampleGraph115} the average yield as a function of the distance between the nodes of the satellite network for different values of $L$. Furthermore, we plot in Fig.~\ref{fig:sampleGraph11125}, the variation in the average yield $\xi_{\text{avg}}$ as a function of the distance between the virtual nodes for different values of $q$. For this we set $\eta_e = 0.95$, $\eta_g = 0.5$, $\kappa_g = 0.5$ and $L = 4000$ km. The entanglement yield between the airports for different channel parameters not considered in this section can be obtained from Eq.~\eqref{eq:yieldAirportNetwork}.
\begin{figure}
    \centering
    \includegraphics[scale=0.85]{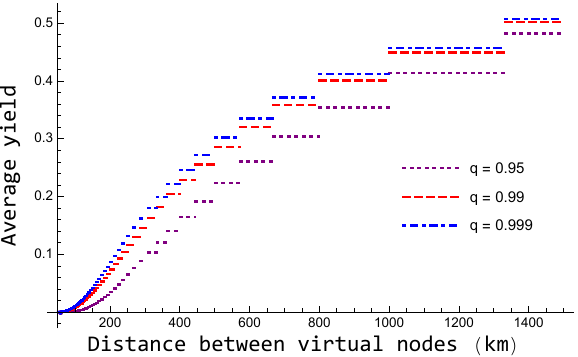}
    \caption{In this figure, we plot the average yield $\xi_{\text{avg}}$ (see Eq.~\eqref{eq:yieldAirportNetwork}) as a function of the distance between the virtual nodes $(L_0)$ for different values of $q$. For this, we set $\eta_e = 0.95, \eta_g = 0.5, \kappa_g = 0.5$ and $L = 4000$ km. (Color online)}
    \label{fig:sampleGraph11125}
\end{figure}

\begin{figure}
    \centering
    \includegraphics[scale=0.85]{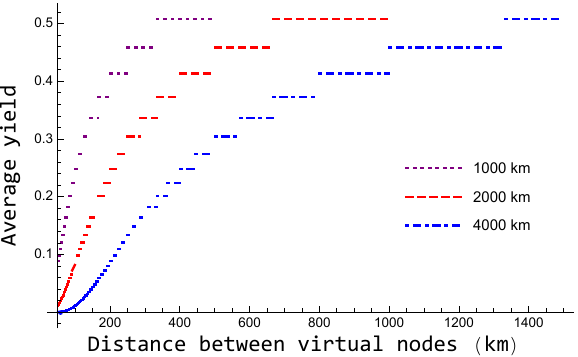}
    \caption{In this figure, we plot the average yield $\xi_{\text{avg}}$ (see Eq.~\eqref{eq:yieldAirportNetwork}) as a function of the distance between the virtual nodes for different lengths between the airports. For this, we set $\eta_e = 0.95, \eta_g = 0.5, \kappa_g = 0.5$, $q = 1$. (Color online)}
    \label{fig:sampleGraph115}
\end{figure}

\begin{figure}
    \centering
    \includegraphics[scale=0.6]{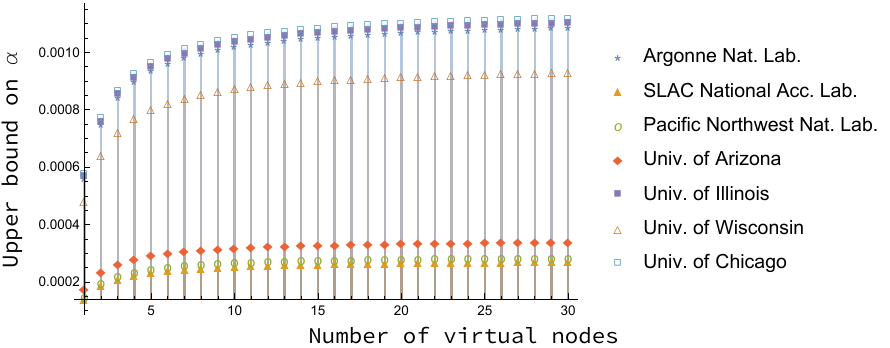}
    \caption{The upper bound on the channel parameter $\alpha$ for sharing entanglement between the U.S. Department of Energy (DoE) at Washington D.C. and some other labs~\cite{NQI} involved in The National Quantum Initiative (NQI). In this network, the DoE is the hub node while the labs are at the outer nodes. The edge connecting the hub to an outer node represents a repeater relay network. In this plot we set $q = 1$. (Color online)}
    \label{fig:upperBoundAlphaRepeaterRelay}
\end{figure}

\begin{figure}
    \centering
    \includegraphics[scale=0.7]{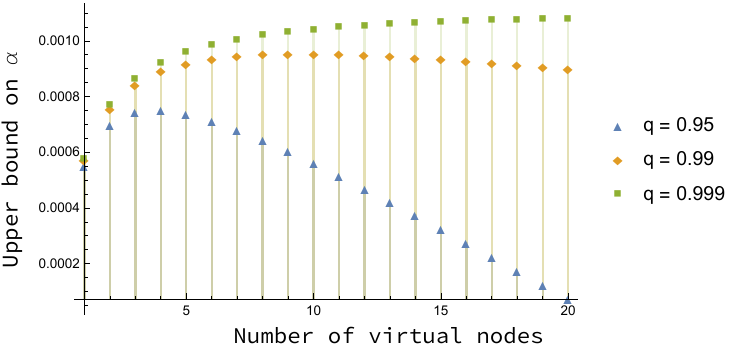}
    \caption{The upper bound on the channel parameter $\alpha$ for sharing entanglement between the U.S. Department of Energy (DoE) at Washington D.C. and the University of Chicago ($L = 952$ km) involved in The National Quantum Initiative (NQI) for different values of success probability of standard Bell measurement. In this network, the DoE is the hub node while the labs are at the outer nodes. The edge connecting the hub to an outer node represents a repeater relay network. (Color online)}
    \label{fig:upperBoundAlphaRepeaterRelayq}
\end{figure}

The quantum Internet can be used for secure communication
between a central agency and end parties. It may be
desirable here for the central agency to prevent direct
communication between the end parties. As an example, consider the U.S. Department of Energy (DoE) in Washington D.C. require to securely communicate by sharing entangled states with the major labs~\cite{NQI} that are involved in The National Quantum Initiative (NQI) using the Star network. The DoE is at the hub node of such a network and the different labs are the leaf nodes. For each edge of the network, let there be independent fiber-based repeater chain networks (described in Sec.~\ref{sec:limitationNetwork}.) Let each edge present in the network have a success probability $p = \mathrm{e}^{-\alpha L / n}$ where $\alpha$ is the channel loss parameter, $L$ is the total distance      between the DoE and the lab, and $n$ is the number of virtual nodes. Using the fact that isotropic state of visibility $q^n \lambda^{n+1}$ is entangled for $q^n \lambda^{n+1} > 1/3$, we obtain the upper bound on $\alpha$ as 
\begin{equation}
    \alpha < \frac{\log(3q^n)}{L (1 + \frac{1}{n})}.
\end{equation}
For different labs, we plot in Fig.~\ref{fig:upperBoundAlphaRepeaterRelay} the upper bound on $\alpha$ for different values of $n$. In the figure, we have considered some of the major labs, which can be extended to all other labs involved in the NQI. Furthermore, in Fig.~\ref{fig:upperBoundAlphaRepeaterRelayq}, we plot the upper bound on $\alpha$ for different values of $n$ and $q$ (see inset). For this figure we consider sharing entanglement between the U.S. Department of Energy (DoE) at Washington D.C. and the University of Chicago ($L = 952$ km).

\section{Discussion} \label{sec:discussion}
We envision that the implementation of the quantum Internet will follow a task-oriented approach. The underlying network structure at any stage of implementation is expected to provide loose coupling, meaning end users can perform information processing tasks without requiring to know the details of implementation, thereby reducing dependencies between different tasks. This requires assessing the practical limitations for implementing different tasks. 

Our work provides a step in that direction by taking a graph theory (and information theory-)based approach to analyze the scalability and robustness of the quantum Internet. Focusing on the latter part, we have provided measures for comparing the robustness and identifying the critical nodes of different network topologies. Identifying quantum processors as real-world mesh networks, we compared the robustness measures for the quantum processor architectures by Google, IBM, and Rigetti. With the vision of having a 1024-qubit quantum processor in the future, we extend the 54-qubit layout by Google to include 1024 qubits and observe the robustness of such a network. 

The network structure of the quantum Internet is determined by the information processing tasks that are implemented using it. Looking at the elementary link level, we have obtained bounds on the critical success probability for performing different tasks. Extending to a more general repeater-based network, we have obtained a trade-off between the channel length and the time interval for which the states can be stored at the nodes such that the shared state is useful for different tasks. 

Considering performing some desirable information processing tasks over lattice networks, we present a theorem specifying conditions that lead to the absence of percolation. As implications of the theorem, we highlight the constraints on network scalability and limitations of current technology for performing quantum communication and implementing DI-QKD protocols.

Looking at the specific details of implementation, considering repeater-based networks, we have provided the range of isotropic state visibility and an upper bound on the number of repeater nodes for distilling secret keys at non-zero rates via DI-QKD protocols. We have considered practical parameters like atmospheric conditions and imperfect devices in obtaining bottlenecks for implementing a satellite-based model distributing resources between far-off places. For such a network we have presented algorithms for implementing certain underlying network-related tasks such as obtaining the network layout, obtaining the network routing path and allocating resources at the network nodes. Overall, the assessment presented in this paper can be used in benchmarking the critical parameters involved in realizing the quantum Internet. 

\begin{acknowledgements}
AS thanks Keval Jain for the useful discussions. AS acknowledges the Ph.D. fellowship from the Raman Research Institute, Bangalore. MAS acknowledges the Summer Research InternSHip on Technological Innovations (SRISHTI) programme held during the summer of 2022 at IIIT Hyderabad; this program was supported by iHub-Data, a Technology Innovation Hub (TIH) established by IIIT Hyderabad as part of the National Mission on Interdisciplinary Cyber-Physical Systems (NM-ICPS) scheme of Department of Science and Technology, Government of India. KH acknowledges partial support from the Foundation for Polish Science (IRAP project, ICTQT, contract no.\ MAB/2018/5, co-financed by EU within Smart Growth Operational Programme). The `International Centre for Theory of Quantum Technologies' project (contract no.\ MAB/2018/5) is carried out within the International Research Agendas Programme of the Foundation for Polish Science co-financed by the European Union from the funds of the Smart Growth Operational Programme, axis IV: Increasing the research potential (Measure 4.3). SD acknowledges support from the Science and Engineering Research Board, Department of Science and Technology (SERB-DST), Government of India under Grant No. SRG/2023/000217. SD also thanks IIIT Hyderabad for the Faculty Seed Grant. 
\end{acknowledgements}

\appendix
\section{Preliminaries} \label{sec:preliminaries}
In this section, we introduce notations and review basic concepts and standard definitions that are used frequently in the paper. We consider quantum systems associated with the separable Hilbert spaces. The Hilbert space of a quantum system $A$ and a composite system $AB$ are denoted as $\mathscr{H}_A$ and $\mathscr{H}_{AB}\coloneqq \mathscr{H}_A\otimes\mathscr{H}_B$ respectively. Let the dimension of the Hilbert space $\mathscr{H}_A$ be denoted as $|A|\coloneqq \dim(\mathscr{H}_A)$. A quantum state of $A$ is represented by the density operator defined on $\mathscr{H}_{A}$. The density operator $\rho$ defined on $\mathscr{H}$ satisfies three necessary conditions: (i) $\rho\geq 0$, (ii) $\rho=\rho^\dag$, (iii) $\Tr[\rho]=1$. A pure state is a rank-one density operator given by $\psi_A\coloneqq \op{\psi}_A$ where $\ket{\psi}_A \in\mathscr{H}_A$. The set of density operators of $A$ is denoted by $\mathscr{D}(\mathscr{H}_A)$. We denote the density operator of a composite system $AB$ as $\rho_{AB} \in \mathscr{D}(\mathscr{H}_{AB})$; $\Tr_B[\rho_{AB}]=\rho_{A}\in\mathscr{D}(\mathscr{H}_A)$ is the reduced state of $A$. Separable states are those that can be expressed as a convex combination of product states
\begin{equation} \label{eq:productStates}
    \rho_{AB} = \sum_{x} p_x \rho_A^x \otimes \rho_B^x, 
\end{equation}
where $p_x \in [0,1]$ and $\sum_x p_x = 1$. States that cannot be expressed in the form of Eq.~\eqref{eq:productStates} are said to be entangled. If $\rho_{AB}$ is an entangled state and one user has system $A$ and the user has system $B$, then we say that these two users share entangled pair.

A maximally entangled state of bipartite system $AB$ is defined as $\Psi_{AB}^+\coloneqq \op{\Psi^+}_{AB}$
where \begin{equation}
\ket{\Psi^+}_{AB} = \frac{1}{\sqrt{d}}\sum_{i = 0}^{d-1} \ket{ii}_{AB},
\end{equation}
$d = \min\{|A|,|B|\}$ is the Schmidt-rank of the state $\Psi_{AB}^+$,  and $\{\ket{i}\}_{i=0}^{d-1}$ forms an orthonormal set of vectors (kets). We next discuss families of states called isotropic states and Werner states.
\begin{definition}({Isotropic state}~\cite{PhysRevA.59.4206})
An isotropic state $\rho_{AB}^{I}(p,d)$ is $U \otimes U^\ast$ invariant for an arbitrary unitary $U$. For $p\in [0,1]$, such a state can be written as
\begin{equation} \label{eq:iso}
    \rho_{AB}^{I}(p,d) \coloneqq p \Psi^+_{AB} + (1 - p) \frac{\mathbbm{1}_{AB} - \Psi^+_{AB}}{d^2 - 1}
\end{equation}
where $\Psi^+_{AB}$ is a maximally entangled state of Schmidt rank $d$. An Isotropic state $\rho_{AB}^{I}(p,d)$ written as in Eq.~\eqref{eq:iso} is separable iff $p \in [0,1/d]$. 
\end{definition}
\begin{remark}
We can also express isotropic states (Eq.~\eqref{eq:iso}) as
\begin{equation}
    \rho^I_{AB}\bigg(p(\lambda),d\bigg) = \lambda \Psi^+_{AB} + (1 - \lambda) \frac{\mathbbm{1}_{AB}}{d^2}
\end{equation}
for $p(\lambda)= [\lambda (d^2 - 1) + 1]/d^2$ and $\lambda \in [-1/(d^2-1),1]$. We note that $\lambda^{n}\geq 0$ for all even $n\in\mathbb{N}$. For our purposes in this work, we will be restricting $\rho^I_{AB}\bigg(p(\lambda),d\bigg)$ to the case $\lambda\in[0,1]$ without loss of generality. We call $\lambda$ as the visibility of the state $\rho^I_{AB}\bigg(p(\lambda),d\bigg)$.
\end{remark}

\begin{definition}({Werner state}~\cite{werner1989quantum}) A Werner state $\rho_{AB}^{W}(p,d)$ is $U \otimes U$ invariant for an arbitrary unitary $U$. For $p\in [0,1]$, such a state can be written as
\begin{equation} \label{def:wernerState}
    \rho_{AB}^W (p,d) := p \frac{2}{d(d+1)}\Pi_{AB}^+ + (1 - p) \frac{2}{d(d-1)}\Pi_{AB}^-  
\end{equation}
where $\Pi_{AB}^\pm := (\mathbb{I} \pm F_{AB})/2$ are the projections onto the symmetric and anti-symmetric sub-spaces of $\mathscr{H}_A$ and $\mathscr{H}_B$. $F_{AB} = \sum_{ij} \ket{i}\bra{j}_A \otimes \ket{j}\bra{i}_B$ is the SWAP operator on $A$ and $B$. A Werner state $\rho_{AB}^W (p,d)$ written as in Eq.~\eqref{def:wernerState} is separable iff $p \in [1/2,1]$. 
\end{definition}

A quantum channel $\mathcal{M}_{A\to{B}}:\mathscr{D}(\mathscr{H}_A)\to \mathscr{D}(\mathscr{H}_B)$ is a completely positive, trace-preserving map. A measurement channel $\mathcal{M}_{A'\to{AX}}$ is a quantum instrument whose action is defined as
\begin{equation}
    \mathcal{M}_{A'\to{AX}}(.) := \sum_x \mathcal{E}_{A'\to{A}}^x (.) \otimes \op{x}_X,
\end{equation}
where each $\mathcal{E}^x_{A'\to{A}}$ is a completely positive, trace nonincreasing map such that $\mathcal{M}_{A'\to{AX}}$ is a quantum channel and $X$ is a classical register that stores the measurement outcomes. A classical register $X$ is represented by a set of orthogonal quantum states $\{\op{x}_X\}$ defined on the Hilbert space $\mathscr{H}_X$. We define qubit standard Bell measurement with success probability $q$ as 
\begin{eqnarray}
\mathcal{M}_{A_1A_2 \to{X}}(.) := && q \sum_{j=1}^{4} \Tr[\Psi^{(j)}(.)\Psi^{(j)}] \op{j}_X \nonumber \\
&&+ (1 - q) \Tr[.] \otimes \op{\perp}_X, 
\end{eqnarray}
where $\{\Psi^{(j)}_{A_1A_2}\}_{j=1}^{4}$ denotes projective measurements on the set of maximally entangled states $\{\Psi^+_{A_1A_2},\Psi^-_{A_1A_2}, \Phi^+_{A_1A_2},\Phi^-_{A_1A_2}\}$ (see Appendix~\ref{app:twoQubitStates} for details) and $\ket{\perp} \perp \ket{j}$. 

\begin{definition}(Max-heap data structure \cite{leiserson1994introduction}) \label{def:maxHeap}
A max heap is a tree-based data structure that satisfies the following heap property: for any given node Y, if X is a parent of Y, then the key (the value) of X is greater than or equal to the key of Y. 
\end{definition} 
In Fig.~\ref{fig:maxHeapExample}, we present the tree and array representations of a max heap data structure with $9$ nodes. 
\begin{figure}
    \centering
    \includegraphics[scale=0.5]{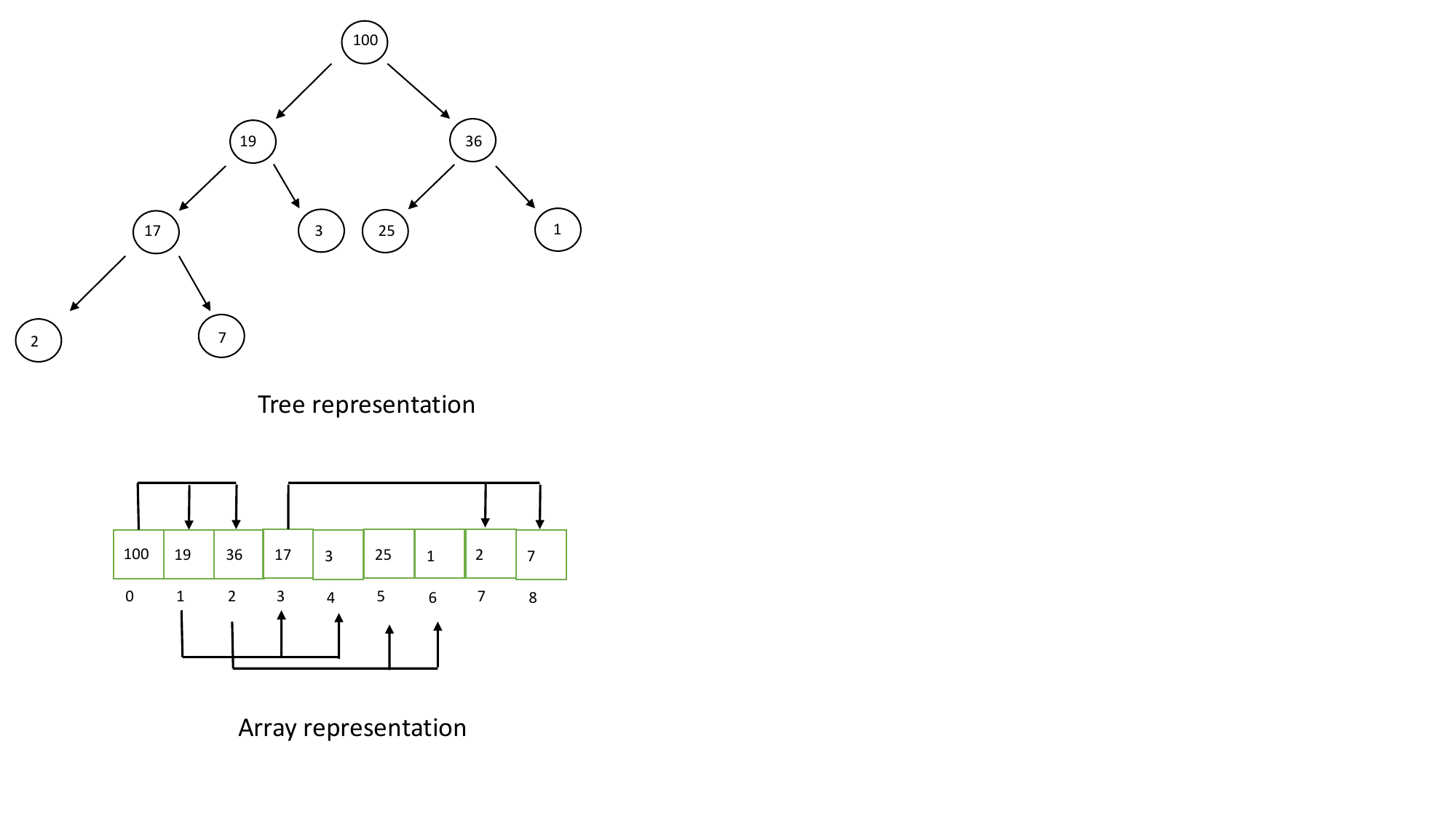}
    \caption{In this figure, we present the tree and array representations of a max-heap data structure with 9 nodes. In the tree representation, the nodes are shown in circles with the node values written inside them. In the array representation, the nodes are stored in continuous memory allocations (shown in green boxes numbered 0 to 9). The relation between the nodes is shown using arrows. We see that it satisfies the heap property: the value (or key) of a parent is always greater than its children. (Color online)}
    \label{fig:maxHeapExample}
\end{figure}

In the bipartite Bell scenario where there are two observables $\{x_1,x_2\}$ for Alice and $\{y_1,y_2\}$ for Bob, the family of tilted CHSH operators introduced in \cite{acin2012randomness} is given by
\begin{equation}
    I_\alpha^\beta \coloneqq \beta~x_1 + \alpha~x_1 y_1 + \alpha~x_1 y_2 + x_2 y_1 - x_2 y_2,
\end{equation}
where $\alpha \geq 1$ and $\beta \geq 0$. The choice of $\beta = 0$ and $\alpha = 1$ corresponds to the CHSH operator \cite{clauser1969proposed}. The local upper bound of the tilted CHSH operator is given by $\beta + 2 \alpha$. The quantum upper bound of the tilted CHSH operator is given by $2 \sqrt{(1 + \alpha^2)(1 + \frac{\beta^2}{4})}$. 

\begin{definition}(Graph isomorphism~\cite{west2000graph}) \label{def:graphIsomorphism}
The graphs $G(\mathbb{V},\mathbb{E},\mathbb{L})$ and $G'(\mathbb{V'},\mathbb{E'},\mathbb{L}')$ are isomorphic iff there exists a bijective function $f:\mathbb{V}\to{\mathbb{V'}}$ such that: 
\begin{itemize}
    \item[1.] $\forall u \in \mathbb{V}, l \in \mathbb{L}, l' \in \mathbb{L}',~l(u) = l'(f(u))$
    \item[2.] $\forall u,v \in \mathbb{V},~(u,v) \in \mathbb{E} \leftrightarrow (f(u),f(v)) \in \mathbb{E'}$
    \item[3.] $\forall (u,v) \in \mathbb{E}, l \in \mathbb{L}, l' \in \mathbb{L}',~l(u,v) = l'(f(u),f(v))$
\end{itemize}
\end{definition}

\begin{note}
In general, the logarithmic function can have base as the natural exponent $e$ or some natural number greater than or equal to $2$. In this paper, we consider the logarithmic function $\log$ to have base $2$ and natural logarithm $\ln$ to have base $e$ unless stated otherwise.
\end{note}

\section{Dual rail encoding of photons} \label{app:dualRail}
In the dual rail encoding scheme \cite{knill2001scheme}, the qubits are encoded using the optical modes\footnote{An optical mode of the photon is defined by the state space consisting of a superposition of number states.} of photons. The computational basis of the photonic qubit system $A$ encoded in the polarization modes $l_1$ and $l_2$ is given by
\begin{equation}
    \ket{H}_A \rightarrow \ket{1,0}_{l_1,l_2},~\ket{V}_A \rightarrow \ket{0,1}_{l_1,l_2}.
\end{equation}
A pure state $\ket{\psi}_{A}$ of the qubit system can be expressed in such a computational basis as
\begin{eqnarray}
\ket{\psi}_A && = \alpha \ket{1,0}_{l_1,l_2} + \beta \ket{0,1}_{l_1,l_2} \\
&& = \alpha \ket{H}_A + \beta \ket{V}_A,
\end{eqnarray}
where $\alpha, \beta \in \mathbbm{C}$ and $|\alpha|^2 +|\beta|^2 = 1.$ The bipartite entangled state $\Psi_{AB}^+$ is expressed as $\Psi_{AB}^+ = \op{\Psi^+}_{AB}$ where $\ket{\Psi^+}_{AB} = \frac{1}{\sqrt{2}}(\ket{HH}_{AB} + \ket{VV}_{AB})$.

\section{Two-qubit states} \label{app:twoQubitStates}
The maximally entangled states for the two-qubit systems can be expressed as
\begin{eqnarray}
    &&\Phi^\pm_{AB} \coloneqq  \op{\Phi^\pm}_{AB} \nonumber \\
    &&\Psi^\pm_{AB} \coloneqq \op{\Psi^\pm}_{AB} 
\end{eqnarray}
where $\ket{\Psi}^\pm_{AB}$ and $\ket{\Phi}^\pm_{AB}$ are defined as
\begin{eqnarray}
    &&\ket{\Psi}^\pm_{AB} := \frac{1}{\sqrt{2}}\bigg(\ket{00}_{AB} \pm \ket{11}_{AB}\bigg), \\
    &&\ket{\Phi}^\pm_{AB} := \frac{1}{\sqrt{2}}\bigg(\ket{01}_{AB} \pm \ket{10}_{AB}\bigg).
\end{eqnarray}
Following from Eq.~\eqref{eq:iso}, a two-qubit isotropic state can be expressed as
\begin{equation}
    \rho_{AB}^I (p,2) = \frac{1}{3}(4p - 1) \Psi^+_{AB} + \frac{4}{3} (1 - p) \frac{\mathbbm{1}_{AB}}{4}.
\end{equation}
Similarly, following from Eq.~\eqref{def:wernerState}, a two-qubit Werner state can be expressed as
\begin{equation}
    \rho_{AB}^W (p(\lambda),2) = \lambda \Psi^-_{AB} + (1 - \lambda) \frac{\mathbbm{1}_{AB}}{4},
\end{equation}
where $\lambda \in [-1/3,1]$.

\section{Two qubit Bell measurement on isotropic states} \label{app:repeaterNode}
Let there be a measurement station that performs standard Bell measurement on the halves of two-qubit isotropic states $\rho_{A_1 A_1'}^I (p'(\lambda),2)$ and $\rho_{B_1 B_1'}^I (p'(\lambda),2)$, where
\begin{equation} \label{eq:werner}
    \rho_{AB}^I (p'(\lambda),2) := \lambda \Psi^+_{AB} + (1 - \lambda) \frac{\mathbbm{1}_{AB}}{4}.
\end{equation}
We call $\lambda$ the visibility of the isotropic state $\rho_{AB}^I (p'(\lambda),2)$. Let the success probability of a standard Bell measurement be $q$. We denote the action of the noisy standard Bell measurement channel as
\begin{eqnarray}
&&\mathcal{E}_{A_1'B_1' \to I_AI_A'I_BI_B'} \bigg( \rho_{A_1A_1'}^I (p'(\lambda),2) \otimes \rho_{B_1B_1'}^I (p'(\lambda),2)\bigg) \nonumber \\
= ~&& \frac{\lambda^2 q}{4}\bigg[\Psi_{A_1B_1}^- \otimes \op{00}_{I_AI_B} \nonumber \\
&& + \Psi_{A_1B_1}^+ \otimes \op{11}_{I_AI_B} +\Phi_{A_1B_1}^- \otimes \op{22}_{I_AI_B} \nonumber \\
&& +\Phi_{A_1B_1}^+ \otimes \op{33}_{I_AI_B}\bigg] \otimes \op{00}_{I_A'I_B'} \nonumber \\
&& + (1 - \lambda^2)q\frac{\mathbbm{1}_{A_1B_1}}{4} \otimes \frac{1}{4} \sum_{i=0}^{3} \op{ii}_{I_AI_B} \otimes \op{00}_{I_A'I_B'} \nonumber \\
&& + (1 - q)\frac{\mathbbm{1}_{A_1B_1}}{4} \otimes \op{\perp}_{I_AI_B} \otimes \op{11}_{I_A'I_B'}. \label{eq:WernerBellMeasurement}
\end{eqnarray}
The flag state $\op{11}_{I_A'I_B'}$ indicates error in the standard Bell measurement with a probability $(1-q)$ and the state $\frac{\mathbbm{1}_{A_1B_1}}{4}$ is left on $\mathscr{H}_{A_1B_1}$. The flag state $\op{00}_{I_A'I_B'}$ indicates a successful standard Bell measurement with probability $q$. If error corrections are possible post-Bell measurement, then from a single use of repeater we have the state
\begin{equation}
    \rho_{AB}^{I}(p(q\lambda^2),2) = q\lambda^2~\Psi^+_{AB} + (1 - q\lambda^2) \frac{\mathbbm{1}_{AB}}{4}. 
\end{equation}

\section{Sparsity index}
Taking motivation from the Gini index of network graphs~\cite{gini1912variabilita} and using the definition of the connection strength of the nodes in a network, we define the sparsity index of the network.
\begin{definition}
Consider the plot with the cumulative sum of the number of nodes in the network $\mathscr{N}(G(\mathbb{V},\mathbb{E}))$ along the horizontal axis and the cumulative sum of $\zeta_\mathscr{N}(v_i)$s along the vertical axis. The sparsity index of the network $\mathscr{N}$ is given by
\begin{equation} \label{eq:sparsityIndex}
    \Xi(\mathscr{N}) = \frac{\text{area enclosed by the curve and x-axis}}{\text{area enclosed by the $45^{\circ}$ line}}. 
\end{equation}
\end{definition}
As an example, the star network $\mathscr{N}_s(G_s(\mathbb{V},\mathbb{E}))$ (see Eq.~\eqref{eq:starNet}) with $N_v = 8$ and $p = 0.5$ have sparsity index $0.1934$ for non-cooperative strategy and $0.3779$ for cooperative strategy. The sparsity index $\Xi(\mathscr{N})$ of the network $\mathscr{N}$ measures the extent of inequality in the distribution of connection strength among the nodes of the network. High values of $\Xi$ indicate a high cumulative percentage of connection strength for the cumulative fractile of the nodes. Typically it is desirable for the network to have high values of sparsity index.

\section{Actions of some quantum channels} \label{app:channel}
\subsection{The qubit depolarizing channel} \label{app:DepolChannel}
The action of a qubit depolarizing channel \cite{bennett1996mixed} on the qubit density operator $\rho_A$ is given by 
\begin{equation} \label{depolChannelDef}
    \mathcal{D}_{A \rightarrow B} (\rho_A) = (1 - p) \hspace{5pt} \rho_B + p \hspace{5pt} \frac{\mathbbm{1}_B}{2},
\end{equation}
where $p\in[0,\frac{4}{3}]$ is the channel parameter and $\mathbbm{1}_B$ is the identity operator. In the formalism given by Kraus \cite{kraus1971general} and Choi \cite{choi1975completely} the effect of the channel can be defined by the following operators \cite{romero2012simple},
\begin{eqnarray}
&&\mathcal{K}_0 = \sqrt{1 - \frac{3 p}{4}} \hspace{5pt} \mathbbm{1} \\
\text{and } && \mathcal{K}_i = \frac{\sqrt{p}}{2} \hspace{5pt} \mathbbm{\sigma}_i \text{ with } i \in \{1,2,3\},
\end{eqnarray}
where $\sigma_i$ are the Pauli matrices. The action of the depolarizing channel on each of the systems $A$ and $\Bar{A}$ is given by,
\begin{eqnarray}
&& \mathcal{D}_{A\rightarrow B} \hspace{2pt} \otimes \hspace{2pt} \mathcal{D}_{\Bar{A}\rightarrow \Bar{B}} \left( \Psi_{A\Bar{A}}^+ \right) \nonumber\\
&& = \sum_{i = 0}^{4} \sum_{j = 0}^{4} \left(\mathcal{K}_i \otimes \mathcal{K}_j\right)  \Psi_{A\Bar{A}}^+  \left(\mathcal{K}_i \otimes \mathcal{K}_j\right)^\dagger \\
&& = (-1 + p)^2~\Psi_{B\Bar{B}}^+ + \hspace{5pt} p \hspace{5pt} (2 - p) \hspace{5pt} \frac{\mathbbm{1}_{B\Bar{B}}}{4}  \label{depolState}
\end{eqnarray}

Noting that $\mathbbm{1}_{B\Bar{B}}$ can be expressed as the sum of four maximally entangled states, we have the fidelity of the final state to the starting state as $\eta_d = 1 - \frac{3}{4} \hspace{5pt} p \hspace{5pt} (2-p).$ Following the approach of \cite{inesta2023optimal} and applying the depolarising channel $n$ times, the final state after the evolution through the channel is  
\begin{equation}
    \rho_{B\Bar{B}}^n = \left(\mathcal{D}_{A\rightarrow B} \hspace{2pt} \otimes \hspace{2pt} \mathcal{D}_{A\rightarrow B}\right)^{\otimes n} \left( \Psi_{A\Bar{A}}^+ \right). \nonumber
\end{equation}
This state $\rho_{B\Bar{B}}^n$ has a fidelity with $\Psi_{B\Bar{B}}^+$ given by
\begin{equation} \label{app:depolYield}
    \eta_d^n = (1-p)^{2 n}-\frac{1}{4} (p-2) p \left((n-1) (1-p)^{2 (n-1)}+1\right),
\end{equation}
where it can be seen that $\eta_d^1 = \eta_d$. We plot in Fig.~\ref{Fig:delopFidelity}, the variation of $\eta_d^n$ (using Eq.~\eqref{app:depolYield}) as a function of the number of applications $n$ of the depolarising channel and the depolarising channel parameter $p$.
\begin{figure}
    \centering
    \includegraphics[scale=0.5]{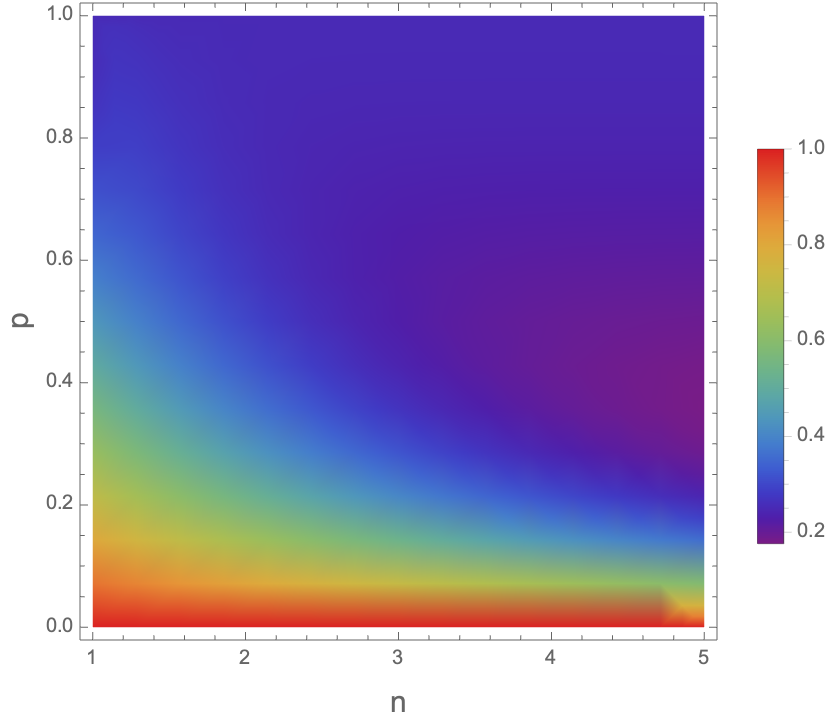}
    \caption{In this figure, we plot the variation of $\eta_d^n$ as a function of $n$ and $p$. (Color online)}
    \label{Fig:delopFidelity}
\end{figure}
\subsection{The qubit erasure channel} \label{app:ErasureChannel}
The action of a qubit erasure channel~\cite{grassl1997codes} on an input density operator $\rho_A$ is given by
\begin{equation}
    \mathcal{E}_{A\rightarrow B} (\rho_A) = \eta_e \rho_B + (1 - \eta_e) \Tr[\rho_B] \op{e}_B.
\end{equation}
The action of the erasure channel on the qubit system is such that it outputs the exact input state with probability $\eta$ or with probability $1-\eta$ replaces it with erasure state $\ket{e}$, where $\ket{e}$ is the vacuum state. The action of the channel on the maximally entangled state $\Psi_{A\Bar{A}}^+$ in the two-qubit space $A\Bar{A}$ is given by 
\begin{eqnarray}
\mathcal{E}_{A\rightarrow B} \hspace{5pt} \otimes \hspace{5pt} &&\mathcal{E}_{\Bar{A}\rightarrow \Bar{B}} \hspace{5pt} (\Psi_{A\Bar{A}}^+) \nonumber \\
&&= \eta_e^2 \Psi_{B\Bar{B}}^+ + (1 - \eta_e^2) \Psi_{B\Bar{B}}^\perp, \label{erasureState}
\end{eqnarray}
where 
\begin{eqnarray}
    &&\Psi_{B\Bar{B}}^\perp \coloneqq \frac{\eta_e}{1+\eta_e} \bigg(\frac{1}{2}\mathbbm{1}_B \otimes \op{e}_{\Bar{B}} + \op{e}_B \otimes \frac{1}{2} \mathbbm{1}_{\Bar{B}}\bigg) \nonumber \\
    && \qquad \qquad \qquad \qquad + \frac{1 - \eta_e}{1 + \eta_e} \op{e}_B \otimes \op{e}_{\Bar{B}}
\end{eqnarray}
is a state orthogonal to $\Psi_{B\Bar{B}}^+$.

Consider two sources creating pairs of entangled state $\Psi^+$. The first and second source distributes the entangled pairs to node pairs $(v_1,v_2)$ and $(v_2,v_3)$ respectively via erasure channels. The node $v_2$ then performs a standard Bell measurement on its share of states. Assuming error correction is possible post-measurement, the node pair $(v_1,v_3)$ share the state $\Psi^+$ with probability $\eta_e^2.$

\subsection{The qubit thermal channel} \label{app:ThermalChannel}
The action of a qubit thermal channel on the density operator $\rho_A$ is given by
\begin{equation}
    \mathcal{L}^{\eta_g,\kappa_g}_{A\rightarrow B} (\rho_A) = \Tr_E [ U_{\eta_g} \hspace{5pt} (\rho_A \otimes \rho_E) \hspace{5pt} U_{\eta_g} ^\dagger ],
\end{equation}
where $\rho_E$ is the density operator of the environment given by
\begin{equation}
    \rho_E = (1 - \kappa_g) \ket{0}\bra{0}_E + \kappa_g \ket{1} \bra{1}_E.
\end{equation}
The qubit thermal channel is modelled by the interaction of a qubit system $\rho_A$ with the environment $\rho_E$ at a lossy beamsplitter having transmittance $\eta_g$ (see Eq.~\eqref{eq:transmittanceAtmosphere}). The evolution through the beamsplitter is via the unitary $U_{\eta_g}$ expressed as
\begin{equation}
    U_{\eta_g} =  \left(
\begin{array}{cccc}
 1 & 0 & 0 & 0 \\
 0 & \sqrt{\eta_g} & \sqrt{1 - \eta_g} & 0 \\
 0 & -\sqrt{1 - \eta_g} & \sqrt{\eta_g} & 0 \\
 0 & 0 & 0 & 1 \\
\end{array}
\right).
\end{equation}
The Generalised Amplitude Damping Channel (GADC) is equivalent to the qubit thermal channel up to the reparameterization $\mathcal{A}^{\eta_g,\kappa_g}_{A\rightarrow B} (\rho_A) \equiv \mathcal{L}^{1-\eta_g,\kappa_g}_{A\rightarrow B}(\rho_A)$ with $\eta_g \in [0,1]$ and $\kappa_g \in [0,1].$ The effect of the thermal channel can be defined by the following Kraus operators in the standard basis
\begin{eqnarray}
&&\widetilde{\mathcal{A}_1} = \sqrt{1-\kappa_g} (\ket{0}\bra{0} + \sqrt{\eta_g} \ket{1}\bra{1}) \\
&&\widetilde{\mathcal{A}_2} = \sqrt{(1-\eta_g)(1-\kappa_g)} \ket{0}\bra{1} \\
&&\widetilde{\mathcal{A}_3} = \sqrt{\kappa_g} (\sqrt{\eta_g} \ket{0}\bra{0} + \ket{1}\bra{1}) \\
&&\widetilde{\mathcal{A}_4} = \sqrt{\kappa_g (1-\eta_g)} \ket{1}\bra{0}.
\end{eqnarray}
The action of the thermal channel on each of the systems $A$ and $\Bar{A}$ is given by,
\begin{eqnarray}
\mathcal{\tau}_{B\Bar{B}}^{\eta_g,\kappa_g} \hspace{5pt} &&:= \mathcal{L}^{\eta_g,\kappa_g}_{A\rightarrow B} \hspace{2pt} \otimes \hspace{2pt} \mathcal{L}^{\eta_g,\kappa_g}_{\Bar{A}\rightarrow \Bar{B}} \left( \Psi^+_{A\Bar{A}} \right) \nonumber\\
&& = \sum_{i = 0}^{4} \sum_{j = 0}^{4} \left(\widetilde{\mathcal{A}_i} \otimes \widetilde{\mathcal{A}_j}\right) \left( \Psi^+_{A\Bar{A}} \right) \left(\widetilde{\mathcal{A}_i} \otimes \widetilde{\mathcal{A}_j}\right)^\dagger \nonumber \\
&&= \eta_g~\Psi^+_{B\Bar{B}} \nonumber \\
&& \hspace{10pt} + \hspace{5pt} (1-\eta_g)(\kappa_g-1)\bigg(\kappa_g(1-\eta_g)-1\bigg) \ket{00} \bra{00}_{B\Bar{B}} \nonumber \\
&& \hspace{10pt} + \hspace{5pt} \kappa_g (-1+\eta_g)\bigg(\kappa_g(-1+\eta_g)-\eta_g\bigg) \ket{11} \bra{11}_{B\Bar{B}} \nonumber \\
&& \hspace{10pt} + \hspace{5pt} \kappa_g (1 - \kappa_g) (1-\eta_g) ^2 \left( \ket{01} \bra{01}_{B\Bar{B}} + \ket{10} \bra{10}_{B\Bar{B}} \right). \nonumber \\ \label{GADCstate}
\end{eqnarray}
The state $\mathcal{\tau}_{B\Bar{B}}^{\eta_g,\kappa_g}$ has a fidelity with $\Psi^+_{B\Bar{B}}$ given by
\begin{eqnarray}
    \eta_t = \frac{1}{2}(1+\eta_g^2) + \kappa_g(\kappa_g-1)(1-\eta_g)^2. \label{etat}
\end{eqnarray}
\begin{figure}
    \centering
    \includegraphics[scale=0.5]{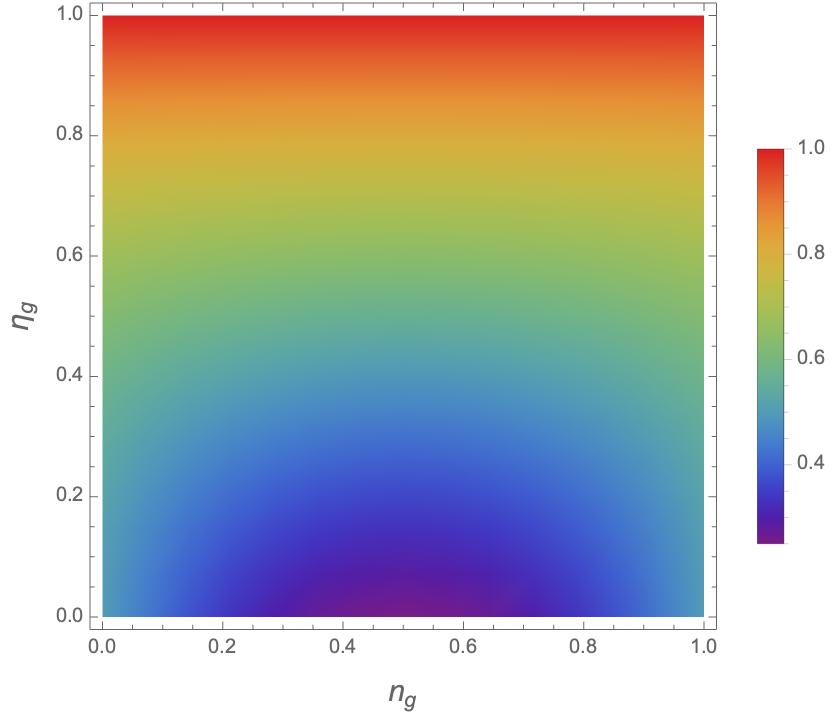}
    \caption{In this figure, we plot the variation of $\eta_t$ as a function of $\kappa_g$ and $\eta_g$. (Color online)}
    \label{Fig:GADCFidelity}
\end{figure}
We plot in Fig~\ref{Fig:GADCFidelity}, the variation of $\eta_t$ as a function of the channel parameters $\eta_g$ and $\kappa_g$. 

\section{The atmospheric channel} \label{app:atmosphericChannel}
The losses in the transmission of optical signals via an optical fiber are greater than that for free space transmission. In the vacuum space above the earth's atmosphere, the losses are nearly negligible. The non-birefringent nature of the atmosphere causes negligible change to the polarization state of the photons passing through it. These observations motivate the use of space and satellite technologies in establishing an entanglement distribution network using such channels~\cite{lu2022micius}. We observe in Fig.~\ref{fig:transmissionFiberSpace}, that the losses in the satellite-based free-space channel is much less compared to fiber-based channels for distances greater than $~ 70 \text{ km.}$  
\newline 
\begin{figure}[H]
    \centering
    \includegraphics[scale=0.4]{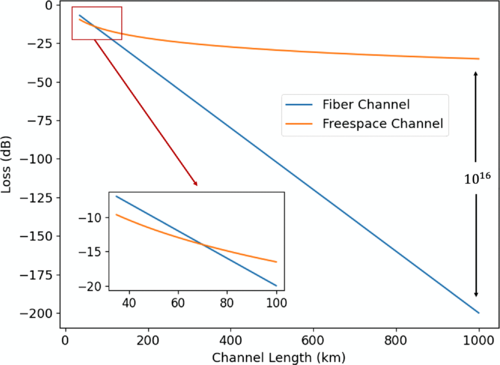}
    \caption{The comparison of losses in fiber and free-space channels as a function of channel length as discussed and plotted in \cite{lu2022micius}. It is observed that the free-space channel is advantageous for distances over 70 km. (Color online)}
    \label{fig:transmissionFiberSpace}
\end{figure}
The network architecture should be a hybrid satellite-optical fiber based model with ground-based global nodes at different geographical locations that are connected to different local nodes at small distances via optical fibers. These global nodes will be connected to the inter-satellite network. This architecture can in principle be extended to deep space allowing the possibility of sharing entanglement between nodes on Earth and the moon.

The factors affecting the transmission of optical signals between a satellite and a ground station are analyzed next. We obtain the efficiency in transmission $\xi_{\text{eff}}$ considering losses due to (a) inefficiencies in transmitting $(\xi_t)$ and receiving systems $(\xi_r)$, (b) beam diffraction $(\xi_d)$, (c) air turbulence $(\xi_{at})$, (d) mispointing $(\xi_p)$ and (e) atmospheric absorption $(\xi_{as})$. The total transmittance is given by 
\begin{equation}
    \xi_{\text{eff}} = \xi_t \hspace{3pt} \xi_r \hspace{3pt} \xi_d \hspace{3pt} \xi_{at} \hspace{3pt} \xi_p \hspace{3pt} \xi_{as} \hspace{3pt}. \label{app:xiEff}
\end{equation}
The diffraction of an optical beam depends on the beam's spatial mode, wavelength and aperture of the telescope. Assuming a Gaussian beam from the source with a waist radius of $\omega_0$, the radius at a distance $z$ is given by $\omega_d(z) = \omega_0 \sqrt{1 + (z/z_R)^2}$ with $z_R$ being the Rayleigh range. If the aperture radius of the telescope is $r$, then the receiving efficiency is given by \cite{lu2022micius}
\begin{equation}
    \xi_d = 1 - \exp{-\frac{2 r^2}{\omega_d^2}}. \label{app:xiD}
\end{equation}
The turbulence in the atmosphere induces inhomogeneity in the refractive index which changes the direction of the propagating beam. It was shown in  \cite{vasylyev2016atmospheric} that large-scale turbulence causes beam deflection while small-scale turbulence induces beam broadening. At the receiver end, it was shown in \cite{dios2004scintillation} that the average long-term accumulation of the moving spots shows a Gaussian distribution with an equivalent spot radius of $\omega_{\text{at}}(z) = \omega_d(z) \sqrt{1 + 1.33 \sigma_R^2 \Lambda^{5/6}},$ where $\sigma_R^2$ is the Rytov variance for plane wave and $\Lambda$ is the Fresnel ratio of the beam at the receiver. The receiving efficiency is given by
\begin{equation}
    \xi_{at} = 1 - \exp{-\frac{2 r^2}{\omega_{\text{at}}^2}}. \label{app:xiAt}
\end{equation}
Next, for the satellite moving at a high speed a high-precision and high-bandwidth acquisition, pointing and tracking (APT) system generally consisting of coarse and fine tracking systems is required. A combination of closed-loop coarse tracking having a large field of view along with fine tracking having a small field of view is generally used. The pointing error induces a spot jitter with the instantaneous spot following a Rice intensity distribution. It was shown in \cite{toyoshima1998far} that the pointing efficiency is given by 
\begin{equation}
    \xi_p = 1 - \frac{\omega_{\text{at}}^2}{\omega_{\text{at}}^2 + 4 \hspace{3pt} \sigma_p^2}, \label{app:xiP}
\end{equation}
where $\sigma_p$ is the variance of the Gaussian pointing probability distribution.

\begin{figure}
    \includegraphics[scale=0.8]{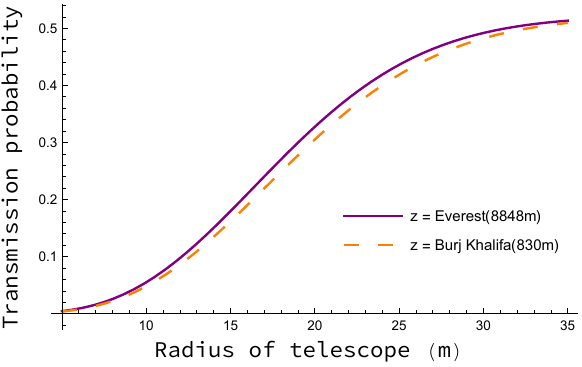}
    \caption{In this figure, we plot the variation of the transmission probability as a function of the radius of the telescope for different receiving station altitudes. For this, we have considered a $780$ nm source with $\omega_0 = 0.0021$ and quality factor $1$. We have set $z_R = 17.8, \sigma_R = 0.1, \Lambda = 0.1, \xi_r = 0.99, \xi_t = 0.99, \xi_{as} = 0.5$ and $\eta = 0.95$. (Color online)}
    \label{fig:transmissionAtmosphereChanel1}
\end{figure}
Inserting Eq.~\eqref{app:xiD},~Eq.~\eqref{app:xiAt},~and~Eq.~\eqref{app:xiP} in Eq.~\eqref{app:xiEff} and imposing the condition that $\sigma_p = \eta \omega_{at}$ we obtain 
\begin{eqnarray}
&&\xi_{\text{eff}} = \frac{\eta ^2 \xi_{as}\xi_r\xi_t}{\eta ^2+0.25} \bigg(1-\exp{-\frac{2 r^2 z_R^2}{w^2 (z^2+z_R^2)}}\bigg) \nonumber \\
&&\left(1- \exp{-\frac{2 r^2}{w^2 (1.33 \Lambda ^{5/6} \sigma_R^2+1) (\frac{z^2}{z_R^2}+1)}}\right) \label{eq:transmittanceAtmosphere}
\end{eqnarray}

Consider a 780 nm source with a beam waist radius $\omega_0$ of 0.0021 m and quality factor 1. The source has a Rayleigh length $(z_R) = 17.8$ m. Let the channel have Rytov variance $(\sigma_R) = 0.1,$ and the Fresnel ratio of the beam at the receiver end is $(\Lambda) = 0.1.$ Let the efficiency of the receiving unit be $(\xi_r) = 0.99,$ that of the transmitting source be $(\xi_t) = 0.99.$ Also let the probability of successful transmission after atmospheric absorption be $(\xi_{as}) = 0.5$. Assuming $\eta = 0.95,$ we plot in Fig.~\ref{fig:transmissionAtmosphereChanel1} the transmission probability through the atmosphere as a function of the radius of the receiving telescope for different altitudes.

\section{Entanglement distribution across cities}

We plot in Fig.~\ref{fig:sampleGraph19} the variation of $\xi_{\text{avg}}$ as a function of the number of satellites in the network for different currently available single photon sources.

\begin{figure}
\centering
\includegraphics[scale=0.8]{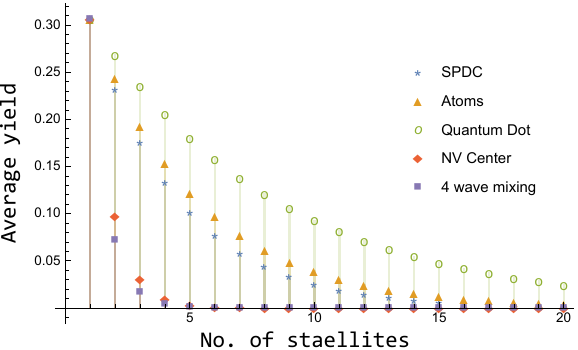}
    \label{fig:currentTechNetworkYield}
    \caption{In this figure, we plot the average yield $\xi_{\text{avg}}$ (see Eq.~\eqref{eq:avgYieldSatelliteNetwork}) as a function of the number of satellites in the network for different single photon source architectures. The single photon source architectures have the source efficiencies (a) $\eta_s = 0.84$ (SPDC \cite{altepeter2005phase}) (b) $\eta_s = 0.88$ (Atoms \cite{barros2009deterministic}) (c) $\eta_s = 0.97$ (Quantum Dots \cite{press2007photon}) (d) $\eta_s = 0.35$ (NV Center \cite{andersen2017ultrabright}) and (e) $\eta_s = 0.26$ (4 wave mixing \cite{smith2009photon}). For this, we set $L = l_B + l_M = 10$ km$, s = 1, p = 0.1, \eta_e = 0.95, \eta_g = 0.5, \kappa_g = 0.5, \alpha = 1/22$ km$^{-1}$, $q = 1$. (Color online)}
\label{fig:sampleGraph19}
\end{figure}

\section{Limitations of linear repeater networks} \label{sec:repeaterRelay}
Let us consider two distant parties, Alice and Bob requiring to share an entangled state using the standard linear DI key repeater chain as described in Sec.~\ref{sec:limitationNetwork}. Assuming that there are $n$ relay stations between Alice and Bob, they share an isotropic state
\begin{equation}
    \rho_{AB}^I(p(q^n\lambda^{n+1}),2) = q^n\lambda^{n+1}~\Psi^+_{AB} + \frac{1}{4} (1 - q^n\lambda^{n+1}) \mathbbm{1}_{AB}
\end{equation}
of visibility $q^n\lambda^{n+1}$ where $q$ is the success probability of the standard Bell measurement performed by the relay stations. In the following propositions, we present the bound on the number of virtual nodes in the network such that the state $\rho_{AB}^I(p(q^n\lambda^{n+1}),2)$ (a) is useful for teleportation, (b) can violate the Bell-CHSH inequality and (c) is entangled. 
\begin{proposition}
The state $\rho_{AB}^I(p(q^n\lambda^{n+1}),2)$ can be used to perform teleportation protocol when 
\begin{equation}
    n < \floor[\Bigg]{\frac{\log(3\lambda)}{\log(1/(q\lambda))}}.
\end{equation}
\end{proposition}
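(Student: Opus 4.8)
The plan is to reduce the claim to the teleportation-usefulness criterion for two-qubit isotropic states recorded in Sec.~\ref{sec:criticalParametersForNetworks}, followed by a short logarithmic rearrangement. First I would recall that, by Eq.~\eqref{sec:pracPrototypeStatedState}, once the $n$ relay stations have each performed a standard Bell measurement with success probability $q$, the end nodes share the two-qubit isotropic state $\rho_{AB}^I(p(q^n\lambda^{n+1}),2) = q^n\lambda^{n+1}\,\Psi^+_{AB} + (1-q^n\lambda^{n+1})\,\mathbbm{1}_{AB}/4$; a direct computation of its overlap with $\Psi^+_{AB}$ gives singlet fraction $f = q^n\lambda^{n+1} + \frac14(1-q^n\lambda^{n+1}) = \frac14(3q^n\lambda^{n+1}+1)$.

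Next I would invoke the standard-teleportation bounds for a $d\times d$ system with $d=2$~\cite{HHH99}: the maximal achievable fidelity is $F = (fd+1)/(d+1) = (2f+1)/3$ and the optimal classical fidelity is $F_{cl} = 2/(d+1) = 2/3$. Thus the shared state beats the classical bound (and so is useful for teleportation) precisely when $f > 1/2$, which, substituting the expression for $f$, is equivalent to $3q^n\lambda^{n+1} > 1$, i.e.\ $q^n\lambda^{n+1} > 1/3$. (Equivalently, this is just the statement that the two-qubit isotropic state of visibility $q^n\lambda^{n+1}$ is entangled, cf.~Eq.~\eqref{eq:iso}.)

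It then remains to convert $q^n\lambda^{n+1} > 1/3$ into a bound on $n$. Taking logarithms to base $2$ gives $n\log q + (n+1)\log\lambda > -\log 3$, i.e.\ $n\log(q\lambda) > -\log(3\lambda)$. Since $q\le 1$ and $0<\lambda<1$ we have $q\lambda < 1$, so $\log(q\lambda) < 0$; dividing by it reverses the inequality and produces $n < \log(3\lambda)/\log(1/(q\lambda))$. Because $\floor{x}\le x$, the displayed hypothesis $n < \floor{\log(3\lambda)/\log(1/(q\lambda))}$ implies this strict bound, hence $q^n\lambda^{n+1} > 1/3$, which is exactly what was needed.

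I do not anticipate a real obstacle in this argument; the only places needing care are the sign of $\log(q\lambda)$ when dividing (which is what turns $\log(q\lambda)$ into $\log(1/(q\lambda))$ and flips the inequality), the implicit nondegeneracy assumptions $q\le 1$ and $0<\lambda<1$ that make the logarithmic step legitimate and the bound nonvacuous (the right-hand side is positive only when $3\lambda>1$, i.e.\ when the elementary links are themselves entangled), and the one-sided use of the floor, so that the statement is a sufficient condition rather than a sharp characterization.
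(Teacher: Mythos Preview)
Your argument is correct and complete. It differs from the paper's own proof only in the teleportation-usefulness criterion invoked: the paper computes the correlation matrix $T$ with entries $t_{nm}=\Tr[\rho\,\sigma_n\otimes\sigma_m]$ and applies the Horodecki criterion $N(\rho)=\sum_{k=1}^3\sqrt{u_k}>1$ from~\cite{horodecki1996teleportation}, whereas you go through the singlet fraction $f$ and the fidelity formula $F=(fd+1)/(d+1)$ from~\cite{HHH99} (already used in Sec.~\ref{sec:criticalParametersForNetworks}). For the isotropic state with visibility $\mu=q^n\lambda^{n+1}$ one has $T=\mu\,\mathrm{diag}(1,-1,1)$, hence $N(\rho)=3\mu$, so the paper's route yields exactly the same threshold $3q^n\lambda^{n+1}>1$ that you obtain from $f>1/2$; the logarithmic rearrangement is then identical. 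Your route is a bit more direct for isotropic states and ties the result cleanly to the $p_\ast^{\mathrm{tel}}>1/d$ discussion in the main text, while the paper's $T$-matrix route has the advantage of being the general two-qubit criterion (and reuses the same machinery as the Bell--CHSH proposition that follows).
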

\begin{proof}
Let us have $u_k$ as the eigenvalues of the matrix $T^{\dagger}T$ where the $T$ matrix is formed by the elements $t_{nm} = \Tr[\rho_{AB}^I(p(q^n\lambda^{n+1}),2) \hspace{3pt} \sigma_n \otimes \sigma_m]$ where $\sigma_j$ denotes the Pauli matrices. We then define the quantity $N(\rho_{AB}) = \sum_{k=1}^{3} \sqrt{u_k}$. The state $\rho_{AB}^I(p(q^n\lambda^{n+1}),2)$ is useful for teleportation for values of $N(\rho_{AB})$ greater than $1$ \cite{horodecki1996teleportation}. This then implies $n < \floor[\Bigg]{\frac{\log(3\lambda)}{\log(1/(q\lambda))}}.$
\end{proof}
We plot in Fig.~\ref{fig:allowedNodesTeleportation} the maximum number of repeater stations that can be allowed for a given value of $\lambda$ and setting $q \in \{0.625~\text{\cite{bayerbach2022bell}}, 0.95, 0.99\}$ to implement a teleportation protocol successfully. 
\begin{figure}
    \centering
    \includegraphics[scale=0.89]{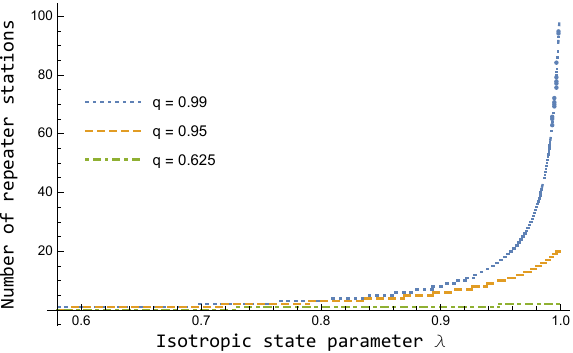}
    \caption{In this figure, we plot the maximum allowed number of relay stations between the end nodes as a function of $\lambda$, considering values of $q \in \{0.625, 0.95, 0.99\}$ such that the end nodes can implement a teleportation protocol. (Color online)}
    \label{fig:allowedNodesTeleportation}
\end{figure}

\begin{proposition}
The state $\rho_{AB}^I(p(q^n\lambda^{n+1}),2)$ can violate the Bell-CHSH inequality when
\begin{equation}
    n < \floor{\frac{\log (\sqrt{2} \lambda)}{\log (1/(q\lambda))}}.
\end{equation}
\end{proposition}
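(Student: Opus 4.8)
The plan is to invoke the Horodecki criterion for CHSH violation of two-qubit states together with a direct computation of the correlation matrix of an isotropic state. Recall that for a two-qubit state $\rho$ with correlation matrix $T$ defined by $[T]_{nm}=\Tr[\rho\,\sigma_n\otimes\sigma_m]$ (with $\sigma_1,\sigma_2,\sigma_3$ the Pauli operators), the maximal value of the CHSH expression over all choices of local measurement settings equals $2\sqrt{u_1+u_2}$, where $u_1\geq u_2\geq u_3\geq 0$ are the eigenvalues of $T^{\dagger}T$; hence $\rho$ violates the Bell--CHSH inequality if and only if $u_1+u_2>1$.

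First I would write the shared state in the visibility parametrization $\rho_{AB}^I(p(v),2)=v\,\Psi^+_{AB}+(1-v)\,\mathbbm{1}_{AB}/4$ with $v=q^n\lambda^{n+1}$, and compute its correlation matrix. Since the maximally mixed component contributes nothing to $T$ and $\Psi^+_{AB}$ has correlation matrix $\mathrm{diag}(1,-1,1)$ in the convention used here, one obtains $T=v\,\mathrm{diag}(1,-1,1)$, so that $T^{\dagger}T=v^2\,\mathbbm{1}_3$ and $u_1+u_2=2v^2$. Therefore the state shared by the end nodes violates the Bell--CHSH inequality precisely when $2v^2>1$, i.e. when $v=q^n\lambda^{n+1}>1/\sqrt{2}$.

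Next I would convert the threshold $q^n\lambda^{n+1}>1/\sqrt{2}$ into a bound on $n$. Rewriting it as $\sqrt{2}\,\lambda\,(q\lambda)^n>1$ and taking logarithms --- using that $q\lambda\in(0,1)$, so that $\log(q\lambda)<0$ and the inequality reverses upon division --- yields $n<\log(\sqrt{2}\,\lambda)/\log(1/(q\lambda))$. Since $n$ is a nonnegative integer, this is equivalent to $n<\floor{\log(\sqrt{2}\,\lambda)/\log(1/(q\lambda))}$, matching the claimed statement and the analogous bounds already obtained for teleportation and for entanglement.

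The only delicate points, rather than genuine obstacles, are: (i) making sure the correlation matrix $T$ is computed in the convention consistent with the definition of $\Psi^+$ used in the paper, so that $T^{\dagger}T=v^2\,\mathbbm{1}_3$ indeed holds and $u_1+u_2=2v^2$; (ii) tracking the sign reversal when dividing by the negative quantity $\log(q\lambda)$; and (iii) the passage from the strict inequality $n<x$ to the floor expression, which is immediate once one recalls $n\in\mathbb{N}$. No estimate beyond these elementary manipulations is required.
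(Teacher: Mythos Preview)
Your approach is exactly the paper's: invoke the Horodecki $M(\rho)=u_1+u_2>1$ criterion for the correlation matrix $T$, compute $T^{\dagger}T=v^2\mathbbm{1}_3$ for the isotropic state with visibility $v=q^n\lambda^{n+1}$, obtain $2v^2>1$, and rearrange. One caveat on point (iii): for integer $n$, the condition $n<x$ is equivalent to $n\leq\floor{x}$ (or $n<\ceil{x}$), not to $n<\floor{x}$ in general (e.g.\ $n=2$, $x=2.5$); the paper's stated bound carries the same imprecision, so this is not a discrepancy with the paper but a point where both could be tightened.
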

\begin{proof}
Let us have $u_i,u_j$ be the two largest eigenvalues of the matrix $T^{\dagger}T$ where the $T$ matrix is formed by the elements $t_{nm} = \Tr[\rho_{AB}^I(p(q^n\lambda^{n+1}),2) \hspace{3pt} \sigma_n \otimes \sigma_m]$ where $\sigma_j$ denotes the Pauli matrices. We then define the quantity $M(\rho_{AB}) = u_i + u_j$. The state $\rho_{AB}^I(p(q^n\lambda^{n+1}),2)$ is Bell-CHSH nonlocal for values of $M(\rho_{AB})$ greater than $1$ \cite{horodecki1995violating}.
This then implies $n < \floor{\frac{\log (\sqrt{2} \lambda)}{\log (1/(q\lambda))}}$.
\end{proof}
We plot in Fig.~\ref{fig:sampleGraph3} the maximum number of repeater stations that can be allowed for a given value of $\lambda$ and setting $q \in \{0.625~\text{\cite{bayerbach2022bell}}, 0.95, 0.99\}$ such that the state shared by the end nodes can violate the Bell-CHSH inequality. 
\begin{figure}
    \centering
    \includegraphics[scale=0.89]{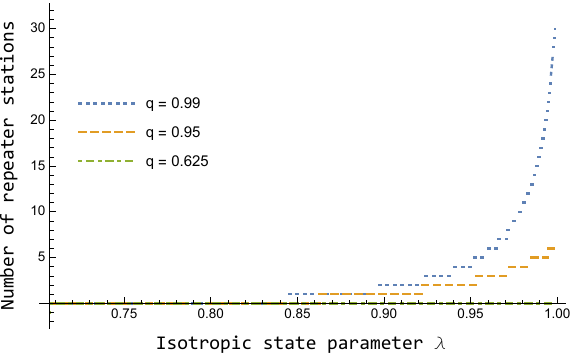}
    \caption{In this figure, we plot the maximum allowed number of relay stations between the end nodes as a function of $\lambda$, considering values of $q \in \{0.625, 0.95, 0.99\}$ such that the end nodes can perform Bell-CHSH violation experiment. (Color online)}
    \label{fig:sampleGraph3}
\end{figure}

\begin{proposition}
The state $\rho_{AB}^I(p(q^n\lambda^{n+1}),2)$ remains entangled when
\begin{equation}
    n < \floor[\Bigg]{\frac{\log (\lambda/(\frac{2}{\sqrt{3}}-1))}{\log(1/(q\lambda))}}. \label{eq:repeaterRelayEntanglement}
\end{equation} 
\end{proposition}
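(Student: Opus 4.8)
The plan is to reduce the statement to a membership condition for the visibility parameter and then invert a logarithm, exactly as in the two preceding propositions. By the construction recalled at the start of this appendix (and in Section~\ref{sec:limitationNetwork}), after $n$ relay stations the end nodes hold the two-qubit isotropic state $\rho_{AB}^I(p(q^n\lambda^{n+1}),2)$ of visibility $v:=q^n\lambda^{n+1}$; it therefore suffices to (i) decide for which $v$ this state is entangled, and (ii) translate the resulting condition on $v$ into a condition on $n$.

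For step (i) I would invoke the Peres--Horodecki (positive partial transpose) criterion, which is necessary and sufficient for two-qubit states: applied to the isotropic family it shows that $\rho_{AB}^I(p(v),2)$ is entangled precisely when the visibility $v$ exceeds the separability threshold $s$ of that family. (Equivalently one can read this off from the correlation matrix $t_{nm}=\Tr[\rho_{AB}^I(p(v),2)\,\sigma_n\otimes\sigma_m]$, for which $T=v\,\mathrm{diag}(1,-1,1)$, in the same spirit as the thresholds $N(\rho_{AB})>1$ and $M(\rho_{AB})>1$ used for teleportation-usefulness and Bell--CHSH violation in the two previous proofs.) With $s=\tfrac{2}{\sqrt{3}}-1$ as recorded in the statement, the shared state is entangled iff $q^{n}\lambda^{n+1}>s$.

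For step (ii) the argument is to take logarithms of $q^{n}\lambda^{n+1}>s$: this reads $n\log q+(n+1)\log\lambda>\log s$, hence $n\log(q\lambda)>\log s-\log\lambda=-\log(\lambda/s)$. Since $q,\lambda\in(0,1)$ we have $\log(q\lambda)<0$, so dividing reverses the inequality and gives $n<\log(\lambda/s)/\log(1/(q\lambda))$; as $n$ is an integer this is equivalent to $n<\floor{\log(\lambda/s)/\log(1/(q\lambda))}$, which with $s=\tfrac{2}{\sqrt{3}}-1$ is the claimed bound.

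The only substantive step is (i), i.e., correctly identifying the entanglement (separability) threshold $s$ of the two-qubit isotropic state; step (ii) is then the same elementary rearrangement that already appears verbatim in the two preceding propositions, the single point of care being that $\log(q\lambda)<0$ reverses the inequality. I would also note in passing that the bound is non-vacuous only for $\lambda>s$, and that $q\lambda<1$ is implicit so that $\log(1/(q\lambda))>0$.
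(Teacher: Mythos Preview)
Your proposal follows the same two-step template as the paper: (i) pin down the visibility threshold $s$ above which the two-qubit isotropic state is entangled, then (ii) rewrite $q^n\lambda^{n+1}>s$ by taking logarithms and flipping the inequality (using $\log(q\lambda)<0$), exactly as in the two preceding propositions. The only substantive difference is the tool used in step~(i): you invoke the Peres--Horodecki (PPT) criterion, whereas the paper instead computes Wootters' concurrence $\max\{0,\lambda^e_1-\lambda^e_2-\lambda^e_3-\lambda^e_4\}$ of $\rho_{AB}^I(p(q^n\lambda^{n+1}),2)$, with the $\lambda^e_k$ the ordered square roots of the eigenvalues of $\rho\,(\sigma_y\otimes\sigma_y)\rho^{*}(\sigma_y\otimes\sigma_y)$, and requires it to be strictly positive. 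Since for two-qubit states PPT negativity and nonzero concurrence are each necessary and sufficient for entanglement, the two routes are equivalent; your PPT argument is marginally more direct here, while the concurrence route has the side benefit of yielding an explicit entanglement monotone along the way.
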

\begin{proof}
The concurrence of the state $\rho_{AB}^I(p(q^n\lambda^{n+1}),2)$ is given by $\max\{0,\lambda^e_1 - \lambda^e_2 - \lambda^e_3 - \lambda^e_3\}$ \cite{wootters1998entanglement} where the $\lambda^e_s$ are the square root of the eigenvalues of  $\rho_{AB}^I(p(q^n\lambda^{n+1}),2) (\rho_{AB}^I(p(q^n\lambda^{n+1}),2))_f$ in descending order. The spin flipped density matrix is given by $(\rho_{AB}^I(p(q^n\lambda^{n+1}),2))_f = (\sigma_y \otimes \sigma_y) \rho_{AB}^{I \ast}(p(q^n\lambda^{n+1}),2) (\sigma_y \otimes \sigma_y).$ The state is entangled when the concurrence is greater than zero. This requires $n < \floor[\Big]{\frac{\log (\lambda/(\frac{2}{\sqrt{3}}-1))}{\log(1/(q\lambda))}}.$ 
\end{proof}
We plot in Fig. \ref{fig:sampleGraph4} We plot in Fig.~\ref{fig:sampleGraph3} the maximum number of repeater stations that can be allowed for a given value of $\lambda$ and setting $q \in \{0.625, 0.95, 0.99\}$ such that the end nodes can share an entangled state.
\begin{figure}
    \centering
    \includegraphics[scale=0.89]{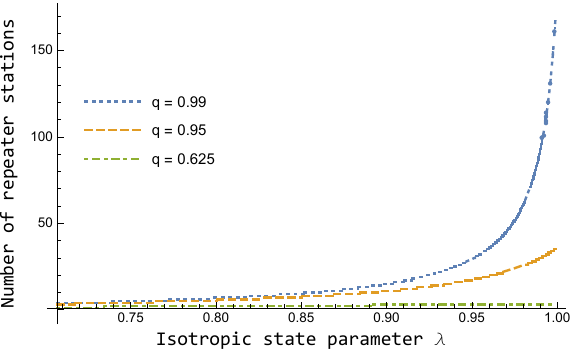}
    \caption{In this figure, we plot the maximum allowed number of relay stations between the end nodes as a function of $\lambda$, considering values of $q \in \{0.625, 0.95, 0.99\}$ such that the end nodes share an entangled state. (Color online)}
    \label{fig:sampleGraph4}
\end{figure}
It is observed that the number of allowed repeater stations in the network depends on the information processing task that the network is executing. The number of allowed repeater stations increases with the increase in $\lambda$.

\section{Algorithms} \label{sec:algorithms}
In Appendix~\ref{sec:shortestPath}, we provide an algorithm to find the shortest path between a pair of end nodes. We then provide an algorithm in Appendix~\ref{sec:networkCosntruction} to constrict a network architecture for sharing resources between two parties, each having multiple nodes. Then in Appendix~\ref{app:critNetworkNodes}, we provide an algorithm to obtain the critical parameter for the nodes of a given network. In Appendix~\ref{sec:resourceAllocationGroundStation}, we provide an algorithm to optimize the flow of resources at a node having multiple input and output channels. 

\subsection{Shortest path between a pair of nodes} \label{sec:shortestPath}
For performing $\mathrm{Task}_\ast$ with maximum success probability, it is desirable to transmit resource $\chi$ between any two nodes via the shortest network path connecting them. Recent works have considered different network topologies~\cite{SMI+16,CRD+19,LLL+20} and limitations on current and near-term hardware~\cite{khatri2021policies,skhatri22,inesta2023optimal} for routing resources over quantum networks. For a given network $\mathscr{N}$ represented as a graph $G$, we consider here the task of finding the path between two given nodes in $G$ that have the lowest effective weight for routing resources between them~\cite{SMI+16}. We call a path connecting two nodes in the network and having the lowest effective weight as the shortest path between them. Finding the shortest path between two nodes of a network is important as longer paths are more vulnerable to node and edge failures. To find the shortest path between two nodes, we use Dijkstra's algorithm \cite{dijkstra1959note,vMSL+13} suited to our network framework. In Algorithm~\ref{algo:shortestPath}, the shortest spanning tree is generated with the source node as the root node. Then the nodes in the tree are stored in one set and the other set stores the nodes that are not yet included in the tree. In every step of the algorithm, a node is obtained that is not included in the second set defined above and has a minimum distance from the source.
\begin{algorithm}[H]
\caption{Obtaining the shortest spanning tree with source node as root} \label{algo:shortestPath}
\begin{algorithmic}[1]

\Function{SpanTree}{$G,S,\text{target}$}
    \Initialize{\strut$\text{pq} \gets \text{empty min priority queue}$\\
    $\text{dist} \gets \emptyset$ \\
    $\text{pred} \gets \emptyset$}
    \For{\text{every node in G}}
        \If{\text{node }$= S$}
            \State pq[node] $\leftarrow$ 0
        \Else
            \State pq[node] $\leftarrow$ infinite
        \EndIf
    \EndFor
    
    \For{\text{every node and minDist in pq}}
        \State dist[node] $\gets$ minDist
        \If{node = target}
            \State break
        \EndIf
        \For{every neigh of node}
            \If{neigh $\in$ pq}
                \State score $\gets$ dist[node] + $G[\text{node, neigh}][\text{weight}]$
                \If{score $<$ pq[neigh]}
                    \State pq[neigh] $\gets$ score
                    \State pred[neigh] $\gets$ node
                \EndIf
            \EndIf
        \EndFor
    \EndFor
    
    \Return dist, pred
\EndFunction  \label{dijkstra}
\end{algorithmic}
\end{algorithm}
To obtain the shortest path between any two nodes of a given graph, we apply Algorithm~\ref{algo:shortestPathExample}. Algorithm~\ref{algo:shortestPathExample} returns a path only \textit{iff} the weight associated with the network path between the source and target nodes is at most equal to the critical weight $\mathrm{w}_{\text{crit}} (= -\log p_\ast)$, $p_\ast$ being the critical success probability for $Task_\ast$.
\begin{algorithm}[H]
\caption{Obtaining the shortest network path between the source and target nodes in a graph for end nodes to perform $\mathrm{Task}_\ast$ by sharing $\chi$.} \label{algo:shortestPathExample}
\begin{algorithmic}[1]
\Initialize{\strut$source \gets \text{starting node}$\\
    $\text{target} \gets \text{target node}$ \\
    $\text{flag} \gets 0$ \\
    $p_\ast \gets \text{success probability for } Task_\ast$}

\State G $\gets$ the given graph 

\State [dist, pred] $\gets$ DIJKSTRA(G, source, target)
\State end $\gets$ target
\State path $\gets$ [end]
\While{(end $\neq$ source) AND (flag $\leq -\log p_\ast$)}
    \State end = pred[end]
    \State now = path[end]
    \State path.append(end)
    \State next = path[end]
    \State flag $\gets$ flag + weight(G.edge(next,now))
\EndWhile
\If{path[end] = source}
    \State disp(path)
\Else
    \State disp(disconnected nodes)
\EndIf \label{algo:shortestPath1}
\end{algorithmic}
\end{algorithm}
\begin{example}
Let us consider a weighted graph with $8$ nodes as shown in Fig. \ref{fig:multiplePaths}. 
\begin{figure}
    \includegraphics[scale=0.6]{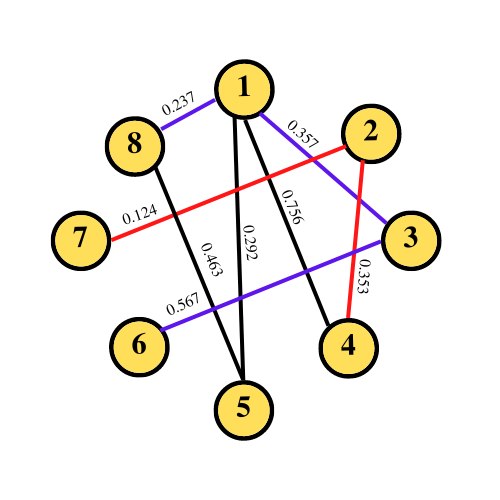}
    \caption{A network represented by a weighted graph with $8$ nodes. Multiple pairs of nodes can share resources using this network. As an example, nodes $(8,6)$ can share resource via the path $8 \leftrightarrow 1 \leftrightarrow  3 \leftrightarrow 6$ (shown in blue), then nodes $(7,4)$ can share resource via the path $7 \leftrightarrow 2 \leftrightarrow 4$ (shown in red). (Color online)}
    \label{fig:multiplePaths}
\end{figure}
A physical interpretation can be to consider the transfer of quantum states from node $v_i$ to node $v_j$ via quantum channels denoted by the edges. The edge weight between the nodes $v_i$ and $v_j$ is given by $-\log p_{ij}$ where $p_{ij}$ is the success probability of sharing the resource between these two nodes. The shortest path connecting the nodes would then provide the highest success probability for the task. If we consider the source as node $v_8$ and the target as node $v_6$, then the algorithm returns the shortest path as $v_8 \leftrightarrow v_1 \leftrightarrow  v_3 \leftrightarrow v_6.$ It may be desirable for another node say $v_7$ to share a resource with node $v_4$ using the same network. The node $v_7$ and $v_4$ can share resources via the path $v_7 \leftrightarrow v_2 \leftrightarrow v_4$ without involving the virtual nodes in the shortest path between $(v_8,v_6)$. We observe that multiple pairs of nodes can share resources using this network.
\end{example}
In the following subsection, we present an algorithm to construct a network for sharing resources between two parties each having multiple nodes. 

\subsection{Network construction} \label{sec:networkCosntruction}
Let two parties Alice (denoted by $A$) and Bob (denoted by $B$) require to share a resource using a mesh network. We assume that $A$ and $B$ have $n_A$ and $n_B$ number of nodes respectively. We introduce Algorithm~\ref{network} to obtain the structure of the mesh that ensures there exist distinct paths between nodes of $A$ and $B$. In Algorithm~\ref{network}, we impose the constraints that (a) at a time all nodes of $A$ shall be connected to distinct nodes of $B$ via the shortest available path with unique virtual nodes and (b) there exists a path between every nodes of $A$ and $B$.
\begin{algorithm}[H]
\caption{Network construction algorithm} \label{network}
\begin{algorithmic}[1]
\Function{network}{$n_A,n_B$}
    \Initialize{count $\gets n_A + n_B$ \\
    g $\gets$ complete graph (no. of nodes: count) \\
    g[weight] $\gets$ mesh edge weights \\
    wt $\gets$ local edge weights
    }
    \For{every node $v_i$ of $A$}
        \State add g.node($v_i$)
        \State add g.edge($v_i$, g[node=count], weight = wt[count])
        \State count $\gets$ count - 1
    \EndFor
    \For{every node $v_j$ of $B$}
        \State add g.node($v_j$)
        \State add g.edge($v_i$, g[node=count], weight = wt[count]) 
        \State count $\gets$ count - 1
    \EndFor
    \Return g
\EndFunction 
\end{algorithmic}
\end{algorithm}
\begin{example}
Consider two geographically separated companies $A$ and $B$ requiring to connect to each other via a mesh network. We call the headquarters of the companies as hubs. In Fig.~\ref{fig:networkDesign} we show the hubs of $A$ and $B$ in blue and orange respectively. 
\begin{figure}
    \centering
    \includegraphics[scale=0.5]{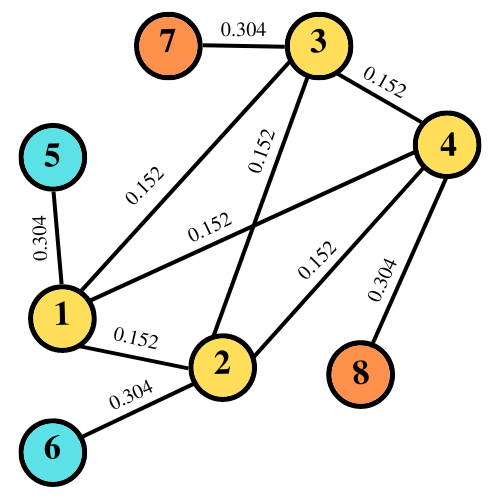}
    \caption{A $4$ node network (shown in yellow) constructed using Algorithm~\ref{network} for connecting the hubs $A$ and $B$. The hubs $A$ and $B$ each have two nodes and are shown in orange and blue respectively. (Color online)}
    \label{fig:networkDesign}
\end{figure}
Using Algorithm~\ref{network} we obtain the network topology for which there exists distinct paths for possible pairs of $(a_i,b_j)$ where $a_i \in A$ and $b_j \in B$.  
\end{example}

\subsection{Critical nodes in a network} \label{app:critNetworkNodes}
The critical nodes of the network are essential for the proper functioning of the network. If one of these nodes is removed, it will lead to a decrease in the overall performance efficiency of the network. We present heuristic Algorithm~\ref{algo:criticalNodes} to obtain the critical parameter $\nu_i$ for the network node $v_i \in \mathbb{V}$ of the network $\mathscr{N}(G(\mathbb{V},\mathbb{E}))$ using Eq. (\ref{eq:Pcrit}). 
\begin{algorithm}[H]
\caption{Finding the critical parameter $\nu_i$ for node $v_i \in \mathbb{V}$ of the network $\mathscr{N}(G(\mathbb{V},\mathbb{E}))$} \label{algo:criticalNodes}
\begin{algorithmic}[1]

\State G $\gets$ the given graph  

\For{every node in G}
    \State $C_i \gets $ clustering coeff using Eq. (\ref{eq:clusterCoeff})
    \State $\Bar{\mathrm{w}}_{\ast}(G) \gets $ avg cost using Eq. (\ref{eq:globalEff})
    \State $\tau_i \gets $ centrality of the node
    \State $\nu_i \gets \tau_i/C_i~\Bar{\mathrm{w}}_{\ast}(G)$
    \State critPar[node] $\gets \nu_i$
\EndFor
\end{algorithmic}
\end{algorithm}

In the following subsection, we present an algorithm for optimizing resource flow at a node with multiple input and output channels.

\subsection{Resource allocation at a node} \label{sec:resourceAllocationGroundStation}

The ground stations in the satellite-based network presented in Sec.~\ref{sec:entanglementDistributionCities} share an entanglement buffer~\cite{LGN23} to store the incoming quantum states from the satellite network. The stored states are later distributed via different output channels to neighbouring nodes based on traffic requests. We present Algorithm~\ref{algo:resourceAllocation} for the optimal flow of states at a node with multiple input (producer(thread)) and output channels (consumer(thread)).
\begin{algorithm}[H] 
\caption{Resource allocation at a node} \label{algo:resourceAllocation}
\begin{algorithmic}
\Initialize{\strut$\text{buffSize} \gets \text{size of quantum memory}$\\
    $\text{buffer} \gets \emptyset$}
\Procedure{producer}{thread}
    \While{state incoming AND empty memory slot}
        \State gain access to memory
        \State insert state into empty memory slot
        \State update other memory slots as per task
        \State release memory access
    \EndWhile
\EndProcedure
\Procedure{consumer}{thread}
    \While{memory is not empty}
        \State gain access to memory
        \State acquire state from the memory
        \State update other memory slots as per task
        \State release memory access
    \EndWhile
\EndProcedure
\State create and start all producer threads
\State create and start all consumer threads
\end{algorithmic}
\end{algorithm}
Algorithm~\ref{algo:resourceAllocation} is the standard producer-consumer model in networking where the procedures CONSUMER threads\footnote{Thread is a sequential execution of tasks in a process.} are the instances of the output channels that extract quantum states from the buffer, while the PRODUCER are the instances for the input channels that inputs quantum states to the buffer.

\section{Analysis of time-varying quantum networks} \label{app:timeEvolving}
In general, the network parameters i.e., the number of nodes, edges, and edge weights may change with time. Let us denote such a time-varying network as $\mathscr{N}(G(\mathbb{V}(t),\mathbb{E}(t)))$. We say that two nodes $v_i$ and $v_j$ of a time-varying network are connected in the time interval $[t_1,t_2]$ if $\exists~e_{ij} \in \mathbb{E}(t)$ with $p_{ij} \geq p_\ast~\forall t\in [t_1,t_2]$. In the following example, we present the variation of link sparsity with time for the time-varying network shown in Fig.~\ref{fig:temporalnetwork}. 
\begin{example}
Consider a network $\mathscr{N}(G(\mathbb{V}(t),\mathbb{E}(t)))$ whose vertices and edges are evolving in time as shown in Fig.~\ref{fig:temporalnetwork}. We assume that for $e_{ij} \in \mathbb{E}(t)$,
\begin{align} \label{eq:timeVaryingNetwork}
    p_{ij}(t+1)\coloneqq \begin{cases}
      w\mathrm{e}^{-kt}~p_{ij}(t)  & \text{for $w\mathrm{e}^{-kt}~p_{ij}(t) > p_\ast$},\\
      0 & \text{otherwise},
    \end{cases}       
\end{align}
where $w = 0.9, p_\ast = 0.05, k = 0.3$ sec$^{-1}$ and $t \in [1,4]$. 
\begin{figure}[H]
    \includegraphics[scale=0.35]{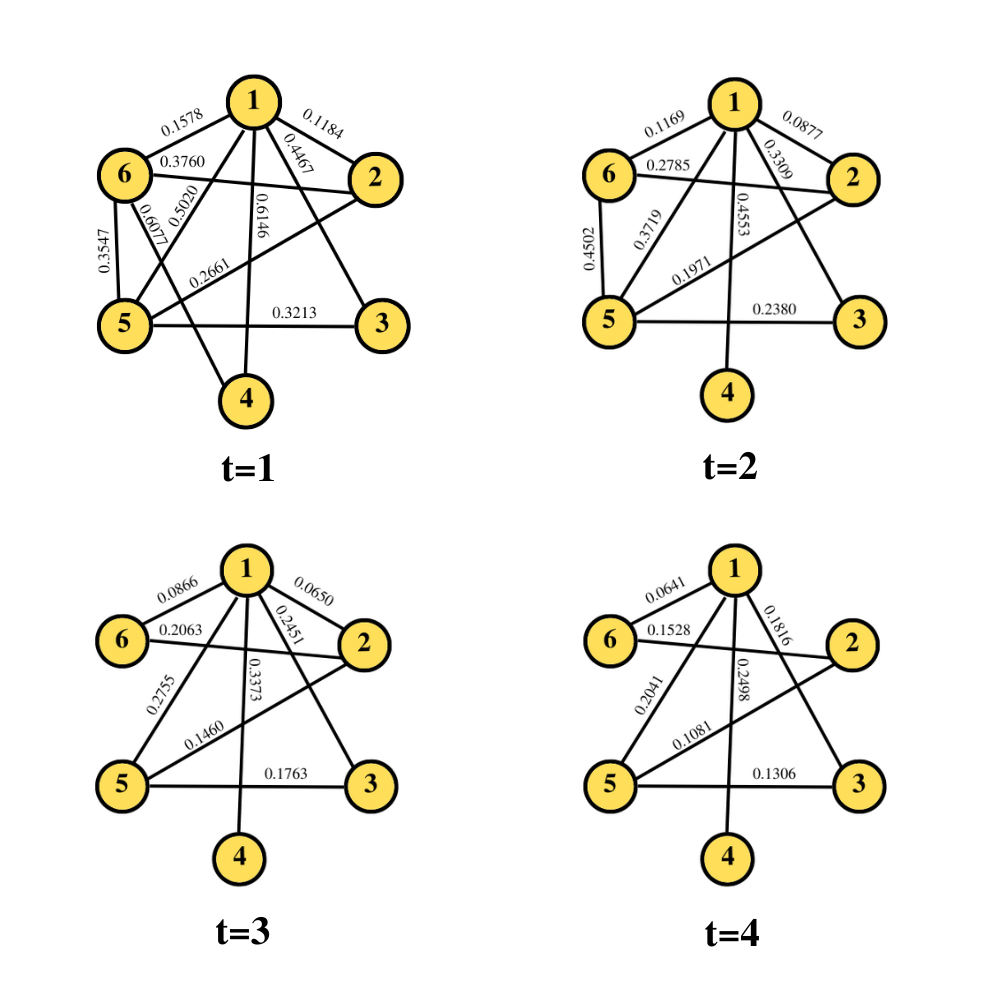}
    \caption{We present a time varying mesh network with $6$ nodes as a series of $4$ static graphs. We consider the network topology at times $t \in \{1,2,3,4\}.$ In this network, the edges $e_{ij}$ connecting nodes $v_i$ and $v_j$ denote the success probabilities $p_{ij}$ of transferring some resource between the $v_i$ and $v_j$. The success probability $p_{ij}$ evolves in time following Eq.~\eqref{eq:timeVaryingNetwork}. (Color online)}
    \label{fig:temporalnetwork}
\end{figure}
For such a network, we have the variation of the link sparsity with time as 
\begin{table}[H]
\centering
 \begin{tabular}{ p{1.5 cm} || p{1.5 cm} p{1.5 cm} p{1.5 cm} p{1.5 cm}} 
 \hline
 $t$ (sec) & 1 & 2 & 3 & 4 \\ [1 ex]
 \hline \hline
 $\Upsilon(\mathscr{N})$ & 0.4445 & 0.5 & 0.5556 & 0.6112 \\ [1 ex]
 \hline
 \end{tabular}
 \caption{Link sparsity for time-evolving graph} \label{table:timeVaryingNetwork}
\end{table}
We observe from Table~\ref{table:timeVaryingNetwork} that for the network shown in Fig~\ref{fig:temporalnetwork}, the link sparsity increases with time and the network becomes less robust.  
\end{example}

\bibliography{paper}{}

\end{document}